\def\d{{\rm d}}
\def\R {\mathbb{R}}
\def\e {\varepsilon}
\def\la {\lambda}
\def\g{\gamma}
\def\G{\Gamma}
\def\s {\sigma}
\def\O{\Omega}
\def\We{W_{\rm e}}
\def\Wh{W_{\rm  p}}
\def\be{\boldsymbol e}
\def\bz{\boldsymbol z}
\def\bze{\boldsymbol z_\varepsilon}
\def\bzehat{\widehat{\boldsymbol z}_\varepsilon}
\def\bA{\boldsymbol A}
\def\bB{\boldsymbol B}
\def\bC{\boldsymbol C}
\def\bD{\boldsymbol D}
\def\bG{\boldsymbol G}
\def\bL{\boldsymbol L}
\def\bB{\boldsymbol B}
\def\bP{\boldsymbol P}
\def\bPi{\boldsymbol \Pi}
\def\bQ{\boldsymbol Q}
\def\bP{\boldsymbol P}
\def\bR{\boldsymbol R}
\def\bN{\boldsymbol N}
\def\bE{\boldsymbol e}
\def\bF{\boldsymbol F}
\def\bS{\boldsymbol S}
\def\bT{\boldsymbol T}
\def\bFe{\boldsymbol F_{\rm e}}
\def\bFp{\boldsymbol P}
\def\bFpk{\boldsymbol P_k}
\def\bCe{\boldsymbol C_{\rm e}}
\def\bCp{\boldsymbol C_{\rm p}}
\def\bCpe{\boldsymbol C_{\rm p \varepsilon}}
\def\bCpehat{\widehat{\boldsymbol C}_{\rm p \varepsilon}}
\def\bCpetild{\widetilde{\boldsymbol C}_{\rm p \varepsilon}}
\def\dotCp{\dot{\boldsymbol C}_{\rm p}}
\def\one{\boldsymbol{I}}
\def\bbC{\mathbb{C}}
\def\bbH{\mathbb{H}}
\def\bbW{\mathbb{W}}
\def\bbM{\mathbb{M}}
\def\calD{\mathcal D}
\def\calE{\mathcal E}
\def\calS{\mathcal S}
\def\calQ{\mathcal Q}
\def\calY{\mathcal Y}
\def\calW{\mathcal W}
\def\calZ{\mathcal Z}
\def\TT{\top}
\def \l {\langle}
\def \r {\rangle}
\newtheorem{proposition}{Proposition}[section]
\newtheorem{theorem}[proposition]{Theorem}
\newtheorem{lemma}[proposition]{Lemma}
\theoremstyle{definition}
\numberwithin{equation}{section}
\newcommand{\epsi}{\varepsilon}
\newcommand{\disp}{\displaystyle}
\newcommand{\Nz}{{\mathbb N}}
\newcommand{\Rz}{{\mathbb R}}
\newcommand{\Rd}{{\mathbb R}^3}
\newcommand{\Rzn}{{\mathbb R}^{3 \times 3}}
\newcommand{\Rzd}{\Rz^{3\times 3}_{\rm dev}}
\newcommand{\Rzs}{\Rz^{3\times 3}_{\rm sym}}
\newcommand{\Rzsp}{\Rz^{3\times 3}_{\rm sym+}}
\newcommand{\haz}{\widehat}
\newcommand{\dx}{\,\text{\rm d}x}
\newcommand{\dt}{\,\text{\rm  d}t}
\newcommand{\SL}{\text{\rm SL}}
\newcommand{\SO}{\text{\rm SO}}
\newcommand{\GLp}{\text{\rm GL}^+}
\newcommand{\GLps}{\text{\rm GL}^+_{\rm sym}}
\newcommand{\MM}{\SL_{\rm sym}^+}
\newcommand{\obCp}{\overline \bC_{\rm p}}
\newcommand{\bCpk}{\bC_{{\rm p} k}}
\newcommand{\bCpks}{\bC_{{\rm p} k}^{1/2}}
\newcommand{\hbCp}{\haz \bC_{\rm p}}
\newcommand{\weak}{\rightharpoonup}
\newcommand{\weakto}{\rightharpoonup}
\newcommand{\tr}{\textrm{tr}\,}
\newcommand{\cof}{\textrm{cof\,}}
\newcommand{\dev}{{\rm dev}\,}
\def \no#1#2#3 {{\bf #1} (#3), #2.}
\def \eds#1#2#3 {#1, #2, #3.}
\title[]
{ Finite Plasticity in $\bFp^\TT\! \bFp$}
\author[]
{D. Grandi, U. Stefanelli}
\address{Faculty of Mathematics, University of Vienna
\newline\indent
Oskar-Morgenstern-Platz 1, A-1090 Vienna, Austria}
\subjclass[2010]{49J45, 49S05, 74C15}
\keywords{Finite-strain plasticity, Cauchy-Green strains, constitutive
  model, quasistatic evolution, energetic solutions, small-deformation limit}
\begin{document}
\begin{abstract}
 We discuss a finite-plasticity model based on the symmetric tensor
  $\bFp^\TT\! \bFp$ instead of the classical plastic strain $\bFp$. Such a model structure arises from assuming that the
  material behavior is  invariant with respect to frame
  transformations of the 
  intermediate configuration. The resulting variational model is
  lower-dimensional, symmetric, and based solely on the
  reference configuration. We discuss the existence of energetic
  solutions both at the material-point level and for the quasistatic
  boundary-value problem.  These solutions are constructed as
  limits of time discretizations. Eventually, the linearization of the model for
  small deformations is ascertained via a
  rigorous evolutive-$\Gamma$-convergence argument. 
\end{abstract}
\maketitle
%
%

\section{Introduction}

The inelastic behavior of polycrystalline solids is classically
described in terms of the deformation gradient $\bF$ with
respect to the reference configuration \cite{Gurtin81}. As the elastic response is
observed to be largely independent from prior plastic distortion of
the crystalline structure, the deformation gradient is usually
decomposed into an elastic and a plastic part.  While this decomposition is
additive in the small-deformation regime, at finite strains, a
multiplicative decomposition  $\bF =
\bFe\bFp$ is used instead \cite{Kroener,Lee69}. Here $\bFe$ is the elastic deformation tensor, describing
indeed the elastic response of the medium, and $\bFp$ is the plastic
deformation tensor, encoding the information on the plastic state
. Although other options have been
advanced, see for instance
\cite{Clifton72,Davoli-Francfort15,Lubarda99,Nemat-Nasser79}, the multiplicative
decomposition has now turned to be the reference in finite
plasticity.
A justification for this
decomposition has been recently provided in
\cite{Conti-Reina,Conti-Reina2} on the basis
micromechanical considerations. 

Based on the multiplicative decomposition, the
elastoplastic evolution of the medium is described by the time
evolution of $\bFe$ and $\bFp$.  This results from the competition between
energy-storage   and plastic-dissipation mechanisms
\cite{Nemat-Nasser04,Simo-Hughes98}.
A first  structural restriction to the modeling choice   is
that of {\it frame indifference}
\cite{Gurtin81}, imposing indeed the elastic state of the
material to be completely represented in terms of the so-called {\it (right)
Cauchy-Green tensor} $ \bFe^{\TT}\!\bFe$. 
 Moving from   by this
observation, the focus of this paper is to address the possibility
of formulating and analyzing
finite plasticity in terms of the corresponding {\it plastic} Cauchy-Green
tensor $\bFp^\TT \!\bFp$ instead of $\bFp$.

Finite-plasticity models in terms of $ \bFp^\TT
\!\bFp$ bear significant advantages with respect to formulations in $\bFp$. At first,
variables are symmetric and positive definite, reducing indeed the degrees of freedom of
the problem. Furthermore, $\bCp$ being symmetric plays a significant
computational role allowing the use of efficient algorithms, especially
in connection with power- and exponential-matrix
evaluations \cite{Dettmer-Reese}. 
Secondly, a model in $ \bFp^\TT
\!\bFp$ is fully defined on the reference configuration of the
medium, avoiding the necessity of any
intermediate configuration, a commonly controversial  issue 
\cite{Naghdi90}. 
Moreover, such a fully Lagrangian formulation is better adapted to finite-element
approximations, which are then to be defined for all variables on the fixed reference
configuration. Triggered by these appealing features,
finite-plasticity models based on $ \bFp^\TT
\!\bFp$  have already attracted attention
\cite{Simo93,Lion97,Miehe95,Reese&al08}. The reader is referred to the recent 
 \cite{Neff15} where a comparative study of these as well as the current model is
 provided. In the context of shape memory
 materials, a model based on $ \bFp^\TT
\!\bFp$ is advanced in \cite{Eva09,Eva10} and
 variationally reformulated and analyzed in \cite{Fri-Ste09}. 

The aim of this paper is to provide a comprehensive discussion of
finite plasticity expressed in terms of $ \bFp^\TT
\!\bFp$, both at the modeling and at the analytical level.  
Starting from classical associative
finite plasticity in terms of $\bFp$, we discuss a general frame allowing
an equivalent reformulation in $ \bFp^\TT
\!\bFp$. This relies on a quite natural {\it plastic-invariance
assumption}, translating indeed the indifference of the model with
respect to rotations of the intermediate configuration. Quite
remarkably, the model in $ \bFp^\TT
\!\bFp$ turns out to be associative with respect to the new variables as well.

The variational structure of the model  allows  us to prove the existence of
variational solutions of {\it energetic type}
\cite{Fra-Mie06,Mie05,Mie-The04}. At the material-point level we
obtain such an existence under very general assumptions on the model
ingredients, in particular on the coercivity of the energy. We then
turn to the analysis of the  quasistatic-evolution setting  resulting from the
combination of the constitutive material relation with the equilibrium
system. Also at the
quasistatic evolution level energetic solutions are proved to exists,
provided the energy is polyconvex \cite{Ball77} and augmented by a gradient term of the form
$|\nabla(\bFp^\TT\!\bFp)|$. Such a term describes nonlocal plastic
effects and is inspired to the by-now classical {\it gradient
  plasticity} theory \cite{Fleck1,Fleck2,Aifantis91}. In particular,
its occurrence turns out to be crucial in order to prevent the formation
of plastic microstructures and ultimately ensures the necessary
compactness for the analysis. In the context of the mathematical
analysis of finite plasticity, the introduction of suitable regularizing
terms on the plastic variables
seems at the moment unavoidable. The {
\it incremental} problem has been tackled under the regularizing effect of
a term in  
${\rm curl}\, \bFp$   in \cite{Mie-Mull06}
An existence result in finite (incremental)
plasticity without gradient terms is in  \cite{Mie04} where however
 substantial  restrictions on modeling choices are 
imposed.  The only other
available  quasistatic-evolution existence result is for the
formulation in $\bFp$
\cite{Mai-Mie09} and features 
a regularizing term in $\nabla \bFp$ as well. 

The existence results, both at the material-point and the
quasistatic-evolution level, results from passing to the limit in
implicit time-discretization schemes. As a by-product we hence devise the
convergence of such schemes both in terms of solution trajectories and
of energy and dissipation.

A second important focus of our analysis is  the rigorous
justification of the classical linearization approach for small
deformations. Within the small-deformation regime it is indeed
customary to leave the nonlinear finite-strain frame and resort to
linearized theories. This model reduction is classically
justified by heuristic Taylor-expansion arguments. Here, we aim
instead at providing a rigorous linearization proof by means of an
{\it evolutionary $\Gamma$-convergence} analysis in the spirit of the
general abstract theory of \cite{MieRouSte08}. This rigorous limiting procedure is devised both at
the material-point and at the quasistatic-evolution level.  Note that a rigorous convergence
result in case of  the  $\bFp$-based formulation was provided in
\cite{MieSte13}. Our results  extends  that of \cite{MieSte13}
 to 
the case of $\bFp^\TT\!\bFp$-plasticity. In contrast  with  \cite{MieSte13} we discuss here the convergence of solutions for
which existence is known. This involves the additional difficulty of
discussing the convergence of the gradient terms.

We  describe  the constitutive model and the role of
plastic-rotation indifference on Section \ref{sec:const}. Section
\ref{sub:es} provides a minimal toolbox on energetic
formulations of rate-independent systems and the corresponding
approximation via evolutionary $\Gamma$-convergence. The existence of
energetic solutions of the constitutive model at the material-point
level is discussed in Section \ref{sec:solve_const} and the
corresponding small-deformation limit is presented in Section \ref{sec:lin_const}. The
quasistatic-evolution problem is introduced in Section
\ref{sec:quasi}. The corresponding existence result is then presented in Section \ref{sec:energetic} whereas
Section \ref{sec:linear} contains the detail of the
small-deformation limit.

\section{Constitutive model}\label{sec:const}
%
We devote this section to the specification of the finite-plasticity 
model in study.  As already commented in the Introduction, this corresponds to classical
 associative finite plasticity under an 
 invariance assumption with respect to plastic rotations. We limit
 ourselves at introducing the constitutive relation, referring indeed the reader
to the monographs \cite{Gurtin10,Nemat-Nasser04,Simo-Hughes98} for
additional material and detail on 
finite-plasticity formulations.  

 Before going on let  us record here that finite plasticity is to-date a still controversial
 subject \cite{Naghdi90}. It is not our intention to contribute new
 mechanical arguments to the ongoing
 discussion. On the contrary  our aim is to present the possibly simplest model
 in $\bCp$ showing a sound variational structure. The specific form of
 our constitutive model seems to be new \cite{Neff15}. Still, we
 believe that the main interest in
 this rather simplified case relies the on the quite detailed mathematical
 analysis that such a variational structure allows.

\subsection{Tensors} We focus  on the three-dimensional setting and systematically use
boldface symbols in order to indicate 
$2$-tensors in $\Rz^3$. The corresponding space is denoted by
$\Rz^{3\times 3}$. Given $\boldsymbol A\in \Rz^{3\times 3}$ we
classically define its trace as $\tr \boldsymbol A : = A_{ii}$
(summation convention), its deviatoric part as
${\rm dev}\boldsymbol A  = \boldsymbol A - (\tr \boldsymbol A) \one
/3$ where $\one$ is the identity $2$-tensor, and its (Frobenius)
 norm as $|\boldsymbol A|^2 := \tr (\boldsymbol A ^\TT\! \boldsymbol A)
 $ where the symbol $\TT$ denotes transposition. The contraction product between $2$-tensors is $\boldsymbol A {:}
 \boldsymbol B  := A_{ij}B_{ij}$ and we classically denote the
 scalar product of vectors in $\Rd$ by $a{\cdot} b:=a_ib_i$. The
 symbols $\Rzs$ and $\Rzsp$ stand for the  subsets  of $\Rz^{3 \times
   3}$ of symmetric tensors and   of  symmetric positive-definite tensors,
 respectively. Moreover, $\Rzd$ indicates the space of symmetric
 deviatoric tensors, namely $\Rzd:=\{\boldsymbol A \in  \Rzs  \; |\; \tr
 \boldsymbol A =0\}$.   We shall use also the following tensor sets
\begin{align*}
  &\SL:=\{\boldsymbol A \in \Rzn \;|\; \det \boldsymbol A
  =1\},\\
&\SO:=\{\boldsymbol A  \in  \SL  \;|\; \boldsymbol A^{-1} =
  \boldsymbol A ^\TT  \},\\
&\GLp:=\{\boldsymbol A  \in \Rzn \;|\; \det \boldsymbol A >0 \},\\
&\GLps:= \GLp\cap \Rzs, \\
 &\MM:= \SL \cap \Rzsp.
\nonumber
\end{align*}
The tensor $\cof \bA$ is the {\it cofactor matrix} of $\bA $. For $\bA$ 
invertible we have that $\cof \bA=(\det\bA)\,\bA^{-\TT}$. For any symmetric positive-definite matrix $\bA\in \R^{3\times 3}_{\rm{sym}+}$,
the real power $\bA^s$ is classically defined,  for any $s\in\R$, in terms of its
eigenvalues $(\lambda_1,\lambda_2,\lambda_3)$, $\lambda_i>0$ and 
$$
\tr \bA^s= \lambda_1^s+\lambda_2^s+\lambda_3^s, \quad\det\bA^s= (\lambda_1\lambda_2\lambda_3)^s.
$$
In particular, 
the square root $\boldsymbol A^{1/2}$ is uniquely defined in
$\GLps$. The matrix logarithm $\log \bCp$ is globally uniquely defined
 in  $ \MM$. In particular, one has that $\tr (\log
\bCp) = \log (\det \bCp) = 0$ for all $\bCp \in \MM$.
Given any symmetric, positive-definite 4-tensor $\bbC$ we denote by
$|\bA|^2_\bbC:= \bA{:} \bbC \bA$ the corresponding induced (squared) norm on
$\Rzs$.  The product $\bbC \bA$ is here classically defined
as $(\bbC \bA)_{ij} := \bbC_{ij\ell k}A_{\ell k}$. For
all $3$-tensors $\bD$ (arising in this context as gradients of $2$-tensor fields) we define $|\bD|^2 := D_{ijk}^2$, the partial
trasposition $(\bD^\TT)_{ijk} = D_{jik}$, and the product with the
2-tensor $\bA$ as $(\bD \bA)_{ijk} =
D_{ij\ell}A_{\ell k}$ and $(\bA \bD)_{ijk} = A_{i\ell}D_{\ell
  j k}$. Note that, along with these definitions, $|\bD \bA|, |\bA \bD|
\leq |\bA|\, |\bD|$.

In the following we denote by $\partial \varphi$ the subdifferential
of the smooth or  of the  convex, proper, and lower semicontinuous function
$\varphi: E \to (-\infty,\infty]$ where $E$ is a normed space with dual
$E^*$ and duality pairing $\langle \cdot, \cdot \rangle $
\cite{Brezis73}. In particular, $ y^* \in \partial \varphi (x)$ iff
$\varphi(x) <\infty$ and 
$$ \langle y^*,w{-}x\rangle \leq \varphi(w) - \varphi(x) \quad
\forall w \in E.$$

A {\it caveat} on notation: in the following we use the same symbol $c$ in order to indicate a
generic constant, possibly depending on data and varying from line to
line.

\subsection{Deformation}
We consider an elastoplastic body occupying the reference configuration $\Omega
 $, which is assumed to be a nonempty, open, connected, and bounded subset
of $\Rd$ with  Lipschitz boundary
$\partial \Omega$. The three-dimensional setting is here chosen for
the sake of notational definiteness only: both modeling and analysis
could be reformulated in one or two dimensions. 

The deformation of the body is described by
$y:\O\rightarrow \Rd$ and is assumed to be such that the deformation
gradient $\bF:=\nabla y$ is almost everywhere defined and belongs to
$\GLp$. The deformation gradient $\bF$ is classically decomposed
as \cite{Kroener,Lee69}
\begin{equation}\label{mult}
\bF=\bFe\bFp
\end{equation}
where $\bFe$ denotes the {\it elastic} part of $\bF$ and $\bFp$ its
{\it plastic}
part.  In particular, the plastic tensor $\bFp$ describes the internal plastic state
of the material and fulfills $$\det\bFp=1$$ in order
to express the {\it isochoric} nature of plastic deformations, as 
customary in metal plasticity
\cite{Simo-Hughes98}.  The heuristics for the multiplicative
decomposition \eqref{mult} resides in the classical chain rule: in case $y$ can be interpreted as a composition $y_{\rm e} \circ
y_{\rm p}$ of an elastic and a plastic deformation, the set
$y_{\rm p}(\Omega)$ is termed {\it intermediate} (or {\it structural})
configuration  and $\nabla y = \nabla y_{\rm e}(y_{\rm p})\nabla
y_{\rm p}$.  Note nonetheless that the tensors $\bFe$ and $\bFp$ need not
be gradients as the compatibility conditions ${\rm curl}\, \bFe =
\boldsymbol 0$ and  ${\rm curl}\, \bFp =
\boldsymbol 0$ may not hold. Correspondingly, the intermediate
configuration can be understood in a local sense only \cite{Naghdi90}. We refer to the recent \cite{Conti-Reina,Conti-Reina2} for a
justification of the multiplicative decomposition \eqref{mult} in
two dimensions consisting in a kinematic analysis
of elastoplastic deformation in plastic-slip and dislocation systems.

 The {\it (right) Cauchy-Green} symmetric tensors associated to the three deformation gradients are
defined by
\begin{equation*}
 \bC:=\bF^\TT\!\bF\in \GLps,\quad \bCe:=\bFe^\TT\!\bFe\in \GLps,\quad
 \bCp:=\bFp^\TT\!\bFp\in \MM.
\end{equation*}
In particular, we
have that  $\det \bCp = (\det\bFp)^2=1$. Note that these tensors are all true
tensorial quantities, all defined on the reference configuration,
whereas $\bF$, $\bFe$, $\bFp$ are two-points tensors.
%
\subsection{Energy}
%
The evolution of the elastoplastic body is governed by the
interplay between energy-storage mechanisms and plastic-dissipative
effects. We assume  from the very beginning the response of the medium
to be {\it hyperelastic}
\cite{Truesdell-Noll65} and 
start by specifying the {\it energy density} of the medium by imposing
the additive decomposition  
\begin{equation}\label{eq:energy0}
   \We(\bFe)+\Wh(\bFp)
\end{equation}
into an {\it elastic} and  a {\it plastic} (or {\it hardening})  energy term. 

The elastic energy density $\We: \GLp\to[0,\infty)$ is required to be
$C^1$ and
\emph{frame indifferent} \cite{Truesdell-Noll65}, namely  
\begin{equation}\label{eq:frame-ind}
\We(\bR\bFe)=\We(\bFe)\quad \forall \bR\in\SO.
\end{equation}
Frame indifference implies that the elastic energy can be expressed
solely in terms of the tensor $\bCe$. Indeed, given $\bFe\in \GLp$ by
polar decomposition there exists a rotation matrix $\bR\in
\SO$ such that $\bFe=\bR\bCe^{1/2}$ and 
$$\We(\bFe) = \We(\bR^\TT\!\bFe) =
\We (\bCe^{1/2}) =: \haz \We(\bCe)$$
 where now
$\widehat{\We}:\GLps\to[0,\infty)$. 

We admit here hardening effects of a purely kinematic nature. These are modulated by
the {\it  plastic-energy density} $\Wh:\SL \to
[0,\infty]$, which we assume to be $C^1$ on its domain. 
 Let us  explicitly remark that we are not considering here additional
internal hardening dynamics. In particular, isotropic hardening is not
directly included in our frame. Our choice is  motivated by
the mere sake of simplicity. Additional internal parameters could be
considered as well.

\subsection{ Plastic-rotation indifference}
The crucial assumption of our analysis is that the  material
behavior  is invariant
by plastic rotations. This invariance is formulated as  
\begin{equation}
  \label{plas_inv}
  \We(\bFe \bQ) = \We(\bFe), \quad \Wh(\bQ\bFp) = \Wh(\bFp) \quad
  \forall \bQ \in \SO
\end{equation}
for all $\bFe \in \GLp$ and $\bFp \in \SL$. The condition on $\We$
corresponds to {\it isotropy},  whereas  $\Wh$ can be
nonisotropic instead. The condition on $\Wh$ is then nothing but frame
indifference with respect to the intermediate configuration.

By using the  polar decomposition   $\bFp=\bQ\bCp^{1/2}$ for
$\bQ \in \SO$ we have 
$$
\bFe=\bF\bFp^{-1}=\bF\bCp^{-1/2}\bQ^\TT.
$$
 The isotropy of $\We$ from \eqref{plas_inv}  then yields $\We(\bFe)=\We(\bF\bCp^{-1/2})$. By
combining frame indifference and isotropy  of $\We$  one can equivalently rewrite the elastic energy density
as  
\begin{align*}
\We(\bFe)&=\We(\bF \bFp^{-1})=\We(\bF\bCp^{-1/2}) =\widehat\We\big((\bF
\bCp^{-1/2})^\TT \bF
\bCp^{-1/2}\big)\\
&=\widehat\We(\bCp^{-1/2}\bC\bCp^{-1/2}).
\end{align*}
 On the other hand, the invariance of $\Wh$ under plastic
rotations \eqref{plas_inv} entails  that  $\Wh (\bFp) = \Wh(\bCp^{1/2})$. We
hence define the function  $\haz \Wh: \MM \to
[0,\infty]$ by 
$$ \haz \Wh(\bCp) := \Wh(\bCp^{1/2})$$
and rewrite the {\it energy density}  
\eqref{eq:energy0} as
\begin{align*}
 W(\bC,\bFp) &=\haz W(\bC,\bCp) =\haz \We(\bFp^{-\TT}\bC \bFp^{-1}) +\Wh(\bFp) \\&= \widehat{\We}(\bCp^{-1/2}\bC\bCp^{-1/2})
 +\haz \Wh(\bCp). 
\end{align*}
 The  state of the system is  hence  described by
the pair $$
(\bC,\bCp)\in \GLps\times \MM.$$
Henceforth, we systematically employ the hat superscript in order
to identify quantities written in terms of the Cauchy-Green tensors $\bCe$
and $\bCp$.

%
%
\subsection{Constitutive relations}
%

In order to introduce constitutive relations we shall here follow the
classical {\it Coleman-Noll procedure} \cite{Coleman-Noll63}. Let us assume sufficient
smoothness and
compute
\begin{align}
  &\frac{\d}{\d t} W(\bC,\bFp) = \partial_{\bC} W {:}\dot \bC  + \partial_{\bFp}
  W{:}\dot \bFp=:
\bS{:}\frac12 \dot \bC - \bN{:} \dot \bFp  \nonumber \\
&= 2 \bFp^{-1}  \partial_{\bCe} \haz \We(\bCe) \bFp^{-\TT}{:}\frac12
\dot \bC - \Big(  2 \bCe\partial_{\bCe} \haz \We(\bCe) \bFp^{-\TT}-
2 
\bFp \partial_{\bCp} \haz \Wh(\bCp) \Big){:} \dot \bFp  \label{Clausius1}
\end{align}
or,  by using the variables $(\bC,\bCp)$, 
\begin{align}
 & \frac{\d}{\d t} {\haz W}(\bC,\bCp) =  \partial_{\bC} W {:}\dot \bC  + \partial_{\bCp}
  W{:} \dot \bCp=:
\bS{:}\frac12 \dot \bC - \bT{:} \frac12 \dot \bCp  \nonumber \\ 
&=2 \bFp^{-1}  \partial_{\bCe} \haz \We(\bCe) \bFp^{-\TT}{:}\frac12
\dot \bC-\big(2 \bFp^{-1}
  \bCe \partial_{\bCe}\haz \We(\bCe)\bFp^{-\TT} - 2 \partial_{\bCp}
  \haz \Wh(\bCp) \big) {:}\frac12 \dot \bCp. \label{Clausius2}
\end{align}
We have  here used  the coaxiality of $\bCe$ and $\partial_{\bCe} \haz
\We(\bCe)$, which is in turn a consequence of  the isotropy of
$\We$ from \eqref{plas_inv},  as well
as the symmetry of $\partial_{\bCp} \haz \Wh$. The above computation shows that the classical  {\it
  second Piola-Kirchhoff stress} tensor $\bS$ 
\begin{equation}\label{eqn:piola}
\bS:=2 \bFp^{-1} \partial_{\bCe} \haz \We(\bCe) \bFp^{-\TT}\in\Rzs
\end{equation} 
 is the conjugate variable to
$\bC$ whereas the evolution of $\bFp$ and $\bCp$ is driven by the corresponding
conjugated {\it thermodynamic forces}
\begin{align*}
\bN &:= - \partial_{\bFp} W(\bC,\bFp) =   2 \bCe\partial_{\bCe}
\haz \We(\bCe) \bFp^{-\TT} -
2 
\bFp \partial_{\bCp} \haz \Wh(\bCp), \\
 \bT &:=  -2\partial_{\bCp} W(\bC,\bCp)  =  2 \bFp^{-1}
  \bCe \partial_{\bCe}\haz \We(\bCe)\bFp^{-\TT} - 2 \partial_{\bCp}
  \haz \Wh(\bCp),
\end{align*}
 respectively.  The expression of $\bT$ follows
 along the same computations of
 \cite{Fri-Ste09}. In particular, we use the fact that, for all $\bB\in \Rzn$,
\begin{equation}
  \label{T4}
  \bFp^{-\TT} \bB \bFp^{-1} = \bFp^{-\TT}   (\partial_{\bCp} \bFp^{\TT}{:}\bB) + (\partial_{\bCp} \bFp {:} \bB) \bFp^{-1},
\end{equation}
in order to compute
\begin{align}
& \partial_{\bCp}\haz \We(\bCp^{-1/2}\bC\bCp^{-1/2}){:}\bB=\partial_{\bCe}\haz \We(\bCe) {:}\partial_{\bFp}(\bFp^{-\TT}\bC\bFp^{-1}){:}
\partial_{\bCp}\bFp {:}\bB \nonumber\\
&=-\partial_{\bCe}\haz \We(\bCe) {:}\bFp^{-\TT}(\partial_{\bCp}\bFp^\TT{:}\bB)\bFp^{-\TT}\bC\bFp^{-1}
-\partial_{\bCe}\haz \We(\bCe) {:} \bFp^{-\TT}\bC\bFp^{-1} (\partial_{\bCp}\bFp{:}\bB)\bFp^{-1}\nonumber\\
&=-\partial_{\bCe}\haz \We(\bCe)\bCe{:}\bFp^{-\TT}(\partial_{\bCp}\bFp^\TT{:}\bB)-\bCe\partial_{\bCe}\haz \We(\bCe){:}(\partial_{\bCp}\bFp{:}\bB)\bFp^{-1}
\nonumber\\
&\stackrel{\eqref{T4}}{=}-\bCe\partial_{\bCe}\haz \We(\bCe) {:}\big(\bFp^{-\TT}(\partial_{\bCp}\bFp^\TT{:}\bB)+(\partial_{\bCp}\bFp{:}\bB)\bFp^{-1}\big)
\nonumber\\
&=-\bCe\partial_{\bCe}\haz \We(\bCe){:}\bFp^{-T}\bB\bFp^{-1}=-\bFp^{-1}\bCe\partial_{\bCe}\haz \We(\bCe)\bFp^{-\TT}{:}\bB \nonumber.
\end{align}
Note that the tensors $\bN$ and $\bT$  fulfill the basic relation
\begin{equation}
  \label{NPT}
  \bN = \bFp \bT.
\end{equation}
As   the  tensors
$\bCe \partial_{\bCe} \haz \We(\bCe)$ and $\partial_{\bCp}
  \haz \Wh(\bCp)$ are symmetric,  the tensor $\bT$ is symmetric as
  well.

%
%
\subsection{Flow rule in terms of $ \bFp$}
%
The plastic evolution is formulated in terms of a given  {\it yield
  function} $\phi =\phi(\bFp,\bN): \SL\times \Rzn \to \Rz$ whose sublevel $\{\phi(\bFp,\bN) \leq 0\}$
represents the {\it elastic domain}. We assume that  
for all given $\bFp \in \SL$ the yield function $ \bN \mapsto
\phi(\bFp,\bN) $ is convex
and that $\phi(\bFp,\boldsymbol 0)<0$. 

Given the conjugacy of $\bN$ and $\bFp$ from
\eqref{Clausius1}, we classically prescribe the  flow rule  in
complementarity form as
\begin{equation}\label{eq:flowrule}
\displaystyle \dot\bFp=\dot z\,\partial_{\bN}\phi(\bFp,\bN),\quad 
\displaystyle \dot z\geq 0,\quad \phi\leq 0,\quad \dot z \phi=0. 
\end{equation}
This position falls within the class of
{\it associated}  plasticity models  for  the rate $\dot\bFp$ is prescribed to belong to the
normal cone of the yield surface $\{\phi(\bFp,\bN)=0\}$.
By dualization, this can be equivalently reformulated as 
\begin{equation}
\bN \in \partial_{\dot \bFp}  R(\bFp, \dot
\bFp) \label{combine1}
\end{equation}
where the {\it infinitesimal dissipation} $ R(\bFp, \dot \bFp)$ is the Legendre conjugate of the
indicator function of the  elastic domain $\{\phi(\bFp,\bN)\leq 0\}$ with
respect to its second argument. 

%
\subsection{Flow rule in terms of $ \bCp$}
%
 We now aim at rewriting  the flow rule
\eqref{eq:flowrule} in
terms of $\bCp$ only. This follows  again by assuming plastic-rotation
invariance for the plastic-dissipation mechanism, namely
\begin{equation}
  \label{plas_inv2}
  \phi(\bQ\bFp,\bQ \bN) = \phi(\bFp,\bN) \quad \forall \bQ \in \SO
\end{equation}
and all $\bFp \in \SL$ and $\bN \in \Rzn$. Again, invariance with
respect to rotations in $\bFp$ corresponds to frame indifference
in the intermediate configuration. On the other hand, under assumption
\eqref{plas_inv}  the tensor  $\bQ\bN$ is the
thermodynamic force conjugated to $\bQ \bFp$ via \eqref{Clausius1}.

By using \eqref{plas_inv2}, the decomposition $\bFp = \bR \bCp^{1/2}$,
and relation \eqref{NPT} we define
$$ \phi(\bFp, \bN) = \phi(\bCp^{1/2},\bCp^{1/2}\bT) =: \haz
\phi(\bCp,\bT)$$
and note that the function $\bT \mapsto \haz \phi(\bCp,\bT)$ is convex
and $\haz
\phi(\bCp,\boldsymbol 0) < 0$ for all given $\bCp \in \MM$. We aim now
at showing that a flow rule in terms of $\dotCp$ follows from the flow
rule \eqref{eq:flowrule}.  As $\bT$ is symmetric,  we compute
\begin{align*}
  \dot \bFp = \dot z \partial_{\bN} \phi(\bFp,\bN) \stackrel{\eqref{NPT}}{=}  \dot
  z \partial_{\bN} \haz \phi(\bCp,\bN^\TT\! \bFp^{-\TT}) =  \bFp^{-\TT}
  \dot z \partial_{\bT} \haz \phi(\bCp,\bT)
\end{align*}
where we have also used that $\partial_{\bT} \haz \phi(\bCp,\bT)$ is
symmetric.
Then, one has that 
\begin{equation}\label{eq:f-C}
\displaystyle \dot\bCp=\dot \bFp^\TT\! \bFp + \bFp^\TT\! \dot \bFp = 2 \dot z\,\partial_{\bT}\haz \phi (\bCp,\bT),\quad 
\displaystyle \dot z\geq 0,\quad \haz \phi\leq 0,\quad \dot z \haz \phi=0.
\end{equation}
 This  can be also expressed in the equivalent dual form
\begin{equation}
\bT\in\partial_{\dotCp} \haz R(\bCp,\dotCp)\label{eq:flow0}
\end{equation}
where the {infinitesimal dissipation} $\haz R(\bCp,\dotCp)$ is the Legendre conjugate of the
indicator function of the {elastic domain} $\{\haz \phi(\bCp,\bT)\leq 0\}$ with
respect to $\bT$.
By using the plastic-rotation
invariance \eqref{plas_inv2} we deduce that 
\begin{align*}
  R(\bFp,\dot \bFp) &= \sup \{ \bN {:} \dot \bFp \;|\; \phi(\bFp,
  \bN) \leq 0\}= \sup \{ \bN {:} \dot \bFp  \;|\;\phi(\bQ\bFp,
  \bQ\bN) \leq 0\}\\
&= \sup \{ \bQ\bN {:} \bQ\dot \bFp  \;|\; \phi(\bQ\bFp,
  \bQ\bN) \leq 0\} = R(\bQ\bFp,\bQ \dot \bFp) \quad \forall \bQ \in\SO.
\end{align*}
Correspondingly, we have that 
\begin{align}
  \haz R(\bCp,\dotCp) &= \sup \{ \bT {:} \dotCp  \;|\; \haz \phi(\bCp,
  \bT) \leq 0\}= \sup \{ \bT {:} \dotCp  \;|\;\phi(\bCp^{1/2},
  \bCp^{1/2}\bT) \leq 0\}\nonumber\\
&= \sup \{ \bCp^{1/2}\bT {:} \bCp^{-1/2}\dotCp  \;|\; \phi(\bCp^{1/2},
  \bCp^{1/2}\bT) \leq 0\}\ = R(\bCp^{1/2},\bCp^{-1/2}\dotCp).\label{acco}
\end{align}

Before closing this subsection, let us remark that the combination of frame \eqref{eq:frame-ind} and
plastic-rotation indifference \eqref{plas_inv}, \eqref{plas_inv2} 
entail that the model is invariant under the trasformations
$ \bFe \to \bQ \bFe \bR$ and $ \bFp \to \bR \bFp$
with respect to all $\bQ,\, \bR \in \SO$.
This invariance is already advocated in
\cite{Casey-Naghdi80,Green-Naghdi71} as a natural requirement in
relation with the multiplicative decomposition $\bF = \bFe\bFp$, see
also \cite[Formula (4.5)]{Naghdi90}.

%
%
\subsection{ Choice of the yield function}
%
 We shall now leave the abstract discussion of the previous
subsections and choose the {\it yield
  function} as 
\begin{equation*}\label{eq:f-F}
\phi(\bFp, \bN):=|\dev(\bN\bFp^{\TT})|-r.
\end{equation*}
 Here $r>0$ is a given {\it yield threshold} activating the plastic
evolution.  The latter choice of yield function is inspired by the
classical von Mises theory and has to be traced back to {\it Mandel}
\cite{Mandel72}, see also \cite{Gurtin10}. In particular, for all
given $\bFp \in \SL$, the function $\bN \mapsto \phi(\bFp,\bN)$ is convex
and $\phi(\bFp,\boldsymbol 0) =-r<0$. Moreover, $\phi$ fulfills 
plastic-rotation invariance 
\eqref{plas_inv2}. Correspondingly, the flow rule \eqref{eq:flowrule}
is here specified as
\begin{equation}
\dot\bFp \bFp^{-1}\in\left\{
\begin{array}{ll}
\displaystyle \dot z \, \frac{\dev(\bN \bFp^{\TT})}{|\dev(\bN
  \bFp^{\TT})|} \quad &\textrm{for }
\dev(\bN
  \bFp^{\TT}) \neq 0,\\[3mm]
\dot z\,   \Big\{\bA\in\Rzd \;|\; |\bA|\leq 1\Big\}\ &\textrm{for } \dev(\bN
  \bFp^{\TT}) =0.
\end{array}
\right.\label{combine}
\end{equation}
The infinitesimal dissipation $ R(\bFp, \dot \bFp)$ reads
\begin{align*}
   R(\bFp, \dot \bFp) &= \sup\big\{ \dot \bFp{:} \bN  \;|\; \phi(\bFp,
  \bN)\leq 0\big\}\\
&=\sup\big\{ \dot \bFp \bFp^{-1}{:} \bB  \;|\; |\dev{\bB}| \leq r\big\} = 2\widetilde R (\dot \bFp \bFp^{-1})
\end{align*}
with
\begin{equation}\label{eq:tildaR}
\widetilde R(\bA):=\left\{
\begin{array}{ll}
\disp\frac{r}{2}|\bA|\quad &\textrm{ if }  \tr(\bA)=0,\\
\infty & \textrm{ else}.
\end{array}
\right.
\end{equation}
Note that, owing to the definition of $\bN$ from \eqref{Clausius1},
the flow rule in terms of $\bFp$ reads
\begin{equation}
  \label{eq:222}
  \partial_{\dot \bFp}  R(\bFp, \dot \bFp) + \partial_{\bFp}
  W(\bC, \bFp) \ni \boldsymbol 0.
\end{equation}

Let us now rewrite the flow rule in terms of $ \bCp$. We have that 
\begin{equation*}
\haz \phi(\bCp,\bT)=|\dev(\bCp^{1/2} \bT\bCp^{1/2})|-r 
\end{equation*}
hence the flow rule \eqref{eq:f-C} reads
\begin{align}
\dot \bCp  
&\in 2 \left\{
\begin{array}{ll}
\displaystyle \dot z \, \bFp^{\TT} \frac{\dev(\bN \bFp^{\TT})}{|\dev(\bN
  \bFp^{\TT})|} \bFp \quad &\textrm{for }
\dev(\bN
  \bFp^{\TT}) \neq 0,\\[4mm]
\dot z\,   \Big\{\bA\in\Rzd \;|\;  |\bA|\leq 1\Big\}\ &\textrm{for } \dev(\bN
  \bFp^{\TT}) =0,
\end{array}
\right. \nonumber\\[4mm]
&
=
2\left\{
\begin{array}{ll}
\displaystyle \dot z \, \bCp^{1/2} \frac{\dev(\bCp^{1/2}\bT\bCp^{1/2})}{|\dev(\bCp^{1/2}\bT\bCp^{1/2})|} \bCp^{1/2} \quad &\textrm{for }
\dev(\bCp^{1/2}\bT\bCp^{1/2})\neq 0,\\[4mm]
\dot z\,   \Big\{\bA\in\Rzd \;|\;  |\bA|\leq 1\Big\}\ &\textrm{for } \dev(\bCp^{1/2}\bT\bCp^{1/2})=0.
\end{array}
\right.\label{eq:flowrule20}
\end{align}
Equivalently, by dualization we rewrite the flow rule in the form of \eqref{eq:flow0} 
where the {infinitesimal dissipation} $\haz R(\bCp,\dotCp)$ reads
\begin{align*}\label{eq:R}
\haz R(\bCp,\dotCp)= R(\bCp^{1/2}, \bCp^{-1/2}\dotCp)=\widetilde R(\bCp^{-1/2}\dot\bCp\bCp^{-1/2}) 
\end{align*}
in accordance with \eqref{acco}. In fact, given the specific form of
$\widetilde R$ we also have that  
\begin{equation}\label{eq:R1}
\haz R(\bCp,\dotCp)=\widetilde
R(\bCp^{-1}\dot\bCp)=\widetilde
R(\dot\bCp \bCp^{-1}).
\end{equation}

%
%

\subsection{Material constitutive relation}

By the definition of $\bT$ from \eqref{Clausius2}, the flow rule \eqref{eq:flow0} takes the  compact form
\begin{equation}\label{eq:flowrule22}
 \partial_{\dot\bCp}\haz R(\bCp,\dotCp) + \partial_{\bCp}\haz W(\bC,\bCp) \ni \boldsymbol 0.
\end{equation}
Eventually, we have proved that the flow rule written in terms of
$\dot \bFp$ (namely \eqref{eq:flowrule},
\eqref{combine1}, \eqref{combine}, or \eqref{eq:222}) and the flow rule written in terms of
$\dot \bCp$ (namely
\eqref{eq:f-C}, \eqref{eq:flowrule20}, \eqref{eq:flow0}, or
\eqref{eq:flowrule22}) are  such  that solutions
$\bFp$ of the former correspond to solutions $\bCp= \bFp^\TT\! \bFp$ of
the latter. In the following we hence concentrate on the formulation
\eqref{eq:flowrule22}.

Let us stress that the flow rule \eqref{eq:flowrule20} induces an evolution  
in $\MM$.
First of all,  as
$\dot\bCp\bCp^{-1}=2\dot z\bCp^{1/2}\bD\bCp^{-1/2}$ for some $ \bD
\in\Rzd$ with $|\bD| \leq 1$, we have that
$$
\tr(\dot\bCp\bCp^{-1})=2\dot z\,\tr(\bCp^{1/2}\bD\bCp^{-1/2})=2\dot
z\,\tr\bD=0.
$$
This implies by Jacobi's formula that $$\frac{\d}{\d t}\det\bCp=\tr(\dot\bCp\bCp^{-1})=0
.$$ Hence, the evolution preserves the 
determinant constraint.  This in particular entails that eigenvalues
cannot change sign along smooth evolutions,
so that 
positive definiteness is also conserved.
Secondly,
it is clear from the above expression \eqref{eq:flowrule20} that
$\dot\bCp$ is symmetric, so that evolution preserves symmetry as $\bT$ is symmetric. 
 Note that the preservation of the determinant constraint
follows solely from the choice of the flow rule. On the other hand, the symmetric character
of the evolution is a combined effect of the form of the flow rule and of
the energy.

As already commented in the Introduction, the possibility of
reformulating the constitutive
model in terms of $\bCp$ instead of using $\bFp$ is quite advantageous
in terms of computational complexity. Indeed, $\bCp$ belongs to the
five-dimensional connected manifold $\MM$ whereas $\bFp$ is in $ \SL$
which is eight dimensional. Moreover,  this brings also a
computational advantage as matrix computations such as
exponentials, logarithms, and powers are  considerably faster  on $\MM$, see
Appendix A. Finally, the fully Lagrangian formulation in $\bCp$
requires no intermediate configurations. In particular, space
discretizations can be based on the reference configuration only. The reader is referred to the
recent \cite{Neff15} for a comparative discussion of the many
finite-plasticity model based on $\bCp$ available in the literature. The main
result of  \cite{Neff15} consists in proving in the isotropic case that all these
constitutive relations coincide, and coincide to the one of this
paper. Recall however that no isotropy in $\Wh$
is assumed throughout our analysis.

\subsection{Dissipativity}

Thanks to constitutive relation \eqref{eqn:piola}, 
we can express the dissipative character of the model by observing
that 
\begin{align*} 
\frac{\d}{\d t} \haz W(\bC,\bCp) - \bS{:}\frac12 \dot \bC =- \bT{:} \frac12 \dot \bCp \leq 
\haz R(\bCp,\boldsymbol 0)-\haz R(\bCp,\dotCp)=-\haz R(\bCp,\dotCp)\leq 0
\end{align*} 
where we have exploited the very definition of subdifferential and \eqref{eq:flow0}.
In particular, for all
sufficiently smooth evolutions we have that $$\frac{\d}{\d t} \haz
W(\bC,\bCp) \leq \bS{:}\frac12 \dot \bC .$$ 

\subsection{Formulation via the logarithmic plastic strain} By using
the isomorphism
$$\log : \MM \to \Rzd ,$$
the
material constitutive model \eqref{eq:flowrule22} can be equivalently
reformulated in the variables 
$$ (\bC, \log \bCp) \in \GLps \times \Rzd.$$
 An interesting feature of this choice  is that the internal variable
$\log \bCp$ takes values in the linear space $\Rzd$.

\section{Aside on energetic solutions}\label{sub:es} 
%
We collect here some notation and tools for the
{\it energetic solvability} of general
rate-independent systems. The notion of {\it energetic solutions} is
by now classical \cite{Fra-Mie06,Mie05,Mie-The04} and we limit 
ourselves  in collecting  the minimal material needed along the analysis, by referring to the
above-mentioned papers for all details, motivations, generalizations, and proofs.

Given a product of 
complete metric spaces $\calQ=\calY \times \calZ$, an energy functional  $\calE :\calQ\times [0,T] \to
(-\infty,\infty]$, a dissipation distance
$\calD:\calZ\times \calZ \to [0,\infty]$ (see below for  specific
 assumptions), and an initial datum $q_0\in\calQ$, 
we say that a trajectory $q=(y,z) :[0,T] \to \calQ$ is an {\it energetic
  solution} corresponding to $(\calQ,\calE,\calD)$ starting from $q_0$
if $q(0)=q_0$ and,
for all $t \in [0,T]$, the following two 
conditions are satisfied
\begin{align}
&q(t) \in \calS(t):=\Big\{q \in \calQ \;|\: \calE(q,t)<\infty   \ \
\text{and} \ \
\calE(q,t)\leq \calE(\haz q,t) + \calD(q,\haz q) \ \forall \haz q \in
\calQ\Big\},\label{eq:st}\\
&\calE(q(t),t) + \textrm{Diss}_{\calD,[0,t]}(q) =
\calE(q(0),0) + \int_0^t \partial_\tau \calE(q(\tau),\tau) \d \tau.\label{eq:eb}
\end{align}
In the latter, we have  denoted  the {\it total dissipation} on
$[0,t]$   by $$\textrm{Diss}_{\calD,[0,t]}(q):=\sup\left\{
  \sum_{i=1}^N\calD(q(t_{i-1}),q(t_i))\right\}$$ where of course
$\calD$ is intended to act just on the $z$ component of $q$ and 
  the supremum is taken over
all partitions $\{0=t_0\leq t_1\leq\ldots\ldots t_N=t\}$ of
$[0,t]$. Condition \eqref{eq:st} is usually referred to as {\it global
  stability}. It expresses the optimality of the current state $q(t)$
against possible competitors $\haz q$ with respect to the
complementary energy,
augmented by the dissipation from $q(t)$ to $\haz
q$. Relation \eqref{eq:eb} imposes the balance between the actual
complementary energy $\calE(q(t),t) $ plus  total  dissipation
$\textrm{Diss}_{\calD,[0,t]}(q)$ and initial energy $\calE(q(0),0) $
plus work of the external actions $\int_0^t \partial_\tau \calE(q(\tau),\tau) \d \tau$. It hence corresponds to {\it energy conservation}.

We refer the reader to the above mentioned classical
references and especially to the recent monograph \cite{Mielke-Roubicek15} for a detailed discussion on the relevance of such a weak
notion of solvability. Here we limit ourselves in observing that the
energetic formulation is totally derivative-free, as no gradients of the
functionals $\calE$ and $\calD$ nor of the trajectory $q$  are
involved.  As
such, it appears to be particularly well-suited for the nonsmooth case
at hand.

A second important property of energetic solutions is that they
naturally arise as limits of {\it incremental minimizations}. Assume to be
given a partition $\{0=t_0<t_1<\dots<t_N=T\}$ of the interval $[0,T]$.
One is interested in incrementally solving the minimization
problems 
\begin{equation}
q_i=\textrm{Argmin}\left\{\calE(q,t_i)+\calD(q_{i-1},q)\;| \; q \in \calQ\right\}\quad \text{for} \
i=1,\ldots,N.
\label{IMP}
\end{equation}
These can be tackled by direct variational
methods and, in particular, have at least a solution
$(q_0,q_1,\ldots,q_{N})$ under suitable coercivity and
lower-semicontinuity assumptions for  the {\it incremental}
functional  $q \mapsto
\calE(q,t_i)+\calD(q_{i-1},q)$. Then, one investigates the convergence
of piecewise constant interpolants $\bar
q^k(t)$ of sequences $(q_0,q_1^k,\ldots,q_{N_k}^k)$ of solutions of
 \eqref{IMP} corresponding to partitions
$\{0=t_0^k<t_1^k<\ldots<t_{N_k}^k=T\}$ with time step $\tau^k=\max(t^k_i{-}t^k_{i-1})$ tending to zero. Under
some specific qualification, see Lemma \ref{lem:abstract-ex} below,
one can prove that 
$\bar q^k$  admits  a convergent subsequence, whose limit is indeed an
energetic solution corresponding to $(\calQ,\calE,\calD)$ \cite{Mie08}.
We shall leave aside the discussion
on the actual capability of energetic solutions of  reproducing
actual  physical behaviors \cite{Mie15,Rou15,Ste09} and limit ourselves in recording that
time-discretization is the most common tool for  
calculating approximate rate-independent evolutions.  The study of
energetic solutions bears a clear relevance for these are limits of
time-discretizations. 

In our analysis, we will  use  the following general
existence result \cite{Mie05,Mie08}. 

\begin{lemma}[Existence result]\label{lem:abstract-ex}
Assume that 
\begin{align}
  &\calD \ \text{is lower
  semicontinuous and nondegenerate in $z$, namely,}\nonumber\\
&  \calD(q,\haz q)\leq \calD(q,\widetilde
q)+\calD(\widetilde q, \haz q) \quad \forall q,\haz q,\widetilde q\in \calQ\quad \text{and} \quad \calD(q,\haz q)=0\
\Leftrightarrow \ z = \haz z,\nonumber\\
&\min\{\calD(q_n,q),\calD(q,q_n)\} \to 0 \ \Rightarrow \ z_n \to z,\label{Diss_ass}\\[3mm]
&\calE\ \text{has compact sublevels and controlled power
  $\partial_t \calE$,
  namely,}\nonumber\\
& 
\{\calE \leq\lambda\}\textrm{ is
  compact in} \ \calQ \quad \forall t\in[0,T],\ \forall \lambda\in\R,\nonumber\\
&\exists c_1>0\;\forall(q_*,t_*) \textrm{ such that }
\calE(q_*,t_*)<\infty:\nonumber\\\nonumber &\calE(q_*,\cdot)\in
C^1(0,T)\textrm{ and } |\partial_t\calE(q_*,t)|\leq
c_1(1{+} \calE(q_*,t))\;\forall t\in[0,T],\\ &\partial_t
\calE : \{ \calE \leq c_1 \} \to \Rz \ \text{is
  continuous},\label{En_ass}\\[3mm]
&\text{Stable states are closed:} \nonumber\\
& q_k\in\calS(t_k) \ \text{and} \ (q_k , t_k ) \to (q_*,t_*) \
\Rightarrow \  q_*\in\calS(t_*). \label{closure_ass}
\end{align}

\noindent Then, for any $q_0\in\calS(0)$ and every partition
$\{0=t_0^k<t_1^k<\ldots<t_{N^k}^k=T\}$  with time step
$\tau^k=\max(t^k_i{-}t^k_{i-1})  $  the incremental minimization problems 
\begin{equation*}
q_i=\textrm{\rm Argmin}\left\{\calE(q,t_i^k)+\calD(q_{i-1}^k,q)\;|\; q
  \in \calQ\right\}\quad \text{for} \
i=1,\ldots,N^k
\label{IMP2}
\end{equation*}
admit a solution $\{q_0,q_1^k,\dots,q_{N^k}^k\}$. As $\tau^k \to 0$,
the corresponding piecewise backward-constant interpolants $ t
\mapsto \bar
q^k(t)$ on the partition admit a not relabeled subsequence such that,
for all $t\in [0,T]$,
$$\bar q^k(t) \to q(t), \quad \textrm{\rm Diss}_{\calD,[0,t]}(\bar q^k) \to
\textrm{\rm Diss}_{\calD,[0,t]}(q), \quad \calE(\bar q^k(t),t) \to
\calE(q(t),t)$$
and $\partial_t \calE(\bar q^k(\cdot),\cdot) \to \partial_t \calE(
q(\cdot),\cdot) $ in $L^1(0,T)$ where $q$ in an energetic solution
corresponding to $(\calQ,\calE,\calD)$ starting from $q(0)$. 
\end{lemma}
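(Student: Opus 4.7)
The plan is to follow the classical three-step strategy for energetic solutions: existence of discrete minimizers, extraction of a limit trajectory via a Helly-type principle, and identification of the limit as an energetic solution through a two-sided energy balance. At each time node $t_i^k$, the incremental functional $q \mapsto \calE(q,t_i^k)+\calD(q_{i-1}^k,q)$ is lower semicontinuous thanks to the assumption on $\calE$ and the lower semicontinuity of $\calD$ in \eqref{Diss_ass}; its sublevel sets are compact since those of $\calE$ are compact and $\calD\geq 0$. The direct method hence yields a minimizer $q_i^k$. Choosing $\haz q=q_{i-1}^k$ as a competitor and using the nondegeneracy of $\calD$, I deduce the discrete energy inequality
\[
\calE(q_i^k,t_i^k)+\calD(q_{i-1}^k,q_i^k)\leq \calE(q_{i-1}^k,t_{i-1}^k)+\int_{t_{i-1}^k}^{t_i^k}\partial_t\calE(q_{i-1}^k,s)\,\ds.
\]
Summing over $i$, invoking the growth bound on $\partial_t\calE$ from \eqref{En_ass}, and applying a discrete Gr\"onwall lemma, I obtain uniform bounds on $\calE(\bar q^k(t),t)$ and on $\textrm{Diss}_{\calD,[0,t]}(\bar q^k)$. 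The minimality of $q_i^k$ combined with the triangle inequality for $\calD$ also yields the discrete global stability $q_i^k\in\calS(t_i^k)$.

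For the extraction I use an abstract Helly selection principle: since $t\mapsto \textrm{Diss}_{\calD,[0,t]}(\bar q^k)$ is nondecreasing and uniformly bounded, a diagonal argument produces a subsequence along which these functions converge pointwise to some nondecreasing $\delta:[0,T]\to[0,\infty)$. Combining this with compactness of the energy sublevels and with the nondegeneracy property in \eqref{Diss_ass}, which translates control on $\calD$ into convergence in $\calZ$, I extract a further subsequence such that $\bar q^k(t)\to q(t)$ in $\calQ$ for every $t\in[0,T]$, with $\textrm{Diss}_{\calD,[0,t]}(q)\leq \delta(t)$ by lower semicontinuity.

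It remains to identify $q$ as an energetic solution. Fix $t\in[0,T]$ and pick $t_{i(k)}^k\to t$ with $t_{i(k)}^k\leq t$: then $\bar q^k(t)=q_{i(k)}^k\in \calS(t_{i(k)}^k)$, and the closedness assumption \eqref{closure_ass} directly yields $q(t)\in \calS(t)$. Passing to the $\liminf$ in the summed discrete energy inequality, using lower semicontinuity of $\calE$ and of $\textrm{Diss}_{\calD,[0,t]}$ together with Lebesgue convergence of the power integrals, which is justified by continuity of $\partial_t\calE$ on sublevels from \eqref{En_ass}, gives the $\leq$ half of \eqref{eq:eb}. The matching $\geq$ inequality is the step I expect to be the main obstacle: one tests the stability $q(\tau)\in\calS(\tau)$ against $q(\tau+h)$ as competitor, sums over a fine partition of $[0,t]$, and passes to the limit $h\to 0$ to obtain
\[
\int_0^t \partial_\tau\calE(q(\tau),\tau)\,\d\tau \leq \calE(q(t),t)-\calE(q(0),0)+\textrm{Diss}_{\calD,[0,t]}(q).
\]
The delicacy lies in ensuring sufficient measurability of $\tau\mapsto \partial_\tau\calE(q(\tau),\tau)$ and in exploiting equicontinuity of the power on sublevels, both of which are consequences of \eqref{En_ass}. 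Once this balance is secured for $q$, a sandwich argument between the discrete and continuous energy identities forces $\calE(\bar q^k(t),t)\to \calE(q(t),t)$, $\textrm{Diss}_{\calD,[0,t]}(\bar q^k)\to \textrm{Diss}_{\calD,[0,t]}(q)$, and $\partial_t\calE(\bar q^k(\cdot),\cdot)\to \partial_t\calE(q(\cdot),\cdot)$ in $L^1(0,T)$, concluding the proof.
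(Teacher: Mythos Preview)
The paper does not supply a proof of this lemma: it is stated as a known abstract existence result and attributed to \cite{Mie05,Mie08}, with no argument given in the text. Your proposal correctly outlines the standard proof from those references---direct method for the incremental problems, discrete energy inequality and Gr\"onwall bound, Helly-type selection using the nondegeneracy in \eqref{Diss_ass} and compactness of sublevels, stability via \eqref{closure_ass}, and the two-sided energy balance followed by a sandwich argument for the convergences---so there is nothing in the paper to compare against, and your sketch is in line with the cited literature.
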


In connection with the small-deformation case, we shall be confronted
with the issue of the stability of energetic
solutions with respect to limits. In this concern, we shall be using the
 following  convergence tool
from \cite{MieRouSte08}.

\begin{lemma}[Evolutive $\Gamma$-convergence]\label{lem:evol} Assume
  to be given $\calD_n: \calQ \times \calQ \to [0,\infty]$,
  $\calE_n:\calQ \times [0,T] \to (-\infty,\infty]$ for $n \in
  \Nz_\infty:=\Nz \cup \{\infty\}$ such that $\calD_n$ and $\calE_n$
  fulfill \eqref{Diss_ass} and \eqref{En_ass} uniformly with respect to
  $n\in \Nz_\infty$. Moreover, for all $n\in \Nz$ let $q_{0n} \in \calS_n(0)$ be given,
  where $\calS_n(t)$ indicates the set of stable states at time $t\in [0,1]$
  corresponding to $(\calQ,\calE_n,\calD_n)$, and assume that 
\begin{align}
& \calD_\infty(q , \haz q) \leq \inf \{\liminf_{n\to \infty}
 \calD_n( q_n , \haz q_n) \;|\;\forall (q_n, \haz q_n) \to (
 q,\haz q)\}\quad \forall  q,\, \haz q \in \calQ,\label{gammainfD}\\[2mm]
&  \calE_\infty(  q,t) \leq \inf \{\liminf_{n\to \infty}
 \calE_n( q_n,t_n) \;|\;\forall   (q_n,t_n)  \to (q,t)\}\quad \forall (q,t) \in \calQ\times [0,T],\label{gammainfE}\\[2mm]
  &\forall (q_n,t_n) \to (q,t)  \ \text{s.\,t.} \  q_n \in \calS_n(t_n) \
  \forall \haz q \in \calQ \ \exists \haz q_n \in \calQ: \nonumber\\
&\limsup_{n\to \infty} \big( \calE_n(\haz q_n,t_n){-}
  \calE_n(q_n,t_n){+}\calD_n(\haz q_n,q_n)\big) \leq \calE_\infty(\haz q,t){-}
  \calE_\infty(q,t){+}\calD_\infty(\haz q,q).\label{closure}
\end{align}
Let 
$q_n$ be energetic solutions
corresponding to $(\calQ,\calE_n,\calD_n)$ starting from $q_{0n}$.  
If the initial values are {\rm well-prepared}, namely
$$q_{0n} \to q_{0\infty} \ \ \text{and} \ \ \calE(q_n(0),0) \to
\calE_\infty(q_\infty(0),0),$$
 there exists a not relabeled subsequence $q_n$ such
that  $q_n(t) \to q_\infty(t)$ for all $t \in [0,T]$  where $q_\infty$ is an energetic solution
corresponding to $(\calQ,\calE_\infty,\calD_\infty)$ starting from
$q_{0\infty}$.
Moreover, we have that, for all $t \in [0,T]$,
$$ \textrm{\rm Diss}_{\calD_n,[0,t]}(q_n) \to
\textrm{\rm Diss}_{\calD_\infty,[0,t]}(q_\infty), \quad \calE_n( q_n(t),t) \to
\calE_\infty(q_\infty(t),t)$$
and $\partial_t \calE_n(\bar q_n(\cdot),\cdot) \to \partial_t \calE_\infty(
q_\infty(\cdot),\cdot) $ in $L^1(0,T)$.
\end{lemma}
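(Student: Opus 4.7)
The plan is to follow the standard evolutionary $\Gamma$-convergence blueprint of \cite{MieRouSte08}: obtain uniform a priori bounds, pass to a subsequential pointwise limit by a Helly-type selection, verify global stability of the limit using the mutual-recovery-sequence hypothesis \eqref{closure}, and finally close the limit energy balance by combining $\Gamma$-liminf inequalities with a lower estimate derived from stability.

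First I would derive uniform a priori bounds. The energy balance for $q_n$ together with the uniform power control from \eqref{En_ass} yields a Gronwall estimate bounding $\calE_n(q_n(t),t)$ uniformly in $n$ and $t$, with well-preparedness handling the initial term. This in turn bounds $\textrm{Diss}_{\calD_n,[0,T]}(q_n)$ uniformly, and the uniform compactness of sublevels of $\calE_n$ makes each slice $\{q_n(t)\}_n$ precompact in $\calQ$. A generalized Helly selection theorem, relying on the nondegeneracy and triangle inequality for $\calD_n$ from \eqref{Diss_ass}, applied on a countable dense subset of $[0,T]$ and then diagonalized, extracts a not-relabeled subsequence and a limit $q_\infty:[0,T]\to\calQ$ with $q_n(t)\to q_\infty(t)$ for every $t$, as well as a nondecreasing $\delta:[0,T]\to[0,\infty)$ with $\textrm{Diss}_{\calD_n,[0,t]}(q_n)\to\delta(t)$ pointwise.

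Next I would pass to the limit in stability. Fixing $t$ and an arbitrary competitor $\haz q\in\calQ$, I invoke \eqref{closure} with the stable sequence $q_n(t)\in\calS_n(t)$ and constant times $t_n\equiv t$ to obtain a recovery sequence $\haz q_n$ along which
\begin{equation*}
\limsup_{n\to\infty}\bigl(\calE_n(\haz q_n,t)-\calE_n(q_n(t),t)+\calD_n(\haz q_n,q_n(t))\bigr)\le \calE_\infty(\haz q,t)-\calE_\infty(q_\infty(t),t)+\calD_\infty(\haz q,q_\infty(t)).
\end{equation*}
Since stability of $q_n(t)$ keeps the left-hand side nonnegative for every $n$, the inequality transfers to the limit and yields $q_\infty(t)\in\calS_\infty(t)$. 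For the energy balance I would prove the upper bound by taking $\liminf_{n\to\infty}$ in the balance for $q_n$: \eqref{gammainfE} controls the energy term, \eqref{gammainfD} lifted to total variation by supremum over partitions controls the dissipation, well-preparedness handles the initial energy, and dominated convergence together with the continuity of $\partial_t\calE_n$ on sublevels from \eqref{En_ass} handles the power integral. The matching lower bound is the classical consequence of global stability of $q_\infty$: insert $\haz q := q_\infty(s_{i-1})$ in \eqref{eq:st} at time $s_i$ along a vanishing partition of $[0,t]$, sum, and let the mesh go to zero. The two together give \eqref{eq:eb} for $q_\infty$, and tightness then forces the asserted convergence of energies, dissipations, and ultimately of powers in $L^1(0,T)$ via dominated convergence.

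The step I expect to be most delicate is the passage to the limit of stability through \eqref{closure}: this is not a free $\Gamma$-limsup inequality but a joint (mutual) recovery construction in which $\haz q_n$ must be tailored simultaneously to $\haz q$ and to the incoming stable sequence $q_n$, and its verification is typically the crux of any concrete application (as will be confirmed in the linearization theorems later in the paper). A secondary technical nuisance is the Helly-selection step: since the notion of variation is measured by the $n$-dependent dissipations $\calD_n$, one needs the uniform triangle inequality and nondegeneracy in \eqref{Diss_ass} precisely to make the extraction work independently of $n$.
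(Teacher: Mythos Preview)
The paper does not prove this lemma: it is stated as an abstract convergence tool imported from \cite{MieRouSte08} (see the sentence immediately preceding the statement), and no proof is given in the paper. Your proposal correctly outlines the standard argument from that reference --- uniform Gronwall bounds, Helly-type selection under the uniform nondegeneracy of $\calD_n$, transfer of stability via the mutual-recovery-sequence condition \eqref{closure}, and the two-sided energy estimate --- so there is nothing to compare against here beyond noting that your sketch is consistent with the cited source.
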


In particular, the stability of energetic solution in the limit
$n\to \infty$ follows whenever one checks the {\it two separate}
$\Gamma$-$\liminf$ inequalities \eqref{gammainfD}-\eqref{gammainfE}
and the {\it mutual recovery sequence} condition \eqref{closure}.

%
%
\section{Energetic solvability of the constitutive model}\label{sec:solve_const}
%

We devote this section to the discussion of the existence of energetic
solutions to the constitutive model
\eqref{eq:flowrule22}  at the material-point level.   Assume to be given an initial state $\bC_{\rm
  p,0}\in \MM$ as well as the deformation history $t \in [0,T] \mapsto
\bC(t) \in \GLps$. We are here interested in finding energetic
solutions $t \in [0,T] \mapsto
\bCp(t) \in \MM$ to the evolution problem \eqref{eq:flowrule22}, namely
   \begin{equation}\label{eq:flowrule3}
\partial_{\dot\bCp}\haz R(\bCp,\dotCp)+ \partial_{\bCp}\haz W(\bC(t),\bCp) \ni \boldsymbol 0, \quad
\bCp(0)=\bC_{\rm p,0}.
\end{equation} 
To this aim, we set $q = z = \bCp$ (the
elastic variable $y$ plays no role here, indeed simplifying the
argument) and indicate the complementary energy as $E(\bCp,t):=
\haz W(\bC(t),\bCp)$.

We replace the infinitesimal dissipation $R$
 by the \emph{dissipation metric}   $D:\MM
\times \MM \to [0,\infty]$ 
defined through the formula 
\begin{align}\label{eq:D}
&D(\bCp,\hbCp):=\inf\bigg\{\int_0^1\haz R(\bCp(t),\dot\bCp(t))\d t\;|\; \bCp \in
C^1(0,1;\MM),\nonumber\\
&\hspace{50mm}\bCp(0)=\bCp,\;\bCp(1)=\hbCp\bigg\}.
\end{align}
As the function $\haz R(\bCp,\cdot)$ is smooth for $\dot\bCp\neq 0$, positively 
$1$-homogeneous, and has strictly convex square  power,  $D$ results in a 
\emph{Finsler metric} \cite{MRS10}. In particular,  $D$ 
fulfills the  {\it triangle inequality}. Moreover, the actual choice
of $\haz R$ entails that $D$ is symmetric as well, see \eqref{eq:R1}. 

Before moving on, let us record some basic continuity and boundedness properties of the
dissipation metric $D$ in the following lemma.

\begin{lemma}\label{lem:D0}
The map $D $ fulfills
\begin{equation}
  \label{lipschitz_char}
  D(\bCp,\hbCp) \leq \widetilde R(\log \bCp {-} \log \hbCp) \quad \forall \bCp,
  \, \hbCp \in \MM.
\end{equation}
In particular, $D$ is locally Lipschitz continuous and we have
the bound
\begin{equation}\label{eq:D-pointbound0}
D(\bCp,\hbCp)\leq  2  r(|\bCp|{+}|\hbCp|{+}6)\quad \forall \bCp,
  \, \hbCp \in \MM.
\end{equation}
\end{lemma}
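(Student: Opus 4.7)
My plan is to prove the norm bound \eqref{lipschitz_char} by exhibiting a specific competitor in the infimum defining $D$; the local Lipschitz continuity and the point bound \eqref{eq:D-pointbound0} will then follow as direct corollaries of this first inequality.

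For $\bCp,\hbCp\in\MM$, set $\boldsymbol K := \log(\bCp^{-1/2}\hbCp\bCp^{-1/2})$. The tensor $\boldsymbol K$ belongs to $\Rzd$: it is symmetric by construction (since $\bCp^{-1/2}\hbCp\bCp^{-1/2}$ is symmetric positive definite), and its trace vanishes because $\det(\bCp^{-1/2}\hbCp\bCp^{-1/2})=\det\hbCp/\det\bCp=1$. Consider then the ``affine-invariant geodesic'' competitor
\begin{equation*}
\bCp(t) := \bCp^{1/2}\exp(t\boldsymbol K)\bCp^{1/2}, \qquad t\in[0,1].
\end{equation*}
It is smooth in $t$, remains in $\MM$ (symmetric positive definite with unit determinant), and satisfies $\bCp(0)=\bCp$, $\bCp(1)=\hbCp$. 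Differentiating and using that $\boldsymbol K$ commutes with $\exp(t\boldsymbol K)$ yields $\bCp(t)^{-1}\dot\bCp(t)=\bCp^{-1/2}\boldsymbol K\bCp^{1/2}$, which is independent of $t$. Since $\bCp^{-1/2}\dot\bCp\bCp^{-1/2}$ is symmetric, cyclic invariance of the trace gives
\begin{equation*}
|\bCp^{-1/2}\dot\bCp\bCp^{-1/2}|^2 = \tr\bigl((\bCp^{-1}\dot\bCp)^2\bigr) = \tr(\boldsymbol K^2) = |\boldsymbol K|^2,
\end{equation*}
and $\tr(\bCp^{-1/2}\dot\bCp\bCp^{-1/2})=\tr\boldsymbol K=0$. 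Hence $\haz R(\bCp(t),\dot\bCp(t))=(r/2)|\boldsymbol K|$ is constant along the path, so integrating yields $D(\bCp,\hbCp)\leq(r/2)|\boldsymbol K|$.

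To conclude \eqref{lipschitz_char}, I invoke the Bhatia--Holbrook exponential-metric-increasing inequality, valid for every unitarily invariant norm on $\Rzs$ and in particular for the Frobenius norm used throughout the paper:
\begin{equation*}
|\log(\bCp^{-1/2}\hbCp\bCp^{-1/2})| \leq |\log\bCp - \log\hbCp|.
\end{equation*}
Combining gives $D(\bCp,\hbCp)\leq (r/2)|\log\bCp-\log\hbCp|=\widetilde R(\log\bCp-\log\hbCp)$, i.e.\ \eqref{lipschitz_char}. Local Lipschitz continuity of $D$ then follows at once: $\log:\MM\to\Rzd$ is smooth and hence locally Lipschitz, and combining this with the symmetry of $D$ and the triangle inequality propagates the estimate to $D$.

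For the point bound \eqref{eq:D-pointbound0}, I split through $\one\in\MM$:
\begin{equation*}
D(\bCp,\hbCp)\leq D(\bCp,\one)+D(\one,\hbCp)\leq \frac{r}{2}\bigl(|\log\bCp|+|\log\hbCp|\bigr),
\end{equation*}
using \eqref{lipschitz_char} in each term (recall $\log\one=\boldsymbol 0$). The $\det=1$ constraint on $\MM$ then bounds the logarithm by the matrix itself: if $\mu_1\geq\mu_2\geq\mu_3>0$ are the eigenvalues of $\bCp$ with $\mu_1\mu_2\mu_3=1$, then $\mu_1\leq|\bCp|$ and $\mu_3\geq\mu_1^{-2}$, so $|\log\mu_i|\leq 2\log\mu_1\leq 2(\mu_1-1)\leq 2|\bCp|$ for every $i$, which gives $|\log\bCp|\leq 3(|\bCp|-1)$ and thus \eqref{eq:D-pointbound0} with room to spare in the additive constant. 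The main technical obstacle is the exponential-metric-increasing inequality: while classical, it is nontrivial for the Frobenius norm and relies on log-majorisation of singular values, so it would need either a precise citation or a direct verification in this setting; the conceptual content is the choice of the affine-invariant geodesic, which diagonalises the Finsler dissipation along the path.
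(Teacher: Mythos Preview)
Your proof is correct. The route differs from the paper's in the choice of competitor curve for \eqref{lipschitz_char}. The paper takes the log-linear interpolation $\bA(t)=\exp\big((1{-}t)\log\bCp+t\log\hbCp\big)$ and reads off $\int_0^1\widetilde R(\bA^{-1}\dot\bA)\,\d t = \widetilde R(\log\bCp{-}\log\hbCp)$ directly, without appealing to any external matrix inequality. You instead use the affine-invariant geodesic $\bCp^{1/2}\exp(t\boldsymbol K)\bCp^{1/2}$, which diagonalises the Finsler integrand and yields the intermediate bound $D(\bCp,\hbCp)\leq (r/2)\,|\log(\bCp^{-1/2}\hbCp\bCp^{-1/2})|$; this is in fact the exact Finsler distance on $\MM$, but to reach \eqref{lipschitz_char} you then need the exponential-metric-increasing inequality of Bhatia and Holbrook. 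So the paper's argument is self-contained but passes through a curve that is not a geodesic, while yours identifies the true distance at the price of importing a nontrivial (though standard) inequality on positive-definite matrices. For \eqref{eq:D-pointbound0} the two proofs coincide: both split through $\one$ and exploit the eigenvalue arithmetic forced by $\det\bCp=1$; your bound $|\log\bCp|\leq 3(|\bCp|{-}1)$ plays the same role as the paper's $|\log\bCp|\leq 4|\bCp{-}\one|$.
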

\begin{proof}
Given $\bCp\in\MM$, let $ \bL  = \log \bCp\in \Rzd$ and define the curve
$t \in [0,1]\mapsto \bC(t)=\exp(t \bL ) \in \MM$  connecting $\one$ and $\bCp$. Note
that  $\tr(\bC^{-1}\dot\bC)=\tr( \bL )=0$, so that $\haz
R(\bCp(t),\dot\bCp(t))=r| \bL | /2$, and
$$D(\one,\bCp)\leq\int_0^1\haz R(\bCp(t),\dot\bCp(t)) \,\d t=\disp\frac{r}{2}| \bL |.$$
An analogous argument entails that $D(\bCp,\one)\leq  r| \bL
| /2 $. 
Let now $\la_3\geq \la_2\geq\la_1> 0$ with $\la_1\la_2\la_3=1$ be the eigenvalues
of $\bCp$. Then, $\mu_i=\log \la_i $ are the eigenvalues of  
$ \bL $. As we have that $\mu_1 + \mu_2 + \mu_3=0$, we deduce
$$
| \bL |\leq  |\mu_1| + |\mu_2| + |\mu_3|\leq 4 \log \la_3\leq 4(\la_3-1)\leq 4|\bCp{-}\bm 1|.
$$
Hence
\begin{equation*}
D(\one,\bCp)\vee D(\bCp,\one)\leq  2 r|\bCp{-}\bm 1|.
\end{equation*}
By the triangle inequality, we hence obtain   estimate
\eqref{eq:D-pointbound0}. 

Let now $\bCp, \, \hbCp \in \MM$ be given and define $ \bL  = \log \bCp$
and $\haz \bL  = \log \hbCp$. The curve $  t \in [0,1] \mapsto \bA(t):= \exp(t\haz
 \bL  + (1{-}t)  \bL  )\in \MM$ connects $\bCp$ and $\hbCp$  and
 it is such that $\tr (\bA^{-1}\dot \bA)=\tr(\bL{-}\haz \bL)=0$.  Hence 
$$D(\bCp,\hbCp)\leq\int_0^1 \widetilde R(\bA^{-1}(t) \dot \bA(t)
) \d t=\widetilde R( \bL {-}\haz \bL ) = \widetilde R(\log \bCp{-} \log \hbCp)$$
so that the local Lipschitz continuity of $D$ follows from that of the
logarithm, see Appendix A.\end{proof}

We now focus on energetic solutions corresponding to 
$(\MM,E,D)$. 
In the following we will make use of the  assumption 
\begin{align} \label{Kirchhoff}
 &   |\bFe^\TT\!\partial_{\bFe} \We (\bFe)| \leq c_2(1
{+} \We(\bFe)) \quad \forall \bFe \in \GLp
\end{align} 
for some positive constant $c_2$.
Assumption \eqref{Kirchhoff} entails the controllability of the tensor
$\bFe^\TT\! \partial_{\bFe} \We(\bFe)$ by means of the energy. It is a
crucial condition in finite-deformation theories \cite{Bal84b,
  Bal02} and, in particular, is compatible with polyconvexity (see
later on). Let us record that condition \eqref{Kirchhoff}  has already been considered in
the quasistatic context \cite{Fra-Mie06, Mai-Mie09,MieSte13} and that
it  implies 
\begin{equation}\label{Kirchhoff2}
\ |\partial_{\bFe} \We (\bFe) \bFe^\TT| \leq c(1
{+} \We(\bFe)) \quad \forall \bFe \in \GLp
\end{equation}
for some $c$, depending on $c_2$. 
This implication has been proved in \cite[Prop. 2.3]{Bal02} for any
{frame-indifferent} energy function $\We (\bFe)$. With a
completely similar argument, one can prove that, for 
 isotropic functions $\We(\bFe)$, \eqref{Kirchhoff2}
implies \eqref{Kirchhoff} so that these two conditions are equivalent
in the frame of \eqref{plas_inv}. 
We remark that \eqref{Kirchhoff}-\eqref{Kirchhoff2}
imply that $\We$ has polynomial growth \cite[Prop. 2.7]{Bal02}.
Let us
anticipate that additional assumptions on $\We$, in particular polyconvexity and
coercivity, will be introduced later in Section
\ref{sec:energetic}. In addition to the control \eqref{Kirchhoff} we
require $\haz \Wh$ to be coercive.  Namely, we ask that
\begin{equation}
  \label{eq:3}
 \text{the sublevels of $\haz \Wh$ are  compact.}
\end{equation} 
This coercivity condition will  be progressively strengthened
along the analysis, see
\eqref{coercivity_of_Wh2}, \eqref{coercivity_joint}, and \eqref{eq:32} later on.   

 Owing to the abstract existence result of Lemma
\ref{lem:abstract-ex} (here indeed simplified by the fact that $q=z=\bCp$) we have the
following.

\begin{theorem}[Energetic solvability of the constitutive material
  relation]\label{thm:existence0} Assume \eqref{Kirchhoff} and \eqref{eq:3}.
Let the deformation $t\mapsto \bC(t) \in C^1(0,T)$ and the initial
state $\bC_{\rm p,0} \in \calS(0)$ be given,
where $\calS(t)$ denotes stable states at time $t\in [0,T]$ with
respect to  $(\MM,E,D)$. Then, there
exists an energetic solution corresponding to $(\MM,E,D)$ starting from $\bC_{\rm p,0}$.
More precisely, for all partitions
$\{0=t_0^k<t_1^k<\ldots<t_{N^k}^k=T\}$  with time step
$\tau^k=\max(t^k_i{-}t^k_{i-1})  $ the incremental minimization problems 
\begin{equation*}
\bC_{{\rm p},i}=\textrm{\rm Argmin}\left\{E(\bCp,t_i^k) +
  D(\bC_{{\rm p},i-1},\bC_{{\rm p},i})\;|\; \bCp \in \MM \right\}\quad \text{for} \
i=1,\ldots,N^k
\label{IMP22}
\end{equation*}
admit a solution $ \{\bC_{{\rm p},0},\bC_{{\rm p},1}^k,\dots,\bC_{{\rm
    p}, N^k}^k\}$ and, as $\tau^k \to 0$, the corresponding piecewise
backward-constant interpolants  $t\mapsto \obCp^k(t)$  on the partition admit a not relabeled subsequence such that,
for all $t\in [0,T]$,
$$ \obCp^k(t) \to \bCp(t), \quad \textrm{\rm Diss}_{[0,t]}( \obCp^k) \to
\textrm{\rm Diss}_{[0,t]}(\bCp), \quad E( \obCp^k(t),t) \to
 E(  \bCp(t),t), $$
and $\partial_t E( \obCp^k(\cdot),\cdot) \to \partial_t E(\bCp(\cdot),\cdot)  $ in $L^1(0,T)$ where $\bCp$ in an energetic solution.
\end{theorem}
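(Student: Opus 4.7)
The strategy is to reduce the theorem to a direct application of the abstract Lemma \ref{lem:abstract-ex}, with the choices $\calQ=\calZ=\MM$ (the elastic variable $y$ is trivial at the material-point level, as noted in the excerpt), energy $\calE(\bCp,t)=E(\bCp,t)=\haz W(\bC(t),\bCp)$, dissipation distance $\calD=D$ from \eqref{eq:D}, and initial state $q_0=\bC_{\rm p,0}\in\calS(0)$. Once the three blocks of abstract assumptions \eqref{Diss_ass}, \eqref{En_ass}, and \eqref{closure_ass} are verified, the existence of a solution, the solvability of the incremental scheme, and all the claimed convergences come for free from the lemma.

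\textbf{Dissipation properties \eqref{Diss_ass}.} The triangle inequality for $D$ is built into its Finsler-metric definition via \eqref{eq:D}; the symmetry of $D$ is a consequence of the right/left ambivalence in \eqref{eq:R1}. Non-degeneracy, $D(\bCp,\hbCp)=0\Leftrightarrow \bCp=\hbCp$, is obtained from the norm character of $\widetilde R$ on $\Rzd$ in \eqref{eq:tildaR}, since a $D$-minimizing curve of zero length must have a constant log. Both the lower semicontinuity of $D$ and the sequential continuity $\min\{D(\bCp_n,\bCp),D(\bCp,\bCp_n)\}\to 0 \Rightarrow \bCp_n\to \bCp$ follow from the local Lipschitz bound $D(\bCp,\hbCp)\leq \widetilde R(\log\bCp-\log\hbCp)$ of Lemma \ref{lem:D0}, combined with the fact that $\log:\MM\to\Rzd$ is a smooth homeomorphism (Appendix A).

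\textbf{Energy properties \eqref{En_ass}.} The inclusion $\{E(\cdot,t)\leq \lambda\}\subseteq \{\haz\Wh\leq \lambda\}$, valid since $\haz\We\geq 0$, together with the coercivity assumption \eqref{eq:3}, gives compactness of sublevels in $\MM$. Joint continuity of $E$ in $(\bCp,t)$ and $C^1$-regularity in $t$ follow from $\bC\in C^1(0,T;\GLps)$, the $C^1$ regularity of $\haz\We$ and $\haz\Wh$, and the smoothness of $\bCp\mapsto \bCp^{-1/2}$ on $\MM$. The technical point is the power control $|\partial_t E(\bCp,t)|\leq c_1(1+E(\bCp,t))$: from \eqref{Clausius2} one has
\[
\partial_t E(\bCp,t)=\tfrac12 \bS:\dot\bC(t)=\partial_{\bCe}\haz\We(\bCe):\dot\bCe,\qquad \dot\bCe=\bFp^{-\TT}\dot\bC(t)\bFp^{-1}.
\]
Exploiting coaxiality of $\bCe$ and $\partial_{\bCe}\haz\We(\bCe)$ (from isotropy of $\We$, see \eqref{plas_inv}), one rewrites this contraction as $\bCe^{1/2}\partial_{\bCe}\haz\We(\bCe)\bCe^{1/2}:\bCe^{-1/2}\dot\bCe\bCe^{-1/2}$ and invokes the equivalent $\bCe$-form of the Kirchhoff-type control \eqref{Kirchhoff}–\eqref{Kirchhoff2}, namely $|\bCe^{1/2}\partial_{\bCe}\haz\We(\bCe)\bCe^{1/2}|\leq c(1+\haz\We(\bCe))$, while the second factor is handled by the smoothness of $\bC(\cdot)$ on the compact interval $[0,T]$. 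Continuity of $\partial_t E$ on sublevels of $E$ follows along the same lines.

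\textbf{Closure of stable states \eqref{closure_ass} and conclusion.} Take $\bCp_n\in\calS(t_n)$ with $\bCp_n\to \bCp_*$ in $\MM$ and $t_n\to t_*$. For any competitor $\hbCp\in\MM$, the global stability at $t_n$ reads $E(\bCp_n,t_n)\leq E(\hbCp,t_n)+D(\bCp_n,\hbCp)$, and passage to the limit using the continuity of $E$ (Step above) and the local Lipschitz continuity of $D(\cdot,\hbCp)$ (Lemma \ref{lem:D0}) yields $E(\bCp_*,t_*)\leq E(\hbCp,t_*)+D(\bCp_*,\hbCp)$, i.e.\ $\bCp_*\in\calS(t_*)$. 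With all hypotheses of Lemma \ref{lem:abstract-ex} in place, its direct application produces the incremental solutions, their convergent piecewise-constant interpolants, and the convergence of dissipations, energies and powers asserted in the statement. The main obstacle in the whole scheme is the power control: the Kirchhoff-type hypothesis \eqref{Kirchhoff} is phrased on the $\bFe$-side, whereas $\partial_t E$ lives on the $\bC$-side, and bridging the two requires the polar-decomposition plus isotropy manipulation sketched above; all remaining items reduce to continuity/compactness facts already encoded in Lemma \ref{lem:D0} and in the coercivity assumption \eqref{eq:3}.
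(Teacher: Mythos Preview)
Your proof is correct and follows essentially the same route as the paper: reduce to Lemma~\ref{lem:abstract-ex} by verifying \eqref{Diss_ass} via Lemma~\ref{lem:D0}, \eqref{En_ass} via $\haz\We\geq 0$ plus \eqref{eq:3} and the Kirchhoff control, and \eqref{closure_ass} via continuity of $E$ and $D$. The only substantive difference is in the power-control algebra: the paper factors $\partial_{\bCe}\haz\We(\bCe){:}\dot\bCe$ as $\tfrac12(\partial_{\bFe}\We(\bFe)\bFe^\TT){:}(\bF^{-\TT}\dot\bC\bF^{-1})$ and applies \eqref{Kirchhoff2} directly, whereas you stay in $\bCe$-variables and write $\bCe^{1/2}\partial_{\bCe}\haz\We(\bCe)\bCe^{1/2}{:}\bCe^{-1/2}\dot\bCe\bCe^{-1/2}$. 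Both factorings are equivalent, but your claim that ``the second factor is handled by the smoothness of $\bC(\cdot)$'' hides a nontrivial step: $\bCe^{-1/2}\dot\bCe\bCe^{-1/2}$ depends on $\bCp$ \emph{a priori}, and one needs the trace-cyclicity identity $|\bCe^{-1/2}\dot\bCe\bCe^{-1/2}|^2=\tr(\bCe^{-1}\dot\bCe\bCe^{-1}\dot\bCe)=\tr(\bC^{-1}\dot\bC\bC^{-1}\dot\bC)=|\bC^{-1/2}\dot\bC\bC^{-1/2}|^2$ to see that it is actually $\bCp$-independent. The paper's $\bF$-factoring makes this independence manifest from the outset via $|\bF^{-1}|=|\bC^{-1/2}|$.
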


\begin{proof}
  We limit ourselves in checking the assumptions of  Lemma
  \ref{lem:abstract-ex}.   
Conditions \eqref{Diss_ass} follow from Lemma \ref{lem:D0}. As  $\haz \We \geq 0$ and $\haz
  \Wh$ is coercive  by  \eqref{eq:3},   the
  compactness of the sublevels of $E(\cdot,t)$  ensues.  The closure of
  the stable states \eqref{closure_ass} is a consequence of the
  continuity of $E$ and $D$, see again Lemma \ref{lem:D0}.

We are left with the treatment of the power $\partial_tE(\bCp,t)$. 
Let us start by computing 
\begin{align}
 &\partial_t E(\bCp,t) = \partial_t \haz \We(\bCe) = \partial_{\bCe}
  \haz W(\bCe){:}\dot \bCe\nonumber \\&= \partial_{\bCe}
  \haz W(\bCp^{-1/2} \bC(t) \bCp^{-1/2}){:} \bCp^{-1/2} \dot \bC(t) \bCp^{-1/2}.\label{restart}
\end{align}
As $\haz \We \in C^1$ and the square root is
continuous \cite[pag. 23]{Gurtin81}, the continuity of the map $\bCp \mapsto
\partial_tE(\bCp,t) $  follows. 

 In order to prove the bound on the power  in terms of the
energy, recall that $\dot \bCe = \bFp^{-\TT}\dot \bC \bFp^{-1}$. Hence,
we have that 
\begin{align*}
  &\partial_t E(\bCp,t) = \partial_{\bCe}
  \haz W(\bCe){:}\dot \bCe = \frac12 (\bFe^{-1} \partial_{\bFe}
  W(\bFe)){:} (\bFp^{-\TT}\dot \bC \bFp^{-1})\\
&= \frac12 (\partial_{\bFe} W(\bFe) \bFe^{\TT}){:}(\bFe^{-\TT}
\bFp^{-\TT} \dot \bC \bP^{-1} \bFe^{-1}) \\
&= \frac12 (\partial_{\bFe}
W(\bFe) \bFe^{\TT}){:}(\bF^{-\TT} \dot \bC \bF^{-1}).
\end{align*}
Note that the map  $t\mapsto \bF^{-\TT}(t) \dot \bC(t) \bF^{-1}(t)$  is bounded
as $t \mapsto \bC(t) \in \GLps$ is $C^1$ and $|\bF^{-1}| = |\bC^{-1/2}|$. 
By exploiting the control \eqref{Kirchhoff2} we get 
\begin{align}
|\partial_t E(\bCp,t) | &\leq \frac12 c_2(1{+} \We(\bFe))|\bC^{-1/2}(t)
\dot \bC(t) \bC^{-1/2}(t)|  \nonumber\\
&\leq c (1{+}\haz \We(\bCe)) \leq c (1{+}
E(\bCp,t))\label{power_control} 
\end{align}
which delivers the required bound.
\end{proof}
\section{Small-deformation limit for the constitutive model}\label{sec:lin_const}
%
 
We turn  now  our attention to the study of the small-deformation
case. The main result of this Section is a rigorous linearization 
limit for the
constitutive model at the material-point level. This will follow from an application of the
evolutive $\Gamma$-convergence Lemma
\ref{lem:evol}. 
Linearization arguments are classically based on Taylor
expansions for energy and dissipation densities. Here we concentrate
instead on the proof of a variational convergence result. Indeed, we are here proving not only
that the driving functionals are converging but, more
significantly, that the whole trajectories converge. This brings to a
rigorous {\it variational justification} of the linearization approach. 

In order to tackle the small-deformation situation, we concentrate on suitably rescaled differences between $\bC$ or $\bCp$ and the 
identity. In particular, given $\e>0$ we
reformulate the problem in the variables
\begin{equation}\label{forma} \bE : = \frac{1}{2\e} (\bC {-}\one) \in \Rzs, \quad \bz :=\frac{1}{2\e} \log\bCp
 \in\Rzd.
\end{equation}
The tensor $\bE$ is nothing but the $\e$-rescaled {\it Green-Saint Venant}
strain. By
assuming 
$y ={\rm id} + \epsi u$ where $u$ is the rescaled displacement of the
body, one has
$$ \bC = (\one {+} \epsi \nabla u)^\TT (\one {+} \epsi \nabla u) =
\one + 2\epsi \nabla u^{\rm sym} + \epsi^2 \nabla u^\TT \nabla u.$$
In particular $\nabla u^{\rm sym}=(\nabla u {+}\nabla u^\TT)/2$ corresponds to $\bE$
to first order.

The choice for $\bz$ is in the same spirit and corresponds to the
$\e$-rescaled {\it Hencky logarithmic (plastic) strain}. Indeed $\bCp = \exp(2\e
\bz)$ so that $\bCp \sim \one +2 \e \bz$ to first order, in analogy
with the definition of $\bC=\one + 2\e \bE$. The different choice for
$\bz$ is motivated by the nonlinear nature of the state space
$\MM$. In particular, we  use here  the fact that the logarithm is
an isomorphism between $\MM$ and $\Rzd$
in order to replace the the nonlinear finite-plasticity state space $\MM$ with the
linear space $\Rzd$, corresponding indeed to the small-deformation
limit. This is crucial in order to avoid the $\e$-dependence in the state spaces.

By using the equivalent variables \eqref{forma} we introduce the
rescaled energy density $W_\epsi : \Rzs \times \Rzd \to [0,\infty]$ as
\begin{align*} W_\epsi (\bE, \bz) &:= \frac{1}{\epsi^2}\haz W(\bC,\bC_{\rm
  p})
\\&\stackrel{\eqref{forma}}{=} \frac{1}{\epsi^2}\haz\We\big(\exp(-\e \bz) (\one{+}2\epsi \bE)
\exp(-\e \bz)\big)  + \frac{1}{\epsi^2} \haz \Wh\big(\exp(2\e \bz) \big) .
\end{align*}

The relevance of this scaling is revealed  for  $\haz \We $
and $\haz \Wh$ twice differentiable at $\one$  by  computing 
Taylor expansions. In particular, by assuming with no loss of
generality that the densities are
normalized so that $\haz \We(\one)  =
\haz \Wh(\one)=0$,  that the reference configuration is stress-free
($\partial_{\bFe}\We(\one)=\boldsymbol 0$), and that the thermodynamic
force $\bT$ conjugated to $\bCp$ vanishes at non-plasticized states
($\partial_{\bCp}\haz \Wh(\one)=\boldsymbol 0$), we compute
\begin{align*}
  \haz\We(\bCp^{-1/2}\bC\bCp^{-1/2}) &= \haz\We\big(\exp(-\e \bz) (\one{+}2\epsi \bE)
\exp(-\e \bz)\big)\\
&=  \frac12 \epsi^2
(\be{-}\bz){:} 4\partial^2_{\bCe} \haz \We(\one)  (\be{-}\bz) + {\rm o}
(\epsi^2)   = \frac12 \epsi^2|\be{-}\bz|^2_{\bbC}+ {\rm o}
(\epsi^2)\\[4mm]
\haz \Wh(\bCp) &= \frac12 \epsi^2 \bz{:}4\partial^2_{\bCp}  \haz \Wh(\one) \bz + {\rm o}
(\epsi^2)=   \frac12 \epsi^2| \bz|^2_{\bbH}+ {\rm o}
(\epsi^2).
\end{align*}
We have here used the fact that $\exp(-\e \bz)= \one -\epsi \bz +
{\rm o} (\epsi) $ and defined the {\it elasticity} $\bbC$
and {\it hardening} tensors $\bbH$ as follows
$$\bbC:=4\partial^2_{\bCe} \haz \We(\one) = \partial^2_{\bFe}
\We(\one), \qquad\bbH:=4\partial^2_{\bCp}
\haz \Wh(\one).$$ 
These fourth-order tensors are clearly symmetric, for they are Hessians. In addition,
 due to frame- and plastic-rotations indifference the tensors  $ \bbC$
 and $\bbH$ present the so-called {\it minor
symmetries} as well, namely
$$ \bbC_{ij\ell k} = \bbC_{\ell kij}= \bbC_{ij k\ell}, \quad  \bbH_{ij\ell k} = \bbH_{\ell kij}= \bbH_{ij k\ell}.$$

As for the dissipation metric,  by rescaling $D$  by $2\epsi$
 we define   $D_\e: \Rzd \times \Rzd \to
[0,\infty]$ as
$$
D_\e(\bz_1,\bz_2):=\frac{1}{2\e}D(\bC_{\rm p1},\bC_{\rm
  p2})\stackrel{\eqref{forma}}{=}
\frac{1}{2\e}D\big(\exp(2\e\bz_1),\exp( 2\e\bz_2)\big).$$
Note that the scaling of the energy  and of the dissipation 
is different for it corresponds for the different homogeneity of these
terms. 

Assume now to be given $t \in [0,T] \mapsto \bE(t) \in
\Rzs\in C^1(0,T)$ and define accordingly  the rescaled complementary 
energy densities $E_\epsi (\bz,t):= W_\e(\bE(t),\bz)$. Moreover, let
the initial values
$\bz_{0\e} \in \calS_\e(0)$ be given, where $\calS_\e(t)$ denotes the
stable states at time $t\in [0,T]$
with respect to $(\Rzd,E_\epsi,D_\epsi)$. By changing back variables
via \eqref{forma} one finds that $\bC_{\rm p,0\e}= \exp(2\epsi \bz_{0\e})
\in \calS(0)$ where the latter denotes the stable states  at  $t=0$
with respect to $(\MM,E/\e^2,D/(2\e))$. In particular, by virtue of Lemma \ref{thm:existence0} there exists an energetic
solution
$t \in [0,T] \mapsto \bz_\e(t) \in
\Rzd$ corresponding to $(\Rzd,E_\epsi,D_\epsi)$ and starting from
$\bz_{0\e}$. We shall term  $\bz_\e$ a  {\it
  finite-plasticity} trajectory in the following.

The focus of this section is to check that finite-plasticity
trajectories $\bz_\epsi$ converge in the small-deformation limit $\epsi
\to 0$ to the unique {\it linearized-plasticity} trajectory. The
limiting linearized model is specified by letting
\begin{align*}
  W_0 (\bE, \bz) := \frac12|\bE{-}\bz|^2_\bbC + \frac12|\bz|^2_\bbH,\quad
  E_0(\bz,t):= W_0(\bE(t),\bz),\quad
D_0(\bz,\haz \bz):= r |\haz \bz {-} \bz|.
\end{align*}
Given $\bz_0\in \Rzd$, one can apply Lemma \ref{lem:abstract-ex}
and find an energetic solution
corresponding to $(\Rzd,E_0,D_0)$ and starting from $\bz_0$.  As $W_0$ is quadratic, the latter energetic solution turns out
to be a strong solution of the  constitutive  relation of
linearized plasticity with linear kinematic hardening
\begin{align}
r \partial|\bz| + (\bbC{+}\bbH) \bz \in \bbC \bE(t), \quad \bz(0)=\bz_0\label{linearized_plasticity}
\end{align}
and it is thus unique \cite{Han-Reddy}. We shall
refer to this solution as the  {\it linearized-plasticity trajectory}
in the following.

 The main result of this section reads as follows. 

\begin{theorem}[Small-deformation limit of the constitutive
  model]\label{sdl} Assume $ \haz \Wh  $ to be coercive in the following
  sense
  \begin{equation}
    \label{coercivity_of_Wh2}
    \haz \Wh\big(\exp(2\bA)\big) \geq
     c_3 |\bA|^2  \quad \forall \bA \in \Rzd
  \end{equation}
where $c_3$ is a positive constant.  Moreover, let 
$\haz \We$ and $\haz \Wh$  have quadratic behavior at identity,
namely
\begin{align}
  &\forall \delta>0 \, \exists c_\delta >0 \, \forall |\bA| \leq
  c_\delta : \nonumber\\
&\left|\haz \We(\one{+}2\bA) {-}\frac12 |\bA|_\bbC^2\right|
  + \left| \haz \Wh\big(\exp(2\bA)\big) {-}\frac12 |\bA|_\bbH^2\right|
  \leq \delta |\bA|^2.  \label{quad_behavior}
\end{align}
Let $\bz_\e$ be finite-plasticity trajectories starting from
{well-prepared} initial data
$\bz_{0\e}\in \calS_\e (0) $, namely
\begin{equation}
  \label{well-prepared}
  \bz_{0\e} \to \bz_0 \in \Rzd \ \ \text{and} \ \ E_\e(\bz_{0\epsi},0)
  \to E_0(\bz_{0},0).
\end{equation}
Then, for all $t\in [0,T]$
$$\bz_\epsi(t) \to \bz(t),\quad {\rm Diss}_{D_\e,[0,t]}(\bz_\e) \to  {\rm
  Diss}_{D_0,[0,t]}(\bz), \quad    E_\e(\bz_{\epsi}(t),t)
  \to  E_0(\bz(t),t)$$
 where $\bz$ is the unique
linearized-plasticity trajectory starting from
$\bz_0$.
\end{theorem}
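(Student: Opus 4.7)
The plan is to apply the evolutive $\Gamma$-convergence criterion Lemma \ref{lem:evol} with $\calY$ trivial, $\calZ = \Rzd$, $\calE_n = E_{\e_n}$, $\calD_n = D_{\e_n}$ along any sequence $\e_n\downarrow 0$, and $\calE_\infty = E_0$, $\calD_\infty = D_0$. The initial data are well-prepared by \eqref{well-prepared}, so it suffices to verify (i) the uniform validity of \eqref{Diss_ass}--\eqref{En_ass} in $\e$, (ii) the two separate $\Gamma$-$\liminf$ inequalities \eqref{gammainfD}--\eqref{gammainfE}, and (iii) the mutual recovery sequence property \eqref{closure}. Uniqueness of the linearized-plasticity trajectory solving \eqref{linearized_plasticity} then upgrades subsequential to full convergence.

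For the uniform axioms in (i), symmetry, triangle inequality, and nondegeneracy of $D_\e$ are inherited from $D$ via Lemma \ref{lem:D0}, together with its continuity. Uniform coercivity of $E_\e(\cdot,t)$ in $\bz$ is immediate from \eqref{coercivity_of_Wh2}: since $\haz \We\geq 0$ and $\frac{1}{\e^2}\haz \Wh(\exp(2\e\bz)) \geq c_3|\bz|^2$, sublevels are bounded in $\Rzd$ uniformly in $\e$. The uniform power control $|\partial_t E_\e(\bz,t)|\leq c(1+E_\e(\bz,t))$ is obtained by repeating verbatim the computation of Theorem \ref{thm:existence0}, only now the Kirchhoff bound \eqref{Kirchhoff}--\eqref{Kirchhoff2} combined with the rescaling gives the same constants, independent of $\e$.

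For (ii), consider the energy first. Given $(\bz_\e,t_\e)\to(\bz,t)$, either $|\bz_\e|\to\infty$ along a subsequence, in which case coercivity yields $E_\e(\bz_\e,t_\e)\to +\infty$, or $\bz_\e$ stays bounded. In the latter case the arguments $\exp(-\e\bz_\e)(\one+2\e\bE(t_\e))\exp(-\e\bz_\e)$ and $\exp(2\e\bz_\e)$ are $O(\e)$-close to $\one$, so the quadratic expansion \eqref{quad_behavior} applies uniformly and gives $E_\e(\bz_\e,t_\e)\to \tfrac12|\bE(t){-}\bz|_\bbC^2 + \tfrac12|\bz|_\bbH^2 = E_0(\bz,t)$. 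For the dissipation, the upper bound \eqref{lipschitz_char} in Lemma \ref{lem:D0} combined with the local Lipschitz behavior of the logarithm yields $\limsup D_\e(\bz_\e^1,\bz_\e^2)\leq D_0(\bz^1,\bz^2)$; the matching $\liminf$ inequality follows by testing the Finsler infimum \eqref{eq:D} along arbitrary connecting curves and Taylor-expanding $\haz R$ at identity, using that $\widetilde R$ is $1$-homogeneous and the infimum reduces to the $\Rzd$-distance in the limit.

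The principal obstacle is (iii), the construction of a mutual recovery sequence. Given a stable $\bz_\e\in\calS_\e(t_\e)$ with $(\bz_\e,t_\e)\to(\bz,t)$ and a competitor $\haz\bz\in\Rzd$, I propose $\haz\bz_\e := \bz_\e + (\haz\bz-\bz)$. Stability of $\bz_\e$ tested against, say, $\boldsymbol 0$, combined with the uniform coercivity derived in (i) and the well-preparedness \eqref{well-prepared}, provides a uniform bound $|\bz_\e|\leq C$; hence $\haz\bz_\e$ is also bounded. The quadratic expansion \eqref{quad_behavior} applied to both $E_\e(\haz\bz_\e,t_\e)$ and $E_\e(\bz_\e,t_\e)$ yields
\begin{equation*}
E_\e(\haz\bz_\e,t_\e) - E_\e(\bz_\e,t_\e) \longrightarrow E_0(\haz\bz,t) - E_0(\bz,t),
\end{equation*}
where the cross terms coming from the noncommutativity of the exponentials in $\haz \We(\exp(-\e\haz\bz_\e)(\one+2\e\bE(t_\e))\exp(-\e\haz\bz_\e))$ contribute only $o(1)$. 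For the dissipation one combines the upper bound obtained in (ii) with the fact that $\haz\bz_\e-\bz_\e = \haz\bz-\bz$ is independent of $\e$, giving $\limsup D_\e(\bz_\e,\haz\bz_\e)\leq D_0(\bz,\haz\bz)$. Assembling these two estimates verifies \eqref{closure}, and Lemma \ref{lem:evol} delivers the convergence of trajectories, dissipations, and energies stated in the theorem.
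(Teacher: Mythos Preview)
Your overall architecture matches the paper: apply Lemma~\ref{lem:evol}, check the uniform axioms via coercivity and the power estimate from Theorem~\ref{thm:existence0}, establish the two $\Gamma$-$\liminf$ inequalities, and build a mutual recovery sequence. Your recovery choice $\haz\bz_\e = \bz_\e + (\haz\bz-\bz)$ works, but is slightly more elaborate than needed: since $(\bz_\e,t_\e)\to(\bz,t)$ is already part of the hypothesis of \eqref{closure}, the paper simply takes the \emph{constant} sequence $\haz\bz_\e = \haz\bz$ and uses $D_\e(\bz_\e,\haz\bz) \leq \widetilde R(\haz\bz - \bz_\e) \to \widetilde R(\haz\bz-\bz)$ together with the local uniform convergence $E_\e\to E_0$. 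Your stability-based a~priori bound on $|\bz_\e|$ is therefore superfluous.

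The one genuine gap is your $\Gamma$-$\liminf$ argument for $D_\e$. You write that it ``follows by testing the Finsler infimum along arbitrary connecting curves and Taylor-expanding $\haz R$ at identity.'' But a lower bound on an infimum cannot be obtained by testing: one must take \emph{near-optimal} curves $\bC_\e\in C^1([0,1];\MM)$ joining $\exp(2\e\bz_\e)$ to $\exp(2\e\haz\bz_\e)$ and show that \emph{these particular curves} stay uniformly close to $\one$, so that the expansion $\haz R(\bC_\e,\dot\bC_\e)=\widetilde R(\dot\bC_\e\bC_\e^{-1})\approx\widetilde R(\dot\bC_\e)$ is legitimate. A~priori nothing prevents the near-optimal curves from making large excursions in $\MM$. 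The paper (Lemma~\ref{lem:G-limit-D0}) closes this by first bounding $\sup_t|\bC_\e(t)-\one|$ via a Gronwall-type estimate using \eqref{lipschitz_char} and the equality $\widetilde R(\dot\bC_\e\bC_\e^{-1}) = (r/2)|\dot\bC_\e\bC_\e^{-1}|$, then rescaling $\haz\bC_\e := \one + (\bC_\e-\one)/(2\e)$, extracting a weak-$*$ limit in $W^{1,\infty}$, and invoking lower semicontinuity of $\int\widetilde R(\dot{\haz\bC}_\e\bC_\e^{-1})\,\d t$. Without this confinement step, your Taylor expansion is unjustified and the $\liminf$ inequality does not follow.
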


 Note that  the
coercivity condition \eqref{coercivity_of_Wh2} corresponds to a
quantitative version of the weaker \eqref{eq:3}. Indeed, as $\bA$ is symmetric
and deviatoric, large negative eigenvalues of $\bA$ may arise only
in presence of some large positive eigenvalue. In this case, the norm
the exponential matrix is necessarily large as well. 

Let us also remark that the quadratic behavior \eqref{quad_behavior}
of $\haz \We$ is equivalent to the following
\begin{align} &\forall \delta>0 \, \exists \tilde c_\delta >0 \, \forall |\bA| \leq
  \tilde c_\delta : \quad \left| \We(\one{+}\bA) {-}\frac12 |\bA|_\bbC^2\right|
  \leq \delta |\bA|^2.  \label{quad_behavior2}
\end{align}

Condition \eqref{quad_behavior} implies in particular that $\haz
\We$ and $\haz \Wh$ are twice differentiable at the identity and
\begin{align*}
 &\haz \We (\one)  = \Wh(\one) =0, \ \ \partial_{\bCe}\haz \We
(\one)  = \partial_{\bCp}\haz \Wh(\one) =\boldsymbol 0, \\
& 4\partial^2_{\bCe} \haz \We
(\one) = \bbC, \ \ 4\partial^2_{\bCp} \haz \Wh (\one) = \bbH.
\end{align*}
On the other hand,  these  conditions imply \eqref{quad_behavior}
 in case  $\haz
\We$ and $\haz \Wh$
are $C^2$ in a neighborhood of the identity.

Let us start by preparing some convergence lemmas for the energy
density and the  dissipation metrics.

\begin{lemma}[Convergence of $E_\epsi$] \label{lem:G-limit-E0}
Under the assumptions of Theorem \emph{\ref{sdl}} we have that 
  $E_\epsi \to E_0$ locally uniformly in $\bz$ and uniformly in
  $t$.
\end{lemma}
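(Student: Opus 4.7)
The plan is to establish the uniform convergence $E_\e(\bz,t) \to E_0(\bz,t)$ on $K\times[0,T]$ for any compact $K\subset\Rzd$, by applying the quadratic-behavior hypothesis \eqref{quad_behavior} separately to the elastic and the hardening contributions of $E_\e$. Since $t\mapsto \bE(t)$ is $C^1$ (hence bounded) on $[0,T]$ and $K$ is compact, I first fix $M>0$ with $|\bE(t)|+|\bz|\leq M$ for all $(\bz,t)\in K\times[0,T]$, so that all remainders produced below are uniform.

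For the hardening term I would simply substitute $\bA := \e\bz$ in \eqref{quad_behavior}. Since $|\bA|\leq \e M$, for $\e$ small one has $|\bA|\leq c_\delta$ and \eqref{quad_behavior} directly delivers
\begin{equation*}
\left|\tfrac{1}{\e^{2}}\haz\Wh\bigl(\exp(2\e\bz)\bigr) - \tfrac12|\bz|^{2}_{\bbH}\right|\leq \delta|\bz|^{2} \leq \delta M^{2}.
\end{equation*}
For the elastic term I would first expand the matrix exponential on $K$, writing $\exp(-\e\bz) = \one - \e\bz + \rho_\e$ with $|\rho_\e|\leq c\e^2$ uniformly. A direct multiplication then yields
\begin{equation*}
\exp(-\e\bz)\bigl(\one+2\e\bE(t)\bigr)\exp(-\e\bz) = \one + 2\bA_\e, \qquad \bA_\e = \e\bigl(\bE(t)-\bz\bigr) + r_\e,
\end{equation*}
with $|r_\e|\leq c'\e^{2}$ uniformly on $K\times[0,T]$. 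Consequently $|\bA_\e|\leq c_\delta$ for $\e$ small, and \eqref{quad_behavior} combined with the elementary identity $|\bA_\e|^{2}_{\bbC} = \e^{2}|\bE(t){-}\bz|^{2}_{\bbC} + O(\e^{3})$ gives, after division by $\e^{2}$,
\begin{equation*}
\left|\tfrac{1}{\e^{2}}\haz\We(\one + 2\bA_\e) - \tfrac12|\bE(t){-}\bz|^{2}_{\bbC}\right|\leq 4\delta M^{2} + O(\e).
\end{equation*}

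Summing the two estimates and recalling that $E_\e - E_0$ is exactly the sum of these two differences, given $\eta>0$ I would first choose $\delta>0$ such that the combined $\delta$-terms are less than $\eta/2$, and then $\e_0>0$ such that the $O(\e)$ remainder is less than $\eta/2$ for all $\e<\e_0$. This yields the uniform convergence claim. The only delicate point is keeping the Taylor remainder $r_\e$ uniformly $O(\e^{2})$ on the compact set $K$; this follows from the real-analyticity of the matrix exponential and the accompanying bounds on bounded sets of $\Rzd$. Everything else is the standard two-parameter $\delta$-$\e$ bookkeeping.
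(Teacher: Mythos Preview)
Your proof is correct and follows essentially the same approach as the paper: expand the matrix exponential with a uniformly $O(\e^2)$ remainder on compact sets, apply the quadratic-behavior assumption \eqref{quad_behavior} separately to the elastic and hardening parts, and close with the standard $\delta$-then-$\e$ argument. The paper's version is slightly terser (it combines both terms in a single estimate), but the structure and the key ingredients are identical.
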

\begin{proof}
  Let $\bz\in \MM$. We have that
  $\exp(-\e \bz) = \one -\e \bz + \e^2 \bL $, where
  $\bL $ is bounded in terms of $ |\bz| $ only. In particular, we have
  that 
  \begin{align*}
       &\exp(-\e \bz)(\one{+}2\e \bE(t)) \exp(-\e \bz)= (\one
      {-}\e \bz {+} \e^2 \bL ) (\one{+}2\e \bE(t)) (\one {-}\e
      \bz {+} \e^2 \bL)\\
&= \one +\epsi (\bE(t){-}\bz) + \epsi^2 \haz \bL
  \end{align*}
where the matrix $\haz \bL$ is bounded in terms of $\|\bE\|_{L^\infty}$
and $ |\bz| $ only. Let now $\delta>0$ and  $c_\delta>0$ from
\eqref{quad_behavior} be given and let $\epsi$ so small that 
$|\epsi (\bE(t){-}\bz) {+} \epsi^2 \haz \bL|+ |\e
\bz|\leq c_\delta$. Such an $\epsi$ depends on  $\|\bE\|_{L^\infty}$
and $ |\bz| $.  Then, by \eqref{quad_behavior} we have that 
\begin{align*}
 & |E_\epsi(\bz,t){-}E_0(\bz,t)|\\
& = \left| \frac{1}{\epsi^2}\haz \We(
    \one {+}\epsi (\bE(t){-}\bz) {+} \epsi^2 \haz
    \bL)+\frac{1}{\epsi^2} \haz \Wh\big(\exp(2\epsi \bz)\big) -
    \frac12|\bE(t){-}\bz|^2_\bbC - \frac12 |\bz|^2_\bbH \right|\\
&\leq \left|\frac12|\bE(t){-}\bz {+} \epsi \haz
  \bL|^2_\bbC-\frac12|\bE(t){-}\bz|^2_\bbC\right|+\delta |
(\bE(t){-}\bz) {+} \epsi \haz \bL|^2 + \delta |\bz|^2 \leq c(\epsi{+}\delta)
\end{align*}
where the positive constant $c$ depends on  $\|\bE\|_{L^\infty}$ and $
 |\bz| $.  As $\delta>0$ is arbitrary the local uniform convergence
follows. 
\end{proof}

\begin{lemma}[Convergence of $D_\epsi$] \label{lem:G-limit-D0}
Under the assumptions of Theorem \emph{\ref{sdl}} we have that 
  $D_\e\to D_0$ pointwise and  
  \begin{align}
   \widetilde R(\haz \bz{-} \bz) = D_0(\bz,\haz \bz) \leq \liminf_{\epsi \to 0}
D_\e(\bz_\e, \haz \bz_\e) \quad
   \forall (\bz_\e,\haz \bz_\e) \to (\bz, \haz \bz),\label{gamma_inf}
  \end{align}
In particular   $D_\e\to D_0$  in the sense of $\Gamma$-convergence
  \cite{DalMaso93,DeGiorgi79}.
\end{lemma}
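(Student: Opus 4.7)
\medskip
\noindent
\textbf{Plan of proof.}

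\emph{Upper bound and pointwise convergence.} The $\limsup$-estimate is a direct consequence of Lemma~\ref{lem:D0}. Indeed, by \eqref{lipschitz_char} together with the $1$-homogeneity of $\widetilde R$,
\begin{equation*}
D_\e(\bz_\e,\haz \bz_\e) \;=\; \frac{1}{2\e}\,D\!\left(\exp(2\e \bz_\e),\exp(2\e \haz \bz_\e)\right)
\;\leq\; \frac{1}{2\e}\widetilde R\big(2\e(\haz \bz_\e{-}\bz_\e)\big) = \widetilde R(\haz \bz_\e{-}\bz_\e).
\end{equation*}
Since $\widetilde R$ is continuous on $\Rzd$, passing to the limit yields $\limsup_{\e\to 0} D_\e(\bz_\e,\haz \bz_\e)\leq \widetilde R(\haz \bz{-}\bz)=D_0(\bz,\haz \bz)$. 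Taking $\bz_\e\equiv\bz$ and $\haz \bz_\e\equiv\haz \bz$ gives the $\limsup$ side of pointwise convergence.

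\emph{Lower bound (the $\Gamma$-liminf).} Fix $(\bz_\e,\haz \bz_\e)\to(\bz,\haz \bz)$ and assume $\liminf_\e D_\e(\bz_\e,\haz \bz_\e)<\infty$ (otherwise nothing to prove). For each $\e$ choose a near-minimizing curve $\bCp_\e\in C^1(0,1;\MM)$ with $\bCp_\e(0)=\exp(2\e \bz_\e)$, $\bCp_\e(1)=\exp(2\e \haz \bz_\e)$, and
\begin{equation*}
\int_0^1 \haz R(\bCp_\e(t),\dot \bCp_\e(t))\,\d t \;\leq\; D\!\left(\exp(2\e \bz_\e),\exp(2\e \haz \bz_\e)\right)+\e^2.
\end{equation*}
Set $\bL_\e(t):=\log \bCp_\e(t)\in \Rzd$, so that $\bL_\e(0)=2\e \bz_\e$ and $\bL_\e(1)=2\e \haz \bz_\e$.

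\emph{Localization of the path near $\one$.} Combining the upper bound $D_\e\leq \widetilde R$ with $\widetilde R(\haz \bz_\e{-}\bz_\e)$ bounded, we have $\int_0^1\widetilde R(\bCp_\e^{-1}\dot \bCp_\e)\,\d t=O(\e)$. Using the pointwise lower estimate $D(\one,\bCp)\geq c|\bCp{-}\one|$ valid in a neighborhood of $\one$ (which is obtained by revisiting the argument in Lemma~\ref{lem:D0}, or by noting the equivalence of $D$ with the Euclidean distance on compact pieces of $\MM$ not touching $\partial \MM$) and the triangle inequality for $D$, a standard reparametrization/truncation argument yields $\|\bL_\e\|_{L^\infty(0,1)}=O(\e)$ along the near-optimal paths.

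\emph{Taylor expansion.} Once $\bL_\e$ is uniformly small, the derivative of the matrix exponential gives $\dot \bCp_\e=\bCp_\e\,\dot \bL_\e+\bR_\e$ where $|\bCp_\e^{-1}\bR_\e|\leq C|\bL_\e|\,|\dot \bL_\e|\leq C\e\,|\dot \bL_\e|$. Therefore $\bCp_\e^{-1}\dot \bCp_\e=\dot \bL_\e+\bS_\e$ with $|\bS_\e|\leq C\e|\dot \bL_\e|$; since $\dot \bL_\e$ is automatically trace-free, $\widetilde R$ evaluated at the two arguments is finite and by $r/2$-Lipschitzianity of $\widetilde R$ on $\Rzd$ one obtains
\begin{equation*}
\widetilde R(\bCp_\e^{-1}\dot \bCp_\e)\;\geq\;\widetilde R(\dot \bL_\e)-\tfrac{r}{2}C\e|\dot \bL_\e|.
\end{equation*}

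\emph{Conclusion via Jensen's inequality.} Since $\widetilde R$ is convex, $1$-homogeneous and $\dot \bL_\e\in \Rzd$,
\begin{equation*}
\int_0^1 \widetilde R(\dot \bL_\e)\,\d t\;\geq\;\widetilde R(\bL_\e(1){-}\bL_\e(0))=2\e\,\widetilde R(\haz \bz_\e{-}\bz_\e).
\end{equation*}
Combining with the previous inequality and using that $\int_0^1|\dot \bL_\e|\,\d t=O(\e)$ (a byproduct of the localization and the equivalence $\widetilde R\asymp |\cdot|$ on $\Rzd$),
\begin{equation*}
\int_0^1 \haz R(\bCp_\e,\dot \bCp_\e)\,\d t \;\geq\; 2\e\,\widetilde R(\haz \bz_\e{-}\bz_\e)-O(\e^2).
\end{equation*}
Dividing by $2\e$ and passing to the liminf gives $\liminf_\e D_\e(\bz_\e,\haz \bz_\e)\geq \widetilde R(\haz \bz{-}\bz)=D_0(\bz,\haz \bz)$, which is \eqref{gamma_inf}. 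Together with the upper bound this yields both pointwise convergence and the statement $D_\e\xrightarrow{\Gamma} D_0$.

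\emph{Expected main obstacle.} The delicate point is the localization step: \emph{a priori} near-optimal paths in $\MM$ joining the two endpoints could in principle wander far from $\one$ so that the linearization $\bCp^{-1}\dot \bCp \approx \dot \bL$ is not accurate uniformly along the curve. Controlling this requires combining a local lower bound for $D$ by the Euclidean distance with a truncation/reparametrization argument exploiting near-optimality; this is the only nontrivial ingredient beyond the direct application of Lemma~\ref{lem:D0} and Jensen's inequality.
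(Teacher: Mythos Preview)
Your approach is correct in outline and takes a genuinely different route from the paper. Both proofs share the upper bound via \eqref{lipschitz_char} and the selection of near-optimal curves $\bC_\e\in C^1(0,1;\MM)$; the divergence is in how the $\Gamma$-$\liminf$ is extracted.

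The paper does \emph{not} pass to $\bL_\e=\log\bC_\e$. Instead it rescales linearly, setting $\haz\bC_\e:=\one+(\bC_\e{-}\one)/(2\e)$, shows that $|\dot{\haz\bC}_\e|$ is uniformly bounded, extracts a weak-$*$ limit $\haz\bC$ in $W^{1,\infty}(0,1;\MM)$, and then invokes the abstract lower-semicontinuity tool of Lemma~\ref{lem:semicon-tool} to pass to the limit in $\int_0^1\widetilde R(\dot{\haz\bC}_\e\,\bC_\e^{-1})\,\d t$, using that $\bC_\e^{-1}\to\one$ uniformly. Jensen then gives $\int_0^1\widetilde R(\dot{\haz\bC})\,\d t\geq\widetilde R(\haz\bC(1){-}\haz\bC(0))=\widetilde R(\haz\bz{-}\bz)$. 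Your route via Taylor-expanding the derivative of the matrix exponential is more hands-on and avoids the lsc tool and Young measures entirely; it buys elementarity at the price of needing sharper pointwise control along the curve.

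On the localization you rightly flag as the main obstacle: your sketch via a local lower bound $D(\one,\bCp)\geq c|\bCp{-}\one|$ plus the triangle inequality is workable but circuitous, and the lower bound is not established in the paper. The paper's argument is more direct and needs no lower bound on $D$ at all: after reparametrizing to constant speed one writes
\[
|\bC_\e(t){-}\one|\;\leq\;\int_0^1|\dot\bC_\e\bC_\e^{-1}|\,|\bC_\e|\,\d s\;+\;|\bC_\e(0){-}\one|,
\]
bounds the integral by $\sup_s|\bC_\e(s)|$ times the total $\widetilde R$-length, controls the latter by $2D\big(\exp(2\e\bz_\e),\exp(2\e\haz\bz_\e)\big)\leq 4\e\,\widetilde R(\haz\bz_\e{-}\bz_\e)$ via \eqref{lipschitz_char}, and then absorbs $\sup|\bC_\e|\leq 3+\sup|\bC_\e{-}\one|$ for $\e$ small. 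This yields $\|\bC_\e{-}\one\|_{L^\infty}=O(\e)$ in a few lines and would equally serve your version of the proof, replacing the vaguer ``truncation/reparametrization'' step.
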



%
\begin{proof}
Let us start by proving the $\Gamma$-$\liminf$ inequality \eqref{gamma_inf}. Assume to be
given $ (\bz_\e,\haz \bz_\e) \to
(\bz, \haz \bz)$ so that, with no loss of generality, $\sup_\e D_\e
(\bz_\e,\haz \bz_\e) <\infty$. 
For all $\epsi$ small, let $\bC_\epsi
\in C^1(0,1;\MM)$ be such that
$$   (1{-}\epsi)
\int_0^1 \widetilde R(\dot \bC_\e \bC_\e^{-1}) \d t\leq D\big(\exp(2\e \bz_\e), \exp(2\e \haz \bz_\e)\big) $$
along with  $\bC_\e(0)=\exp(2\e \bz_\e)$ and $\bC_\e(1)= \exp(2\e \haz
\bz_\e)$. In particular, $\dot \bC_\e \bC_\e^{-1} \in \Rzd$ almost
everywhere. By possibly reparametrizing time, we can additionally assume
that 
$$ \widetilde R (\dot \bC_\e \bC_\e^{-1}) \leq 2 D\big(\exp(2\e \bz_\e), \exp(2\e \haz \bz_\e)\big). $$
Let us now estimate the distance of $\bC_\e$ from the
identity as follows
\begin{align*}
  &|\bC_\e(t) {-} \one| \leq \int_0^1 |\dot \bC_\e \bC_\e^{-1}|\, |\bC_\e|
  \, \d t + |\bC_\e(0){-}\one|\\
&\stackrel{\eqref{eq:tildaR}}{=} \frac{2}{r} \int_0^1 \widetilde
  R(\dot \bC_\e \bC_\e^{-1}) |\bC_\e| \, \d t + |\exp(2\e  \bz_{0\e}){-}\one|\\
&\leq \frac{2}{r}\sup_{t\in[0,1]}|\bC_\e(t)| \int_0^1 \widetilde
  R(\dot \bC_\e \bC_\e^{-1})  
  \, \d t + c\epsi \leq \frac{2}{r} \sup_{t\in[0,1]}|\bC_\e(t)| \, 2
  D\big(\exp(2\e \bz_\e), \exp(2\e \haz \bz_\e)\big) + c\epsi\\
&\stackrel{\eqref{lipschitz_char}}{\leq}   \frac{2}{r}
\sup_{t\in[0,1]}|\bC_\e(t)|\,  4 \e \widetilde R (\haz
\bz_\e{-} \bz_\e) + c\epsi  = 
  4   \epsi
\sup_{t\in[0,1]}|\bC_\e(t)| \,| \haz \bz_\e{-} \bz_\e|  + c\epsi\\
&\leq   4  \e\big(
  3{+}\sup_{t\in[0,1]}|\bC_\e(t){-}\one| \big)| \haz \bz_\e{-} \bz_\e| + c\epsi.
\end{align*}
Hence, $\bC_\e \to \one$ uniformly. Clearly $\bC_\e^{-1}= \cof
\bCp^\TT \to \cof \one^\TT = \one$ uniformly as well. Let us
now define $\haz \bC_\e = \one + (\bC_\e {-} \one)/(2\e)$ and use
$\dot{\haz \bC_\e} = \dot \bC_\e/(2\e)$ in order to compute that 
\begin{align*}
  |\dot {\haz \bC_\e}| &\leq  |\dot {\haz \bC_\e} \bC_\e^{-1}| \,
  |\bC_\e| \leq c \widetilde R (\dot {\haz \bC_\e} \bC_\e^{-1})
  =\frac{c}{2\epsi} \widetilde R (\dot \bC_\e \bC_\e^{-1}) \\
&\leq
  \frac{c}{2\epsi}  2 D\big(\exp(2\e \bz_\e), \exp(2\e \haz \bz_\e)\big)  =
  c   D_\e
(\bz_\e,\haz \bz_\e)  \leq c.
\end{align*}
Up to a not relabeled subsequence we have that $\haz \bC_\e \to
\haz \bC$ weakly star in $W^{1,\infty}(0,1;\MM)$. By making use of the
lower-semicontinuity Lemma \ref{lem:semicon-tool} we conclude that 
\begin{align*}
 & \liminf_{\e \to 0}  D_\e
(\bz_\e,\haz \bz_\e) = \liminf_{\e \to 0} \frac{1}{2\e} D\big(\exp(2\e
\bz_\e), \exp(2\e \haz \bz_\e)\big) \geq \liminf_{\e \to 0} \frac{1}{2\e} \int_0^1 \widetilde
R(\dot \bC_\e \bC_\e^{-1}) \,\d t \\
&=  \liminf_{\e \to 0} \int_0^1 \widetilde
R(\dot {\haz \bC_\e} \bC_\e^{-1})\, \d t \geq \int_0^1 \widetilde R
(\dot {\haz \bC})\,\d t \geq \widetilde R( \haz \bC(1){-}  \haz \bC(0)).
\end{align*}
Relation \eqref{gamma_inf} follows by noting that 
$$\haz \bC(1){-} \haz \bC(0) = \lim_{\e \to 0} (\haz \bC_\e(1){-} \haz \bC_\e(0))
= \lim_{\e \to 0} \frac{\exp(2\e \haz \bz_\e) {-} \exp(2\e
  \bz_\e)}{2\e} = \haz \bz {-} \bz.$$

We now turn to the proof of the pointwise convergence.  Let 
$(\bz,\haz \bz) \in\MM \times \MM$ be given
and recall that
\begin{align*}
  D_\e( \bz, \haz \bz ) =  
  \frac{1}{2\e} D(\exp(2\e \bz),\exp(2\e \haz \bz ))  
\stackrel{\eqref{lipschitz_char}}{\leq}\widetilde R(\haz
\bz{-}\bz).
\end{align*}
By using also \eqref{gamma_inf} we conclude that $D_\e(\bz_\e,\haz
\bz_\e) \to \widetilde R(\haz
\bz{-}\bz) = D_0(\haz \bz {-} \bz)$.
\end{proof}
%
%


\begin{proof}[Proof of Theorem \emph{\ref{sdl}}]
We aim at applying  the evolutive $\Gamma$-convergence Lemma \ref{lem:evol}
to the sequence of functionals $(E_\e,D_\e)$ and the corresponding limit
$(E_0,D_0)$. Note that these  are defined  on the common
state space $  \Rzd$. Conditions \eqref{Diss_ass} readily follow for
all $D_\e$ and  $D_0$. The uniform compactness of the sublevels of
$E_\e$ is a consequence of the coercivity \eqref{coercivity_of_Wh2}
whereas the smoothness and the uniform control on the powers can be obtained by arguing
along the lines of the proof of Theorem \ref{thm:existence0}, in
particular as in \eqref{power_control}.

The $\Gamma$-$\liminf$ properties \eqref{gammainfD}-\eqref{gammainfE}
follow  from Lemmas \ref{lem:G-limit-E0}-\ref{lem:G-limit-D0}. As for
the closure condition \eqref{closure} assume to be given
$\bz_\e\in \calS(t_\e)$ so that $(\bz_\e, t_\e) \to (\bz,t)$ and $\haz
\bz \in \Rzd$. Then, by choosing the constant (mutual recovery) sequence $\haz \bz_\e=\haz \bz$ we readily
compute that 
\begin{align*}
  &E_\e(\haz \bz_\e, t_\e)- E_\e(\bz_\e,t_\e) + D_\e(\bz_\e,\haz \bz_\e)
  \\
&=
  E_\e( \haz \bz_\e, t_\e)- E_\e(\bz_\e,t_\e) +
  \frac{1}{2\e}D\big(\exp(2\e\bz_\e),\exp(2\e\haz \bz)\big) \\
&\stackrel{\eqref{lipschitz_char}}{\leq} E_\e(\haz \bz, t_\e)-
E_\e(\bz_\e,t_\e) + \widetilde R (\haz \bz {-} \bz_\e).
\end{align*}
In particular, by exploiting the local uniform convergence $E_\e \to E_0$
from Lemma \ref{lem:G-limit-E0}, the smoothness of $\We$ and $\bE$,
and the continuity of $\widetilde R$, we conclude for \eqref{closure}. 
\end{proof}

Before closing this section, let us  record here the convergence
of the energy densities,  which will turn out useful in Section \ref{sec:linear}.

\begin{lemma}[Energy densities convergence]\label{lem:gammaliminfineq}
   Under the assumptions \eqref{quad_behavior} we have the continuous convergences 
  \begin{align}
&\bA_\e \to \bA \ \Rightarrow \ \frac{1}{\e^2}\We( \one {+}\e
\bA_\e) \ \to \ \frac12 |\bA|^2_{\mathbb C}, \label{Gamma_manca_We}\\
&\bB_\e \to
\bB \  \Rightarrow \ \frac{1}{\e^2} \haz \Wh(\exp(2\e \bB_\e))
 \ \to \ \frac12 |\bB|^2_{\mathbb H}. \label{Gamma_manca_Wh}
  \end{align}
Indeed, the latter convergences are locally uniform.
\end{lemma}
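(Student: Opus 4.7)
The plan is to deduce both continuous convergences directly from the quadratic behavior assumption \eqref{quad_behavior} (and its equivalent form \eqref{quad_behavior2}) by substituting $\bA = \e \bA_\e$ (respectively $\bA = \e \bB_\e$) into the inequalities, dividing by $\e^2$, and passing to the limit. Since the quadratic behavior is formulated as a quantitative estimate with a smallness parameter $\delta$, this yields precisely the continuous convergence claim with uniformity on bounded sets.

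For the first assertion, let $\bA_\e \to \bA$ in $\Rzs$, and fix $\delta>0$. Let $\tilde c_\delta>0$ be the constant from \eqref{quad_behavior2}. Since $\{\bA_\e\}$ is bounded, for all sufficiently small $\e$ we have $|\e \bA_\e| \leq \tilde c_\delta$, so
\begin{equation*}
\left|\We(\one + \e\bA_\e) - \tfrac12 \e^2 |\bA_\e|^2_\bbC \right| \leq \delta \e^2 |\bA_\e|^2.
\end{equation*}
Dividing by $\e^2$ and letting $\e \to 0$, using $\bA_\e \to \bA$ and the continuity of $\bA \mapsto |\bA|_\bbC^2$, one obtains
\begin{equation*}
\limsup_{\e \to 0} \left|\frac{1}{\e^2}\We(\one + \e\bA_\e) - \tfrac12|\bA|^2_\bbC\right| \leq \delta |\bA|^2.
\end{equation*}
Since $\delta>0$ was arbitrary, \eqref{Gamma_manca_We} follows. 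The local uniformity is immediate: if $\bA_\e \to \bA$ uniformly on a bounded subset, the threshold on $\e$ ensuring $|\e \bA_\e|\leq \tilde c_\delta$ depends only on the uniform bound, and the right-hand side $\delta|\bA_\e|^2$ is uniformly controlled.

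For the second assertion, the argument is entirely analogous, this time using \eqref{quad_behavior} directly with $\bA$ replaced by $\e \bB_\e$: for $\e$ small enough that $|\e \bB_\e|\leq c_\delta$,
\begin{equation*}
\left|\haz\Wh(\exp(2\e \bB_\e)) - \tfrac12 \e^2 |\bB_\e|^2_\bbH\right| \leq \delta \e^2 |\bB_\e|^2.
\end{equation*}
Dividing by $\e^2$, letting $\e\to 0$, and exploiting the arbitrariness of $\delta$, one concludes \eqref{Gamma_manca_Wh}, with local uniformity by the same argument.

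There is no real obstacle here, since the lemma is essentially a reformulation of the hypothesis \eqref{quad_behavior} along sequences. The only point worth remarking is the need to invoke the equivalence between \eqref{quad_behavior} and \eqref{quad_behavior2} to handle the elastic part, which has been written in terms of $\We$ rather than $\haz \We$; this equivalence has already been noted immediately after the statement of Theorem \ref{sdl}.
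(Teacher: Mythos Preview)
Your proof is correct and follows essentially the same approach as the paper: apply the quadratic-behavior estimate \eqref{quad_behavior} (respectively \eqref{quad_behavior2}) with $\bA$ replaced by $\e\bB_\e$ (respectively $\e\bA_\e$), divide by $\e^2$, and pass to the limit using the arbitrariness of $\delta$. Your write-up is in fact slightly more careful than the paper's, which treats the $\haz\Wh$ case first and then notes that the $\We$ case is analogous via \eqref{quad_behavior2}.
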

\begin{proof}

Let $\bB_\e
\to \bB$, define $M= \sup |\bB_\e|$, and fix $\delta >0$. Due to the
quadratic behavior \eqref{quad_behavior} of $\haz \Wh$, for all $\e <
\delta /M$ we have 
$$\left| \frac{1}{\e^2} \haz \Wh(\exp(2\e\bB_\e)) {-}
  \frac12|\bB_\e|^2\right| \leq \delta M^2$$
and relation \eqref{Gamma_manca_Wh} follows by passing to the
$\limsup$ in $\e$ as $\delta$ is arbitrary.
The proof of \eqref{Gamma_manca_We} is analogous and follows by
considering the equivalent \eqref{quad_behavior2}.
\end{proof}

\section{Quasistatic evolution}\label{sec:quasi}
%
We turn now to the formulation of the quasistatic evolution problem. In
particular, we address the coupling of the constitutive model at the
material point and the boundary-value problem for the quasistatic
elastic response.

\subsection{Quasistatic equilibrium system}
 Let
$\Gamma_{\rm tr}, \,\Gamma_{\rm D} \subset\partial\O$  be measurable and relatively open in
$\partial \Omega$ so that $\Gamma_{\rm D}\subset\partial\O$ has positive surface measure, $\Gamma_{\rm tr}
\cap\Gamma_{\rm D} = \emptyset$, and $\Gamma_{\rm
  tr}\cup\Gamma_{\rm D}=\partial\O$. The set $\Gamma_{\rm tr}$
represents  the portion of the boundary where
traction is exerted. On the other hand,  
$\Gamma_{\rm D}$ is the boundary part
subject to no deformation. 

We neglect inertial effects and concentrate on a
quasistatic approximation of the time evolution.  The  evolution is hence expressed via the 
equilibrium system
\begin{equation}
\nabla{\cdot} \bm{\sigma}+b(t)=0 \quad \text{in}  \ \ \Omega \times (0,T).\label{eq:eq}
\end{equation}
along with the position of the {\it total stress}
$$
\bm{\sigma}:= \partial_{\nabla y}\We(\nabla y \bCp^{-1/2}) = \partial_{\bFe}\We(\nabla y \bCp^{-1/2}) \bCp^{-1/2},
$$
and of the boundary conditions
\begin{align}
y&={\rm id} \quad \text{in} \ \ \Gamma_{\rm D} \times (0,T),\label{eq:BC1}\\
 \bm{\sigma }  \nu  &={\tau}(t)\quad \text{in} \ \ \Gamma_{\rm tr} \times (0,T) \label{eq:BC2}
\end{align}
where $b : \Omega \times (0,T) \to \Rz^3$ and $\tau : \Gamma_{\rm tr}
\times (0,T) \to \Rz^3$ are given body force and traction,
respectively, and 
$  \nu $ indicates the  outward normal to $\partial \Omega$.
Let us comment that our choice of boundary conditions is dictated by simplicity. Nonconstant
imposed boundary deformations could be considered as well by
following \cite{Fra-Mie06}.
%
\subsection{Energy}
%
We specify the {\it total energy} of the medium as  
\begin{equation}\label{eq:energy1}
 \calE( y,\bCp,t)=\calW( y,\bCp)-\l\ell(t), y\r
\end{equation}  
namely as the sum of the {\it stored} energy $\calW( y,\bCp)$ and the
 work of  {\it external action}  $\l\ell(t), y\r$. 

The  stored energy $\calW( y,\bCp)$ is defined as 
\begin{equation}\label{eq:int-energy} 
 \calW( y,\bCp)=\int_\O
\left(\,\We(\nabla y\bCp^{-1/2})+\haz \Wh(\bCp)+\frac{\mu}{r}|\nabla\bCp|^r\,\right)\dx
\end{equation}
and results form the sum of the {\it elastic}, the  {\it plastic},
 and a
{\it gradient} energy term. The first two terms  have already been  introduced above.

As anticipated in the Introduction, at the quasistatic evolution
level the presence of the gradient term in
$\nabla \bCp$ with $\mu>0$ and $r \geq 1$ sets our problem within the class of gradient
  plasticity models \cite{Fleck1,Fleck2,Aifantis91}. A gradient term $\nabla \bFp$ is 
considered in the quasistatic evolution analysis in
\cite{Mai-Mie09} as well. Although the compactifying effects of the two terms $\nabla \bCp$ and
$\nabla \bFp$  are  comparable, note that such terms deliver different contributions. Indeed, one can compute 
\begin{align*}
  |\nabla \bCp| = |\nabla (\bFp^\TT\!\bFp)| = |(\bFp^\TT \nabla
  \bFp^\TT)^{\TT}  + \bFp^\TT \nabla \bFp| \leq |\bFp^\TT \nabla
  \bFp^\TT|+ |\bFp^\TT \nabla \bFp| \leq 2 |\bFp|\, |\nabla \bFp|.
\end{align*}
In particular, the term $|\nabla \bFp|$ controls $|\nabla \bCp|$
for bounded $\bFp$. On the other hand, the term $|\nabla \bCp|$
vanishes on $\SO$, hence cannot control $|\nabla
\bFp|$.

The  evolution is steered  by  external actions.  In
particular, the  external work reads  
\begin{equation*}\label{eq:ell}
 \l\ell(t), y\r:=\int_{\O} b(x,t){\cdot}  y(x)\, \dx+ \int_{\Gamma_{\rm tr}}\tau(x,t){\cdot} y(x)\,
\d S
\end{equation*}
where $\d S$ denotes the surface measure on $\partial \Omega$.
%
 
%
\subsection{Flow rule}
%
Let us introduce a notation for the {\it total energy density}  in \eqref{eq:int-energy} as
\begin{equation*}\label{eqn:W_C_Cp}
\haz W(\bC,\bCp,\nabla\bCp)= \widehat{\We}(\bCp^{-1/2}\bC\bCp^{-1/2})
 +\haz \Wh(\bCp)+\frac{\mu}{r}|\nabla\bCp|^r.
\end{equation*}
Then, the {\it second Piola-Kirchhoff stress} tensor $\bS$ is again
defined as
\begin{equation*}\label{eqn:piola2}
\bS:=2 \partial_{\bC}\haz W(\bC,\bCp,\nabla\bCp)=\bCp^{-1/2}{:} 2\,\partial_{\bCe}\widehat{\We}(\bCe){:}\bCp^{-1/2}\in\Rzs.
\end{equation*}
Let again $\bT$ denote the thermodynamic force conjugated to
$\bCp$. Because of the
gradient term $\nabla\bCp$, the partial derivative is replaced by the functional
variation and $\bT$ can be split into a {\it local} and {\it nonlocal} part
as follows
\begin{equation*}\label{eqn:T}
\bT:=-\delta_{\bCp}\haz W= -\partial_{\bCp} \haz
W+\nabla{\cdot}\partial_{\nabla\bCp}\haz W\in\Rzs.
\end{equation*}
Here, the symbol $\delta_{\bCp}$ refers to some suitable functional variation.
The flow rule \eqref{eq:flowrule22} takes here the form
\begin{equation*}\label{eq:flowrule2}
\partial_{\dot\bCp}\haz R(\bCp,\dot \bCp)  +  \delta_{\bCp}\haz W(y,\bCp) \ni \boldsymbol 0
\end{equation*}
where
$$
\delta_{\bCp}\haz W=\partial_{\bCp} W_e(\nabla
y\bCp^{-1/2})+\partial_{\bCp}\haz W_h(\bCp)-\mu\nabla{\cdot}( |\nabla\bCp|^{r-2}\nabla\bCp).
$$
Owing to the above introduced notation, we can formally summarize the
relations governing the quasistatic evolution problem as
\begin{align}
&\nabla{\cdot} \big(\partial_{\bFe} \We(\nabla y\bCp^{-1/2})
\bCp^{-1/2}\big)+ b(t)={\boldsymbol 0},\label{eq:diff-evolution}\\
& \partial_{\dot\bCp}\haz R(\bCp,\dot\bCp)+\partial_{\bCp}\We(\nabla
 y\bCp^{-1/2})+\partial_{\bCp}\haz \Wh(\bCp)-\mu\nabla{\cdot}( |\nabla\bCp|^{r-2}\nabla\bCp)\ni 0, \label{eq:diff-evolution2}
\end{align}
to be complemented with boundary and initial conditions.
This strong formulation \eqref{eq:diff-evolution}-\eqref{eq:diff-evolution2} of the quasistatic evolution problem seems at present inaccessible
from the point of view of the existence of solutions.  We are
hence forced in resorting to a weak-solution concept.

%
\section{Energetic solvability of the quasistatic-evolution problem}\label{sec:energetic}
%

The aim of this section  is to present  the {energetic
  formulation} of system
\eqref{eq:diff-evolution}-\eqref{eq:diff-evolution2}.  
We start by introducing suitable functional spaces for the state
variables. Let  the coefficients 
$q_y,q_p,r>1$ be given,  specific assumptions are introduced
below.   We will ask  that   the
deformation $y$  belongs  to
\begin{equation*}\label{eq:calY}
\calY:=\{y\in W^{1,q_y}(\O,\R^3)\;|\; y ={\rm id} \ \text{on} \
\Gamma_{\rm D}\}, 
\end{equation*}
The state space for $\bCp$ is then defined by
\begin{equation*}\label{eq:spaceCp}
\calZ=\{\bCp\in L^{q_p}(\O,\R^{3\times 3})\cap W^{1,r}(\O,\R^{3\times
  3}) \;|\; \bCp(x)\in\MM \ \text{for a.e.} \ x \in \Omega\}.
\end{equation*}
$\calY$ and $\calZ$ are weakly closed subsets of separable and
reflexive Banach spaces.  Finally, we  set $\calQ := \calY \times \calZ$. 

The total energy $\calE: \calY\times \calZ \times [0,T] \to (-\infty,\infty]$ has been introduced in \eqref{eq:energy1}. As for the
dissipation metric we let $\calD: \calZ \times \calZ \to [0,\infty]$
be defined by 
\begin{equation*}\label{eq:D1}
\calD(\bCp,\hbCp):=\int_\Omega D(\bCp(x),\hbCp(x))\, \d x
\end{equation*}
where $D$ is the dissipation metric introduced in \eqref{eq:D}. The
local Lipschitz continuity of $D$, see Lemma \ref{lem:D0}, translates
to an analogous statement for $\calD$.



The main result of this section reads as follows.

\begin{theorem}[Energetic solvability of the quasistatic
  system]\label{teo:existence1} Assume polyconvexity of $\We$ and
  coercivity of $\We$ and $\haz \Wh$, namely
  \begin{align} 
  & \We(\bFe)=\mathbb W(\bFe,{\rm cof}\bFe,\det\bFe) \quad \forall \bFe \in
    \GLp\nonumber\\
& \text{for some} \
    \ \bbW:\R^{3\times 3}\times \R^{3\times 3}\times
    \Rz_+\rightarrow\R \ \ \text{convex}\label{eq:poly},\\[2mm]
&\We(\bFe)\geq c_4 |\bFe |^{q_e} - \frac{1}{c_4},\qquad \haz \Wh(\bCp)\geq
c_4 |\bCp |^{q_p}  -\frac{1}{c_4} \label{coercivity_joint}
  \end{align}
for some positive constant $c_4$, and  $q_e, \, q_p >1$.  Moreover, assume that  
\begin{equation}\label{eq:indices}
\frac{1}{q_y}=\frac{1}{q_e}+\frac{1}{2q_p},\quad q_y>3,\qquad r>1.
\end{equation}
Eventually, let $\ell\in C^1([0,T];(W^{1,q_y}_{\Gamma_{\rm
    D}}(\O;\R^3))^*)$ and $(y_0,\bC_{\rm p,0})\in \calS(0)$ where $\calS(t)$
denotes the set of stable states at time $t \in [0,T]$ with respect to
$(\calQ,\calE,\calD)$ at time $t$.
 Then, there
exists an energetic solution corresponding to $(\calQ,\calE,\calD)$
starting from $(y_0,\bC_{\rm p,0})$.
More precisely, for all partitions
$\{0=t_0^k<t_1^k<\ldots<t_{N^k}^k=T\}$  with time step
$\tau^k=\max(t^k_i{-}t^k_{i-1})  $ the incremental minimization problems 
\begin{align*}
(y_i,\bC_{{\rm p},i})&=\textrm{\rm Argmin}\left\{\calE(y,\bCp,t_i^k) +
  \calD(\bC_{{\rm p},i-1},\bC_{{\rm p},i})\;|\; (y,\bCp) \in \calQ
\right\}\nonumber\\
& \text{for} \
i=1,\ldots,N^k
\label{IMP2}
\end{align*}
admit a solution $ \{(y_0,\bC_{{\rm p},0}),(y_1^k,\bC_{{\rm p},1}^k),\dots,(y_{N^k}^k,\bC_{{\rm
    p}, N^k}^k)\}$ and, as $\tau^k \to 0$, the corresponding piecewise
backward-constant interpolants $t \mapsto (\overline
y^k(t),\obCp^k(t))$  on the partition admit a not relabeled subsequence such that,
for all $t\in [0,T]$,
\begin{align*} &(\overline y^k(t),\obCp^k(t) )\to (y(t),\bCp(t)), \quad \textrm{\rm Diss}_{\calD,[0,t]}( \obCp^k) \to
\textrm{\rm Diss}_{\calD,[0,t]}(\bCp), \\
& \calE(\overline y^k(t), \obCp^k(t),t) \to
 \calE(y(t),  \bCp(t),t), \end{align*}
and $\partial_t \calE( \overline y^k(\cdot),\obCp^k(\cdot),\cdot)
\to \partial_t \calE(y(t),\bCp(\cdot),\cdot)  $ in $L^1(0,T)$ where
$(y,\bCp)$ in an energetic solution  corresponding to
$(\calQ,\calE,\calD)$. 
\end{theorem}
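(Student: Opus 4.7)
The plan is to invoke the abstract existence Lemma \ref{lem:abstract-ex} applied to $(\calQ,\calE,\calD)$, where I equip $\calQ=\calY\times\calZ$ with the natural weak topology: $y_n\weakto y$ in $W^{1,q_y}(\Omega;\R^3)$ and $\bCp_n\weakto\bCp$ in $W^{1,r}(\Omega;\R^{3\times 3})\cap L^{q_p}(\Omega;\R^{3\times 3})$, with $\bCp_n,\bCp\in\MM$ a.e. It then remains to verify the three blocks of assumptions \eqref{Diss_ass}, \eqref{En_ass}, and \eqref{closure_ass}; existence of an energetic solution and convergence of the incremental scheme follow immediately.

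The dissipation properties in \eqref{Diss_ass} are inherited by integration from Lemma \ref{lem:D0}: the triangle inequality, symmetry \eqref{eq:R1}, and the local Lipschitz bound on $D$ give the analogous properties for $\calD$, while nondegeneracy uses injectivity of $\log$ on $\MM$. For the energy, coercivity \eqref{coercivity_joint} provides an $L^{q_e}$ bound on $\bFe=\nabla y\bCp^{-1/2}$ and an $L^{q_p}$ bound on $\bCp$; by H\"older with the index relation \eqref{eq:indices} this yields an $L^{q_y}$ bound on $\nabla y=\bFe\bCp^{1/2}$ and, via Poincar\'e (using $y={\rm id}$ on $\Gamma_{\rm D}$), a $W^{1,q_y}$ bound on $y$. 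The gradient term yields a $W^{1,r}$ bound on $\bCp$, so sublevels are bounded in a reflexive product; combined with the weak lsc proved below they are weakly compact. The load term is controlled by a fraction of the stored energy through $\|\ell(t)\|_*\|y\|_{W^{1,q_y}}\leq c(1{+}\calW)^{1/q_y}$, yielding a lower bound on $\calE$. Since $\partial_t\calE(y,\bCp,t)=-\langle\dot\ell(t),y\rangle$, the power control in \eqref{En_ass} follows from $|\partial_t\calE|\leq c\|\dot\ell(t)\|_*(1{+}\calE)^{1/q_y}\leq c(1{+}\calE)$, with continuity on sublevels being clear.

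The main obstacle is the closure of stable states \eqref{closure_ass}, which splits into weak lower semicontinuity of $\calE$ and the construction of a mutual recovery sequence. For lsc, Rellich--Kondrachov applied to $\bCp_n$ bounded in $W^{1,r}$ yields strong $L^r$ convergence, hence $\bCp_n\to\bCp$ a.e.\ on a subsequence; since $\bCp_n\in\MM$ and the square root and its inverse are continuous on $\MM$, also $\bCp_n^{\pm 1/2}\to\bCp^{\pm 1/2}$ a.e. Fatou combined with continuity of $\haz\Wh$ handles $\int_\Omega\haz\Wh(\bCp)\,\d x$, convexity handles $\int_\Omega|\nabla\bCp|^r\,\d x$, and for the elastic term I exploit the identities $\det\bFe=\det\nabla y$ and $\cof\bFe=\cof(\nabla y)\bCp^{1/2}$ (valid since $\det\bCp=1$): by $q_y>3$ the minors of $\nabla y_n$ converge weakly to those of $\nabla y$, and the a.e.\ convergence of $\bCp_n^{\pm 1/2}$ together with equi-integrability from the $L^{q_p}$ bound permits passage to the limit via polyconvexity \eqref{eq:poly}, as in \cite{Mai-Mie09,Fra-Mie06}. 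For the mutual recovery sequence, given a test pair $(\haz y,\haz\bCp)\in\calQ$, I set $\haz y_n:=\haz y$ and adopt the Mielke-type ansatz
\begin{equation*}
\haz\bCp_n:=\bCp_n^{1/2}\bCp^{-1/2}\haz\bCp\bCp^{-1/2}\bCp_n^{1/2},
\end{equation*}
which is symmetric, positive definite, has unit determinant, and converges to $\haz\bCp$ in $W^{1,r}\cap L^{q_p}$ by the chain rule using the strong convergence of $\bCp_n$. The pointwise estimate \eqref{lipschitz_char} gives $\calD(\bCp_n,\haz\bCp_n)\to\calD(\bCp,\haz\bCp)$, while the asymptotic cancellations between elastic and hardening densities evaluated at $(y_n,\bCp_n)$ and at $(\haz y,\haz\bCp_n)$ yield the required limsup inequality. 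Once Lemma \ref{lem:abstract-ex} applies, the existence and convergence claims of the theorem follow.
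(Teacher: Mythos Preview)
Your overall plan---verify \eqref{Diss_ass}, \eqref{En_ass}, \eqref{closure_ass} and invoke Lemma~\ref{lem:abstract-ex}---matches the paper, and your treatment of coercivity (via H\"older with the index relation~\eqref{eq:indices}) and of weak lower semicontinuity of $\calE$ (compact embedding $W^{1,r}\subset\subset L^r$, a.e.\ convergence of $\bCp_n^{\pm1/2}$, weak continuity of minors of $\nabla y_n$ for $q_y>3$, polyconvexity) is essentially what the paper does in Lemmas~\ref{lem:coercive}--\ref{lem:lsc}.

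Where you diverge, and where the proposal has a real gap, is in the closure of stable states. You set up a Mainik--Mielke type mutual recovery sequence $\haz\bCp_n=\bCp_n^{1/2}\bCp^{-1/2}\haz\bCp\,\bCp^{-1/2}\bCp_n^{1/2}$, but the paper does \emph{not} need any such construction: it simply notes that $\calD$ is \emph{continuous} (not merely lower semicontinuous) along weak $\calZ$-convergence. Indeed, $\bCp_n\weakto\bCp$ in $W^{1,r}\cap L^{q_p}$ implies $\bCp_n\to\bCp$ strongly in $L^s$ for some $s\geq 1$ by compact embedding, hence a.e.\ on a subsequence; the local Lipschitz bound and the linear growth estimate~\eqref{eq:D-pointbound0} for $D$ then give $\calD(\bCp_n,\haz\bCp)\to\calD(\bCp,\haz\bCp)$ by dominated convergence. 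Combined with weak lsc of $\calE$ this yields \eqref{closure_ass} directly, taking the constant competitor $\haz q_n=\haz q$.

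Your multiplicative ansatz, by contrast, is not only unnecessary but also not justified as written. You claim $\haz\bCp_n\to\haz\bCp$ in $W^{1,r}$ ``by the chain rule'', but $\nabla\haz\bCp_n$ contains factors like $\nabla(\bCp_n^{1/2})$, for which you only have weak convergence; strong $W^{1,r}$ convergence would require strong convergence of $\nabla\bCp_n$, which you do not have. Likewise, the ``asymptotic cancellations'' you invoke for the energy differences are not made precise, and the gradient term $\int_\Omega|\nabla\haz\bCp_n|^r\,\d x$ in particular will not simply cancel against $\int_\Omega|\nabla\bCp_n|^r\,\d x$. The paper's route via continuity of $\calD$ sidesteps all of this.
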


Before moving on let us  comment that  an elastic energy density
$\We$ satisfying
\eqref{eq:poly}-\eqref{coercivity_joint}  can be found, for instance, within the
class of
\emph{Ogden materials}    \cite[Sec. 4.9]{Ciarlet} corresponding indeed to
the choice
\begin{align*}
&\We(\bFe):=\widehat{\We}(\bCe):=\sum_{i=1}^{n}a_i \tr \bCe^{\g_i/2}+\sum_{j=1}^{m}b_j \tr (\cof
\bCe)^{\delta_j/2}+\Gamma(\det\bCe^{1/2}),\nonumber \\
&n,m\geq 1 ,\quad a_i,b_j>0,\quad\g_i,\delta_j\geq 1,\quad s\mapsto\Gamma(s) \textrm{ convex
on }(0,\infty), \quad \lim_{s\rightarrow 0^+}\Gamma(s)=\infty.
\end{align*}
Clearly, frame indifference and isotropy are fulfilled. Moreover the function
$\We(\bFe)$ is 
polyconvex as $\g_i,\delta_j\geq 1$. Note  that $\g_i\geq 1$ for
some $i$ 
implies  
$q_e\geq 2$  in condition \eqref{coercivity_joint}.

We shall now prepare some lemmas to be used in the proof of Theorem
\ref{teo:existence1}.   

\begin{lemma}[Coercivity of the energy]\label{lem:coercive}
Under the assumptions of Theorem
\emph{\ref{teo:existence1}}, the energy $\calE$ is coercive in the following sense
\begin{equation}\label{eq:coercive}
\calE(y,\bCp,t)\geq c_5\|y\|^{q_y}_{W^{1,q_y}}+c_5\|\bCp\|^{q_p}_{L^{q_p}}-\frac{1}{c_5}
\end{equation}
where $c_5$ is a positive constant.
\end{lemma}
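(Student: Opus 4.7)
The plan is to bound below, in turn, each of the three contributions to $\calW(y,\bCp)$, and then absorb the external-work term by Young's inequality. The entire argument hinges on the scaling condition $1/q_y = 1/q_e + 1/(2q_p)$ from \eqref{eq:indices}, which is exactly what makes Hölder's inequality applicable to the factorization $\nabla y = \bFe \bCp^{1/2}$.

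First I would exploit the pointwise identity $\nabla y = \bFe \bCp^{1/2}$ and the bound $|\bCp^{1/2}|\le c|\bCp|^{1/2}$ (a consequence of equivalence of matrix norms on $\R^{3\times 3}$) to deduce the pointwise estimate $|\nabla y|^{q_y} \le c\,|\bFe|^{q_y}|\bCp|^{q_y/2}$. Integrating over $\Omega$ and applying Hölder's inequality with conjugate exponents $q_e/q_y$ and $2q_p/q_y$ (whose reciprocals sum to $1$ precisely because of \eqref{eq:indices}), followed by Young's inequality, yields
\begin{equation*}
\int_\Omega |\nabla y|^{q_y}\dx \le c_6 \int_\Omega |\bFe|^{q_e}\dx + c_6\int_\Omega |\bCp|^{q_p}\dx.
\end{equation*}

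Combining this with the coercivity bounds \eqref{coercivity_joint} applied inside the energy, and splitting the lower bound on $\We$ and $\haz\Wh$ into two equal halves (one half used to dominate $\|\nabla y\|^{q_y}_{L^{q_y}}$ via the above, the other half retained to dominate $\|\bCp\|^{q_p}_{L^{q_p}}$ directly), I obtain
\begin{equation*}
\calW(y,\bCp)\ \ge\ c_7\|\nabla y\|^{q_y}_{L^{q_y}} + c_7\|\bCp\|^{q_p}_{L^{q_p}} + \tfrac{\mu}{r}\|\nabla \bCp\|^r_{L^r} - c_7^{-1}.
\end{equation*}
Next, I upgrade $\|\nabla y\|_{L^{q_y}}$ to $\|y\|_{W^{1,q_y}}$ by writing $y = \mathrm{id} + u$ with $u\in W^{1,q_y}_{\Gamma_{\rm D}}$ and invoking Poincaré's inequality on $u$ (valid because $\Gamma_{\rm D}$ has positive surface measure), so that $\|y\|_{L^{q_y}} \le c(\|\nabla y\|_{L^{q_y}} + 1)$.

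Finally, the external-work term satisfies $|\langle \ell(t),y\rangle|\le \|\ell\|_{C^0([0,T];(W^{1,q_y}_{\Gamma_{\rm D}})^*)}\|u\|_{W^{1,q_y}} + c$, and Young's inequality absorbs this into a small fraction of $\|y\|^{q_y}_{W^{1,q_y}}$ at the price of an additive constant. Rearranging, \eqref{eq:coercive} follows. No step is truly delicate: the main point requiring attention is the matching of indices in Hölder's inequality via \eqref{eq:indices}; once this is in place the argument reduces to Young, Poincaré, and absorption, each of them routine.
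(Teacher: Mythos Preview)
Your proposal is correct and follows essentially the same route as the paper: both exploit the factorization $\nabla y=\bFe\bCp^{1/2}$ together with the index relation $1/q_y=1/q_e+1/(2q_p)$ to trade $\|\nabla y\|_{L^{q_y}}^{q_y}$ for $\|\bFe\|_{L^{q_e}}^{q_e}+\|\bCp\|_{L^{q_p}}^{q_p}$ via Young's inequality, then invoke the coercivity of $\We,\haz\Wh$ and absorb the linear external term. The only cosmetic difference is that the paper applies Young's inequality pointwise before integrating, whereas you first integrate and apply H\"older, then Young on the resulting norms; also, the paper cites ``Korn'' where you (more accurately) invoke Poincar\'e for the passage from $\|\nabla y\|_{L^{q_y}}$ to $\|y\|_{W^{1,q_y}}$.
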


\begin{proof}
From the coercivity assumption \eqref{coercivity_joint}, an
application of Young's inequality gives
\begin{align*}
& \frac{1}{c_4}\We(\bFe)+\frac{1}{c_4^2}\geq |\nabla y\bCp^{-1/2}|^{q_e}\geq \left(|\nabla
y|/|\bCp^{1/2}|\right)^{q_e}\geq 3^{-1/2}(|\nabla
y|^{q_e}/|\bCp|^{q_e/2})\geq \\
&3^{-1/2}\left((1+t)\delta^{t/(t+1)}|\nabla y|^{q_e/(t+1)}-t \delta
|\bCp |^{q_e/2t}\right)
\end{align*}
for any $\delta$ and $t$ positive. Moreover, $\haz \Wh(\bCp)\geq c_4 |\bCp
|^{q_p}-1/c_4$. By taking
$\delta$ sufficiently small and $t$ such that $q_e/2t=q_p$, we obtain
$$
\We(\bFe)+\haz \Wh(\bCp)\geq c  |\nabla y |^{{2q_e
    q_p}/{(2q_p+q_e)}}+c  |\bCp |^{q_p} - c,
$$
therefore
$$
\calW(y,\bCp)\geq c\|\nabla y\|^{q_y}_{L^{q_y}}+
\|\bCp\|^{q_p}_{L^{q_p}} -c.
$$
As 
$
|\langle\ell,y\rangle |\leq \|\ell\|\|y\|_{W^{1,q_y}}
$,
by virtue of Korn's inequality the statement follows.
\end{proof}

 Note   that indeed the coercivity lower bound
\eqref{eq:coercive} holds under the weaker condition $
{1}/{q_y}={1}/{q_e}+{1}/{(2q_p)}<1$ as well.
%
%


We shall now prove the weak semicontinuity of the energy. To establish
it, 
the following Lemma \ref{lem:minors} on the convergence of minors is
needed. We recall that for all $\bA\in \Rzn$, we have a $3\times 3$ matrix of order-one minors: $\bbM_1(\bA)=\bA$, a
$3\times 3$ matrix of order-two minors: $\bbM_2(\bA)=\textrm{cof} \bA$, and a scalar minor of
order three $\bbM_3(\bA)=\textrm{det}\bA$. By introducing the shorthand notation
$\bbM(\bA)=(\bbM_1(\bA),\bbM_2(\bA),\bbM_3(\bA))$, according to assumption \eqref{eq:poly} we can
write  $\We (\nabla y\bCp^{-1/2}) =\bbW(\bbM(\nabla y\bCp^{-1/2}))$. 
The next Lemma is a particular case of \cite[Prop. 5.1]{Mie08}.

\begin{lemma}[Convergence of minors]\label{lem:minors}
Let $y_k\rightharpoonup y$ in $W^{1,q_y}(\Omega;\Rz^3)$ and
$\bFpk\rightarrow {\bFp}$ in $L^p(\Omega;\SL)$ and
\begin{equation}\label{eq:index-lem}
 q_y > 3,\quad  \frac{1}{q_y}+\frac{2}{p}\leq 1.
\end{equation}
Then,
\begin{equation*}
 \mathbb M(\nabla y_k\bFpk^{-1})\rightharpoonup \bbM(\nabla
 y\bFp^{-1})\textrm{ in } L^1(\Omega;\Rzn{\times} \Rzn {\times} \Rz).
\end{equation*}
\end{lemma}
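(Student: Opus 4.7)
The plan is to decompose $\bbM(\nabla y_k \bFpk^{-1})$ into products of a weakly converging minor of $\nabla y_k$ and a strongly converging polynomial expression in the entries of $\bFpk$, and then pass to the limit by standard weak-strong products in Lebesgue spaces. The crucial algebraic simplification comes from the isochoric constraint $\det \bFpk = 1$, which gives $\bFpk^{-1}=(\cof \bFpk)^{\TT}$ and hence $\cof(\bFpk^{-1})=\bFpk^{\TT}$. Using also the multiplicativity $\cof(\bA\bB)=\cof\bA\,\cof\bB$ and $\det(\bA\bB)=\det\bA\det\bB$, this yields
\[
\bbM_1 = \nabla y_k\, \bFpk^{-1}, \qquad \bbM_2 = (\cof \nabla y_k)\, \bFpk^{\TT}, \qquad \bbM_3 = \det \nabla y_k.
\]

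For the elastic factors, I would invoke Ball's classical weak continuity of null Lagrangians \cite{Ball77} to deduce, from $y_k \weak y$ in $W^{1,q_y}(\Omega;\Rz^3)$ with $q_y>3$, that $\cof \nabla y_k \weak \cof \nabla y$ in $L^{q_y/2}$ and $\det \nabla y_k \weak \det \nabla y$ in $L^{q_y/3}$, both target spaces being subsets of $L^1$. For the plastic factors, I would pass to a subsequence so that $\bFpk\to\bFp$ almost everywhere with a common $L^p$-majorant. Since $\bFpk^{-1}$ is a quadratic polynomial in the entries of $\bFpk$ with $|\bFpk^{-1}|\leq c|\bFpk|^2$, dominated convergence then gives $\bFpk^{-1}\to \bFp^{-1}$ strongly in $L^{p/2}$, while $\bFpk^{\TT}\to \bFp^{\TT}$ strongly in $L^p$ is trivial.

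It then remains to close the weak-strong product in each minor via H\"older's inequality. The factor $\bbM_1$ lies in $L^{r_1}$ with $1/r_1=1/q_y+2/p$ and converges weakly there; this exponent is exactly controlled by \eqref{eq:index-lem}. For $\bbM_2$ one finds $1/r_2=2/q_y+1/p$, and combining the bound $1/p\leq (1-1/q_y)/2$ coming from \eqref{eq:index-lem} with $q_y>3$ yields $1/r_2\leq 3/(2q_y)+1/2\leq 1$, hence weak convergence in $L^{r_2}\hookrightarrow L^1$. Finally $\bbM_3$ converges weakly in $L^{q_y/3}\hookrightarrow L^1$ by Ball's theorem directly. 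Assembling the three componentwise limits delivers the claim. The genuine obstacle is the exponent bookkeeping involving the three minors simultaneously; once the isochoric trick $\cof \bFpk^{-1}=\bFpk^{\TT}$ and the inequality $2/q_y+1/p\leq 1$ are in place, the result is a direct combination of Ball's theorem with weak-strong closure.
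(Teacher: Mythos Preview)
Your proof is correct and follows essentially the same route as the paper: the same algebraic identity $\bbM(\nabla y_k\bFpk^{-1})=(\nabla y_k(\cof\bFpk)^\TT,\ (\cof\nabla y_k)\bFpk^\TT,\ \det\nabla y_k)$ via $\det\bFpk=1$, Ball's weak continuity of minors for $q_y>3$, strong convergence of the plastic factors in $L^p$ and $L^{p/2}$, and the weak--strong product lemma under H\"older pairing, with the same verification that $2/q_y+1/p\leq 1$ follows from the hypotheses. The only cosmetic difference is that you obtain $\bFpk^{-1}\to\bFp^{-1}$ in $L^{p/2}$ by passing to a subsequence and invoking dominated convergence (which is fine, since the limit is uniquely identified), whereas the paper simply asserts $\cof\bFpk\to\cof\bFp$ in $L^{p/2}$ directly from the quadratic polynomial structure.
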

\begin{proof}
Since det$\bFpk=1$, we have
\begin{equation} \label{mino}
\mathbb M(\nabla y_k\bFpk^{-1})=\Big(\nabla y_k(\textrm{cof}\bFpk)^\TT, \textrm{cof}(\nabla
y_k)\bFpk^\TT,\textrm{det}(\nabla y_k)\Big).
\end{equation}
The desired convergence is obtained from the fact that 
$$
\frac{1}{\rho}+\frac{1}{\sigma}\leq 1 , \quad  \bA_k\overset{L^{\rho}}{\rightharpoonup} \bA,\quad \bB_k\overset{L^{\sigma}}{\rightarrow}
\bB \ \Rightarrow \ 
\bA_k \bB_k \overset{L^{1}}{\rightharpoonup} \bA \bB,
$$
applied to the three minors  in \eqref{mino}.  The classical weak continuity of gradient
minors \cite{Ball77}  for $q_y>3$ and 
$y_k\rightharpoonup y \textrm{ in } W^{1,q_y}$ yield
$$
 \nabla y_k\overset{L^{q_y}}\rightharpoonup
\nabla y,\quad \textrm{cof}(\nabla
y_k) \overset{L^{q_y/2}}\rightharpoonup \textrm{cof}(\nabla
y),\quad \textrm{det}(\nabla y_k)\overset{L^{q_y/3}}\rightharpoonup \textrm{det}(\nabla
y).
$$
Moreover, we clearly have that
$$
\bFpk^\TT\overset{L^p }{\longrightarrow} \bFp^\TT,\quad
\cof\bFpk\overset{L^{p/2}}{\longrightarrow} \cof\bFp. 
$$
In order to conclude, the following conditions on indexes need to be checked
$$
\quad q_y>3, \quad \frac{1}{q_y}+\frac{2}{p}\leq 1,\quad
\frac{2}{q_y}+\frac{1}{p}\leq 1.
$$
The first two are \eqref{eq:index-lem} and the last one is a direct consequence
of these.
\end{proof}
We can now establish the weak lower semicontinuity of the energy functional.
\begin{lemma}[Lower semicontinuity of the energy]\label{lem:lsc}
Under the assumptions of Theorem
\emph{\ref{teo:existence1}} the energy $\calE$ is 
weakly lower semicontinuous.
\end{lemma}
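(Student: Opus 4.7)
The plan is to establish weak lower semicontinuity of each of the four terms in
$$\calE(y,\bCp,t)=\int_\Omega \We(\nabla y\,\bCp^{-1/2})\dx+\int_\Omega \haz\Wh(\bCp)\dx+\frac{\mu}{r}\int_\Omega|\nabla\bCp|^r\dx-\langle\ell(t),y\rangle$$
separately. Fix $y_k\weak y$ in $\calY$ and $\bCpk\weak \bCp$ in $\calZ$. The linear functional $y\mapsto\langle\ell(t),y\rangle$ is weakly continuous on $W^{1,q_y}$, while the gradient term is weakly lower semicontinuous because $\xi\mapsto|\xi|^r$ is convex and $\nabla\bCpk\weak\nabla\bCp$ in $L^r$. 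Thus only the elastic and plastic bulk terms require genuine work.

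The crucial preliminary is to upgrade weak to strong convergence of $\bCpk$ and of its square root. First I would combine the $W^{1,r}$ bound with Rellich--Kondrachov and the $L^{q_p}$ bound (via Lyapunov interpolation) to obtain, along a subsequence, $\bCpk\to\bCp$ strongly in $L^s$ for every $s<q_p$, and $\bCpk(x)\to\bCp(x)$ for a.e.\ $x\in\Omega$. Continuity of the matrix square root on $\GLps$ (Appendix~A) then gives $\bFpk:=\bCpk^{1/2}\to\bFp:=\bCp^{1/2}$ a.e. Since $|\bFpk|^2=\tr\bCpk\le\sqrt{3}\,|\bCpk|$, the family $(|\bFpk|^p)_k$ is bounded in $L^{2q_p/p}$, hence equi-integrable, for every $p<2q_p$; Vitali's theorem therefore yields $\bFpk\to\bFp$ strongly in $L^p(\Omega;\SL)$ for every such $p$. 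Since $\det\bFpk=1$, also $\bFpk^{-1}=(\cof\bFpk)^\TT$ converges strongly in $L^{p/2}$.

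For the plastic term, the a.e.\ convergence $\bCpk\to\bCp$ together with the continuity of $\haz\Wh$ on $\MM$ gives $\haz\Wh(\bCpk)\to\haz\Wh(\bCp)$ a.e., and non-negativity of $\haz\Wh$ combined with Fatou's lemma delivers the semicontinuity. For the elastic term, the conditions \eqref{eq:indices} ensure that $q_p>q_y/(q_y-1)$; indeed $1/q_y=1/q_e+1/(2q_p)$ forces $q_p>q_y/2$, and $q_y>3$ gives $q_y/2>q_y/(q_y-1)$. One can therefore choose $p\in[2q_y/(q_y-1),\,2q_p)$, which meets the hypotheses of Lemma~\ref{lem:minors} and yields $\bbM(\nabla y_k\,\bFpk^{-1})\weak \bbM(\nabla y\,\bFp^{-1})$ in $L^1(\Omega)$. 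Polyconvexity \eqref{eq:poly} allows to write the elastic energy density as $\bbW\circ\bbM$ with $\bbW$ convex and, by the coercivity \eqref{coercivity_joint}, bounded from below; the classical Ioffe--Olech lower semicontinuity theorem for convex integrands of weakly $L^1$-converging vector fields then concludes. The main obstacle is reconciling the exponent accounting in \eqref{eq:indices} with the index condition of Lemma~\ref{lem:minors}; once this is sorted out, each term of $\calE$ is handled by standard convexity-plus-compactness arguments.
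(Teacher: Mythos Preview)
Your proof is correct and follows essentially the same route as the paper: upgrade $\bCpk\weak\bCp$ to strong $L^s$ convergence via the $W^{1,r}$ compactness, pass to strong convergence of the square roots in $L^p$ for $p<2q_p$, invoke Lemma~\ref{lem:minors} for the minors of $\nabla y_k\,\bCpk^{-1/2}$, and conclude via polyconvexity. The only cosmetic differences are that the paper handles $\int_\Omega\haz\Wh(\bCp)\dx$ and the gradient term jointly by citing a standard semicontinuity theorem (rather than Fatou plus convexity separately), and verifies the index condition $1/q_y+2/p\le1$ by a slightly different chain of inequalities; your derivation of $q_p>q_y/(q_y{-}1)$ from $q_y>3$ and $1/(2q_p)<1/q_y$ is an equally valid route to the same conclusion.
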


\begin{proof} 
Let $(y_k,{\bCpk})\rightharpoonup (y,\bCp)$ in  $\calQ$.
The compact embedding of
$W^{1,r}\subset\subset L^r$ and the weak convergence of
${\bCpk}\rightharpoonup \bCp$ in $L^{q_p}\cap W^{1,r}$ entail strong convergence in
$L^s$, for all $s\in[1,q_p)$  if $r<q_p$ and all $s\in [1,r]$ if
$r\geq q_p$.  The term 
$$\bCp \mapsto \int_\Omega \left(\haz \Wh(\bCp)+\frac{\mu}{r}|\nabla\bCp|^r\right)\dx$$
is hence lower semicontinuous (see for example  \cite[Thm. 1.6,
p. 9]{Struwe}).  

The convergence in measure of ${\bCpk}$ implies the
convergence in measure of
${\bCpks}$. This follows from the local Lipschitz continuity result  in Lemma
\ref{lem:app-Lip}.
Convergence in measure and the $L^{2q_p}$
boundedness of $\bCpks$, then yield the strong convergence
$\bCpks\rightarrow {\bC}^{1/2}_{\rm p}$ in $ L^p$
for $p\in[1,2q_p)$. 
 Thanks to Lemma \ref {lem:minors}, we have
$$
\mathbb M(\nabla y_k \bCpk^{-1/2})\rightharpoonup \bbM(\nabla
y\bCp^{-1/2})\textrm{ in } L^1.
$$
In fact, condition \eqref{eq:index-lem} is easily verified by checking that
$$
\frac{1}{q_y}+\frac{2}{2q_p}\overset{\eqref{eq:indices}}{=}\frac{1}{q_e}+\frac{3}{2q_p}
\overset{\eqref{eq:indices}}{<}1-\frac{2}{q_e},
$$
which implies ${1}/{q_y}+{2}/{p}\leq 1$ for $p=2q_p-\varepsilon\in[1,2q_p)$ and
$\varepsilon>0$ sufficiently small. As $\We(\nabla y \bCp^{-1/2})=\bbW(\mathbb M(\nabla
y\bCp^{-1/2}))$, with $\bbW$ convex, the lower semicontinuity of the
elastic energy term follows. Eventually, the time-dependent linear
term is weakly continuous.
\end{proof}

We are now ready to prove the existence of energetic solution for the
quasistatic evolution problem.

\begin{proof}[Proof of Theorem \emph{\ref{teo:existence1}}]
%
We aim at checking the assumptions of the abstract
existence Lemma \ref{lem:abstract-ex}. The structural conditions
\eqref{Diss_ass} follow directly from the properties of the density $D$.
Thanks to Lemmas \ref{lem:coercive} and \ref{lem:lsc}, the sublevels of
energy are bounded and weakly closed in $\calQ$, whence the
compactness  follows. The  power reads
$$
\partial_t \calE(q,t)=-\langle \dot\ell,y\rangle,\quad
\dot\ell \in C^0([0,T];(W^{1,q_y}_{\Gamma_{\rm D}}(\O,\R^3))^*).
$$
 In particular,
assumption \eqref{En_ass} follows.
Eventually, the closure of stable states \eqref{closure_ass} is a
consequence of the lower semicontinuity of the energy $\calE$ and the
continuity of the dissipation $\calD$. 
\end{proof}

Before moving on, let us remark that the weaker assumption
 \begin{equation*}
  \ell \in W^{1,1}([0,T];(W^{1,q_y}_{\Gamma_{\rm D}}(\O,\R^3))^*)
 \end{equation*}
on external actions would also suffice to prove an existence result. In this case, the
abstract result of Lemma \ref{lem:abstract-ex} has to be adapted 
and indeed simplified  to
the case of
linear applied loads,  see for instance \cite{AuMieSte08}.

%
\section{Small-deformation limit for quasistatic
evolution}\label{sec:linear}
%
Let us now turn to the proof of a linearization result for quasistatic
evolutions. The aim here is to present the three-dimensional
version of the main  result of Section \ref{sec:lin_const}, namely Theorem
\ref{sdl}. Again, this consists in a variational convergence 
argument. 

In
the stationary, multidimensional framework, the seminal contribution in this respect is
\cite{DNP02} where a variational justification of
linearization in elasticity is provided. Successive refinements
\cite{Agostiniani11}  and extensions
\cite{Paroni-Tomassetti09,Paroni-Tomassetti11,Schmidt08} of the
argument have been presented.  
In the evolutive case, the first result in plasticity with hardening is in
\cite{MieSte13}. Linearized plate models have been
derived from finite plasticity  in \cite{Davoli14,Davoli14b} and perfect plasticity is
considered in \cite{Giacomini13}. We follow here the general strategy
of \cite{MieSte13}, by adapting indeed many technical points to the
present symmetric situation. In particular, the choice of rescaled
variables and functionals is here different  from  \cite{MieSte13}
 and especially tailored to
cope with the nonlinear structure of $\MM$.  Note additionally
that \cite{MieSte13} proves the convergence of finite-plasticity
trajectories, whose existence under the assumptions of that paper is
not known.  The situation is here different
as energetic solutions do exist under the  assumptions of 
Theorem \ref{teo:existence1}. 
The price to pay
here is that we have to discuss the limiting behavior of the gradient
term in $\nabla \bCp$. In particular, our small-deformation analysis
is restricted to the case $r=2$ in \eqref{eq:int-energy} for the
quadratic character of the gradient energy term plays a crucial role.

Letting $\e>0$ and $(y,\bCp) \in \calQ$ be given we introduce the
equivalent variables
\begin{equation}\label{forma2}
u =  \frac{1}{\epsi}(y {-} {\rm id}), \quad \bz = \frac{1}{2\e}\log
 \bCp
\end{equation}
as well as the rescaled functionals for $r=2$
\begin{align*}
\calE_\e(u, \bz,t) &= \int_\Omega W_\e(\bE_\e,\bz) \,\d x +
\frac{\mu}{2\e^2}\int_\Omega |\nabla \exp(2\e \bz)|^2 - \langle\ell(t), u
\rangle,\quad \\
\calD_\e (\bz_1,\bz_2) &= \int_\Omega D_\e (\bz_1,\bz_2) \,\d x
\end{align*}
where the rescaled Green-Saint Venant strain $\be_\e$ is given by 
$$\bE_\e = \frac{1}{2\e}( (\one{+}\e \nabla u)^\TT
(\one{+}\e \nabla u) {-} \one) = \nabla u^{\rm sym} + \frac{\epsi}{2}
\nabla u^\TT \nabla u.$$

Assume now to be given $\ell \in C^1([0,T]; (W^{1,q_y}_{\Gamma_{\rm D}}
(\Omega; \Rz^3)))^*)$ and initial values $\bz_{0\e}$ such that
$\exp(2\epsi \bz_{0\e})\in \calS(0)$ where $\calS(t)$ denotes stable
states at time $t \in [0,T]$ corresponding to $(\calQ,\calE/ \e^2 ,\calD/(2\e))$. Owing to Theorem
\ref{teo:existence1} there exists an energetic solution
$(y_\e,\bC_{\rm p\e})
$ corresponding to $(\calQ,\calE/ \e^2 ,\calD/(2\e))$. Correspondingly, by
defining $(u_\e,\bz_\e)$ from $(y_\e,\bC_{\rm p\e})$ via
\eqref{forma2} we readily find that $(u_\e,\bz_\e)$ is an energetic
solution corresponding to $(\calQ_0,\calE_\e,\calD_\e)$ where the
space $\calQ_0$ can be chosen as
$$ \calQ_0 =  H^1_{\Gamma_{\rm D}} (\Omega;\Rz^3) \times H^1(\Omega;\Rzd)$$
by simply extending trivially the functionals. We shall refer to
$(u_\e,\bz_\e)$ as {\it finite-plasticity quasistatic evolutions} and
denote the corresponding set of stable states at time $t \in [0,T]$ by $\calS_\e(t)$.

We  are here  concerned with the  convergence  of finite-plasticity quasistatic evolutions
$(u_\e,\bz_\e)$ to the unique solution $(u,\bz)$ of
the linearized elastoplasticity system corresponding to
$(\calQ_0,\calE_0,\calD_0)$ where 
  \begin{align*}
\calE_0(u, \bz,t) &= \int_\Omega W_0(\nabla u^{\rm sym}, \bz) \,\d x +
2 \mu \int_\Omega |\nabla \bz |^2 - \langle\ell(t), u
\rangle,\quad \\
\calD_0 (\bz,\haz \bz) &= \int_\Omega D_0(\bz,\haz \bz) \, \d x =
\int_\Omega r|\bz_1 {-}\bz_2| \,\d x.
\end{align*}
In particular, $(u,\bz)$ is the unique strong solution of the
equilibrium system \eqref{eq:eq} with ${\boldsymbol \sigma} = \mathbb{C} (\nabla u^{\rm sym}{-}\bz)$ and boundary conditions
\eqref{eq:BC1}-\eqref{eq:BC2}, and of the constitutive
relation 
$$R\partial |\dot \bz| + (\mathbb{H}{+}\mathbb{C})\bz - 4\mu \Delta
\bz\ni \mathbb{C} \nabla u^{\rm sym}$$
along with the homogeneous Neumann condition $\mu \nabla
\bz   \nu  = {\boldsymbol 0}$ and an initial condition for
$\bz $ \cite{Han-Reddy}. We
refer to $(u,\bz)$ as {\it linearized-plasticity quasistatic evolution}.

We now state our main convergence result.

\begin{theorem}[Small-deformation limit of the quasistatic
  evolution]\label{sdl2}  Let the compact set  
\begin{equation}\label{eq:K}
 K:=\{\bC\in\MM:\;|\bC|\leq r\},\quad r^2>3 
\end{equation}
be given and assume $\haz \Wh$ to be coercive in the following sense
\begin{equation}
  \label{eq:32}
  \haz \Wh(\bCp) < \infty  \ \Leftrightarrow \  \bCp, \, \bCp^{-1} \in K.
\end{equation}
Moreover, assume the control \eqref{Kirchhoff}, the quadratic behavior at identity \eqref{quad_behavior}, and let
\begin{equation}\label{dist}
 \quad\We(\bF)\geq c_6\, \textrm{\rm dist}^2(\bF,\SO)\quad  \forall \bF\in\GLp
\end{equation}
for some positive constant $c_6$. Let $(u_\e, \bz_\e)$
  be finite-plasticity quasistatic evolutions starting from well-prepared initial data $(u_{0\e},\bz_{0\e}) \in \calS_\e(0)$, namely,
  \begin{equation*}\label{well-prepared2}
  (u_{0\e},\bz_{0\e})\to (u_0,\bz_0) \quad \text{and} \quad \calE_\e
  (u_{0\e},\bz_{0\e},0) \to \calE_0
  (u_{0},\bz_{0},0).
\end{equation*}
Then, for all $t\in [0,T]$,
\begin{align*}
  &(u_\e(t),\bz_\epsi(t)) \to (u(t),\bz(t)),\\ &{\rm
    Diss}_{\calD_\e,[0,t]}(\bz_\e) \to {\rm Diss}_{\calD_0,[0,t]}(\bz), \\
  &\calE_\e(u_\e(t),\bz_{\epsi}(t),t) \to \calE_0(u(t),\bz(t),t)
\end{align*}
where $(u,z)$ is the unique linearized-plasticity quasistatic
evolution starting from $(u_0,\bz_0)$.
\end{theorem}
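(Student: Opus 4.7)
The plan is to invoke the evolutionary $\Gamma$-convergence Lemma \ref{lem:evol} on the common state space $\calQ_0 = H^1_{\Gamma_{\rm D}}(\Omega;\Rz^3)\times H^1(\Omega;\Rzd)$ endowed with its weak topology, comparing $(\calE_\e,\calD_\e)$ to $(\calE_0,\calD_0)$. This requires verifying the structural conditions \eqref{Diss_ass}--\eqref{En_ass} uniformly in $\e$, the two separate $\liminf$ inequalities \eqref{gammainfD}--\eqref{gammainfE}, and the mutual-recovery-sequence condition \eqref{closure}. The main obstacle, as announced in the preamble of the section, is the correct treatment of the nonconvex gradient term $\frac{\mu}{2\e^2}\int|\nabla \exp(2\e \bz)|^2$, both in the $\liminf$ inequality and in the limsup estimate of the recovery sequence.

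For uniform compactness of sublevels of $\calE_\e$, finiteness of $\haz \Wh(\exp(2\e \bz_\e))$ together with \eqref{eq:32} forces both $\exp(2\e \bz_\e)$ and $\exp(-2\e \bz_\e)$ to lie in $K$ almost everywhere, so the spectrum of $2\e \bz_\e$ is uniformly bounded; combining with the quadratic behavior \eqref{quad_behavior} yields an $L^2$-bound on $\bz_\e$. The Fr\'echet derivative formula $\nabla \exp(2\e \bz_\e) = 2\e\int_0^1 e^{2\e s\bz_\e}\nabla \bz_\e\, e^{2\e(1-s)\bz_\e}\,\d s$, being boundedly invertible on the bounded spectrum, translates the $H^1$-bound on $\exp(2\e \bz_\e)/\e$ into an $H^1$-bound on $\bz_\e$. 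For $u_\e$, the quadratic lower bound \eqref{dist} combined with the Friesecke--James--M\"uller geometric rigidity inequality applied to $\bF_\e := \one+\e\nabla u_\e$ produces a constant rotation $\bR_\e\in \SO$ with $\|\nabla u_\e - \e^{-1}(\bR_\e-\one)\|_{L^2}^2 \leq c\calE_\e$; the Dirichlet condition on $\Gamma_{\rm D}$ forces $\bR_\e\to\one$, so $u_\e$ is $H^1$-bounded. The control of the power $\partial_t \calE_\e$ is linear and follows as in \eqref{power_control}. The $\Gamma$-$\liminf$ for $\calD_\e$ follows by integrating the pointwise estimate of Lemma \ref{lem:G-limit-D0} via Fatou applied to an a.e.-convergent subsequence. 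For $\calE_\e$, the external work is weakly continuous, the bulk density converges via Lemma \ref{lem:gammaliminfineq} (the quadratic correction $\tfrac{\e}{2}\nabla u_\e^\TT\nabla u_\e$ entering $\bE_\e$ vanishes in $L^1$ thanks to the Sobolev embedding $H^1\subset L^6$, and the weak lower semicontinuity of the convex quadratic limit $W_0(\bE,\bz)$ closes the case), and the gradient term is treated by the same Fr\'echet expansion $\tfrac{1}{2\e}\nabla\exp(2\e \bz_\e) = \nabla \bz_\e + O(\e|\bz_\e||\nabla \bz_\e|)$ in $L^2$ followed by weak $L^2$ lower semicontinuity.

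The construction of a mutual recovery sequence is the delicate step. Given $(u_\e,\bz_\e)\in\calS_\e(t_\e)\to(u,\bz,t)$ and a competitor $(\haz u,\haz \bz)\in\calQ_0$, I would set $\haz u_\e := \haz u$ and $\haz \bz_\e := \bz_\e + \delta_\e$, where $\delta_\e\in C^\infty(\overline\Omega;\Rzd)$ mollifies $\haz \bz - \bz$ so that $\delta_\e\to \haz \bz - \bz$ strongly in $H^1$; the deviatoric constraint is preserved. The dissipation cost is controlled by Lemma \ref{lem:D0}:
\[\calD_\e(\bz_\e,\haz \bz_\e) \leq \int_\Omega \widetilde R(\delta_\e)\,\d x \to \int_\Omega r|\haz \bz {-} \bz|\,\d x = \calD_0(\bz,\haz \bz).\]
The external work and bulk-density contributions behave as in the material-point analysis of Theorem \ref{sdl}. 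For the gradient energy difference one expands, using the Fr\'echet derivative of $\exp$,
\[\exp\big(2\e(\bz_\e{+}\delta_\e)\big) = \exp(2\e \bz_\e) + 2\e\,\Phi_\e[\delta_\e] + O(\e^2|\delta_\e|^2), \quad \Phi_\e[\delta]:=\int_0^1 e^{2\e s\bz_\e}\delta\, e^{2\e(1-s)\bz_\e}\,\d s,\]
with $\Phi_\e[\delta_\e]\to \haz \bz - \bz$ uniformly. Differentiating in $x$ and exploiting the smoothness of $\delta_\e$ together with strong $L^2$-convergence of $\nabla \delta_\e$, the cross term $\tfrac{2}{\e}\nabla\exp(2\e\bz_\e){:}\nabla \Phi_\e[\delta_\e]$ converges in $L^1$ to $4\nabla\bz{:}\nabla(\haz\bz{-}\bz)$, yielding
\[\frac{\mu}{2\e^2}\int|\nabla \exp(2\e \haz \bz_\e)|^2 - \frac{\mu}{2\e^2}\int|\nabla \exp(2\e \bz_\e)|^2 \to 2\mu\int|\nabla \haz \bz|^2 - 2\mu\int|\nabla \bz|^2.\]
Summing all contributions gives the limsup inequality \eqref{closure}. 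Lemma \ref{lem:evol} then produces a convergent subsequence to an energetic solution of $(\calQ_0,\calE_0,\calD_0)$; uniqueness of linearized-plasticity trajectories (the limit is a standard evolutionary variational inequality with strongly convex quadratic energy) identifies the limit as the asserted $(u,\bz)$ and forces convergence of the full family.
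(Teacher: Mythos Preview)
Your overall strategy (evolutionary $\Gamma$-convergence via Lemma~\ref{lem:evol}) and your treatment of coercivity and the two $\Gamma$-$\liminf$ inequalities are in line with the paper's Lemmas~\ref{lem:coerc}, \ref{lem:G-limit-En}, and~\ref{lem:G-limit-Dn}. The genuine gap is in the mutual recovery sequence, where two independent issues arise.

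\textbf{The plastic recovery $\haz\bz_\e=\bz_\e+\delta_\e$ need not keep the energy finite.} By the coercivity assumption~\eqref{eq:32}, $\haz\Wh(\bCp)=+\infty$ whenever $\bCp\notin K$ or $\bCp^{-1}\notin K$. The stable state $\bz_\e$ satisfies $\exp(2\e\bz_\e)\in K$, but this tensor may lie on $\partial K$ on a set of positive measure; an arbitrarily small additive perturbation $\delta_\e$ can then push $\exp(2\e(\bz_\e{+}\delta_\e))$ outside $K$, forcing $\calE_\e(\haz u_\e,\haz\bz_\e,t)=+\infty$ and making the $\limsup$ inequality vacuous. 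The paper resolves this by building a contraction $\bPi:\MM\to K$ (Appendix~B) and setting $\haz\bz_\e:=\frac{1}{2\e}\log\big(\bPi(\exp(2\e(\bz_\e{+}\tilde\bz)))\big)$; the contractivity $|\nabla(\bPi(\bC))|\le|\nabla\bC|$ is also what saves your gradient-term estimate once the projection is inserted.

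\textbf{The constant recovery $\haz u_\e=\haz u$ does not control the elastic-energy difference.} Your appeal to ``as in Theorem~\ref{sdl}'' does not transfer: in the material-point case $\bE(t)$ is a fixed smooth datum, whereas here $\nabla u_\e$ is only weakly $L^2$-convergent. On the ``bad set'' where $\e|\bA_\e|$ is not small (which has measure $\sim\e^2$ but where $\We^\e(\bA_\e)$ is uncontrolled), the quadratic approximation fails and one needs a direct estimate of $\We^\e(\haz\bA_\e)-\We^\e(\bA_\e)$. With $\haz u_\e=\haz u$ there is no relation of the form $\one+\e\haz\bA_\e=\bG_{1}(\one+\e\bA_\e)\bG_{2}$ with $\|\bG_i-\one\|_{L^\infty}\le c\e$, because $\bG_1=(\one+\e\nabla\haz u)(\one+\e\nabla u_\e)^{-1}$ depends on the unbounded $\nabla u_\e$. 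The paper instead composes deformations, $\haz y_\e:=(\mathrm{id}+\e\tilde u)\circ y_\e$, which gives $\bG_{1}=\one+\e\nabla\tilde u(y_\e)$ uniformly close to $\one$; then the Kirchhoff-type estimate $|\We(\bG_1\bF\bG_2)-\We(\bF)|\le c(1{+}\We(\bF))(|\bG_1{-}\one|+|\bG_2{-}\one|)$ from \cite{MieSte13} (which is where assumption~\eqref{Kirchhoff} actually enters this proof) bounds the bad-set contribution by $c\e$.
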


 The coercivity assumption on $\haz \Wh$ is stronger
than the former \eqref{eq:3}, \eqref{coercivity_of_Wh2}, and
\eqref{coercivity_joint}. The assumption on the shape of $K$ from \eqref{eq:K} is
of technical nature and could probably be relaxed.  Let us however stress that it arises quite naturally in
modeling pseudoplastic processes in shape memory alloys
\cite{Eva09,Eva10,reese07}.  In addition,  note that the choice for the shape of $K$ is immaterial with respect to the
linearization limit $\epsi \to 0$.

Let us start by presenting a coercivity result.

\begin{lemma}[Coercivity of the energy]\label{lem:coerc}
 Under the assumptions of Theorem \emph{\ref{sdl2}} we have that 
 \begin{equation}
  \|\nabla
u\|_{L^2}^2+\|\bz\|_{L^2}^2+\|\nabla\exp(2\e\bz)/\e\|_{L^2}^2+\e\|\bz\|_{L^\infty}\leq
c_7(1{+}\calW_\e(u,\bz)) \label{eq:coerc0}
 \end{equation}
 for all $(u,\bz) \in \calQ$   where $c_7$ is a positive constant.
\end{lemma}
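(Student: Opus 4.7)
The plan is to estimate the four summands on the left-hand side separately, with the gradient term and the $L^\infty$ bound being essentially free, the $L^2$ bound on $\bz$ requiring a quantitative lower bound on $\haz \Wh$, and the $L^2$ bound on $\nabla u$ being the main technical step. One may assume throughout that $\calW_\e(u,\bz)<\infty$, else there is nothing to prove.

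The gradient term is immediate from the definition of $\calW_\e$: the contribution $\frac{\mu}{2\e^2}\int_\Omega|\nabla\exp(2\e\bz)|^2\,dx$ gives $\|\nabla\exp(2\e\bz)/\e\|_{L^2}^2\leq(2/\mu)\calW_\e$. For the $L^\infty$ bound, finiteness of $\calW_\e$ forces $\haz\Wh(\exp(2\e\bz(x)))<\infty$ for a.e.\ $x$, so by \eqref{eq:32} both $\exp(2\e\bz(x))$ and $\exp(-2\e\bz(x))$ lie in $K$. Hence the eigenvalues of $\exp(2\e\bz(x))$ are pinned in $[r^{-1},r]$, giving $|2\e\bz(x)|=|\log\exp(2\e\bz(x))|\leq\sqrt{3}\log r$ uniformly, so $\e\|\bz\|_{L^\infty}\leq C$.

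For the $L^2$ bound on $\bz$, I would derive the pointwise quadratic coercivity
$$\haz\Wh(\exp(2\bA))\geq c\,|\bA|^2\quad\text{for all } \bA\in\Rzd \text{ with } \exp(\pm 2\bA)\in K.$$
Near $\bA=0$, this follows from the quadratic behaviour \eqref{quad_behavior} combined with positive-definiteness of $\bbH$ on $\Rzd$ (the standing kinematic-hardening assumption); on the compact region $\{|\bA|\geq c_\delta,\;\exp(\pm 2\bA)\in K\}$, continuity of $\haz\Wh$ and its strict positivity away from $\one$ supply a uniform positive lower bound which automatically majorises $c|\bA|^2$ on that bounded set. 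Substituting $\bA=\e\bz(x)$, dividing by $\e^2$, and integrating yields $\|\bz\|_{L^2}^2\leq c\,\calW_\e$.

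The $L^2$ bound on $\nabla u$ is the main obstacle. From \eqref{dist} and $\int\We(\bFe)\,dx\leq\e^2\calW_\e$ one obtains $\int\dist^2(\bFe,\SO)\,dx\leq c_6^{-1}\e^2\calW_\e$. To pass from $\bFe$ to $\nabla y=\bFe\bCp^{1/2}$ I would use the triangle-type inequality
$$\dist(\nabla y,\SO)\leq\dist(\bFe,\SO)\,|\bCp^{1/2}|+|\bCp^{1/2}-\one|,$$
together with $|\bCp^{1/2}|\leq C$ from the $L^\infty$ step, $|\bCp^{1/2}-\one|=|\exp(\e\bz)-\one|\leq C\e|\bz|$ (since $\e|\bz|\leq C$), and the $L^2$ bound on $\bz$, to conclude $\int\dist^2(\nabla y,\SO)\,dx\leq C\e^2(1+\calW_\e)$. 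Applying the Friesecke--James--M\"uller quantitative geometric rigidity theorem on the Lipschitz domain $\Omega$ supplies a rotation $\bR_\e\in\SO$ with $\int|\nabla y-\bR_\e|^2\,dx\leq C\e^2(1+\calW_\e)$. The Dirichlet condition $y=\mathrm{id}$ on $\Gamma_{\rm D}$ is then used to pin $\bR_\e$ near $\one$: setting $v(x)=y(x)-\bR_\e x$, one has $\|\nabla v\|_{L^2}^2\leq C\e^2(1+\calW_\e)$, hence by Poincar\'e and the trace theorem $\|v-\bar v\|_{L^2(\Gamma_{\rm D})}^2\leq C\e^2(1+\calW_\e)$, while on $\Gamma_{\rm D}$ the boundary condition gives $v=(\one-\bR_\e)x$; non-degeneracy of $\Gamma_{\rm D}$ in $\Rz^3$ (the fact that $\Gamma_{\rm D}$ is not contained in an affine plane) then yields $|\bR_\e-\one|^2\leq C\e^2(1+\calW_\e)$. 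Splitting $|\nabla y-\one|\leq|\nabla y-\bR_\e|+|\bR_\e-\one|$, squaring, integrating and dividing by $\e^2$ finally delivers $\|\nabla u\|_{L^2}^2\leq C(1+\calW_\e)$. The delicate point, as always in such linearisation arguments, is precisely the combined rigidity/Dirichlet-data estimate that pins the rotation near the identity.
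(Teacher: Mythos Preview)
Your proposal is correct and follows essentially the same route as the paper: the gradient term and the $L^\infty$ bound come directly from the energy and the constraint $\exp(\pm 2\e\bz)\in K$; the $L^2$ bound on $\bz$ is obtained from quadratic coercivity of $\haz\Wh$ (the paper asserts this in one line, citing ``coercivity of $\haz\Wh$'', while you spell out how it follows from \eqref{quad_behavior} near $\one$ plus compactness away); and the $L^2$ bound on $\nabla u$ is the FJM rigidity argument combined with the Dirichlet condition to pin the rotation near $\one$, exactly as in the paper, which cites \cite[Prop.~3.4]{DNP02} for the pinning step that you sketch explicitly.
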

\begin{proof}
Assume $\calW_\e(u,z)<\infty$. Then, $\exp(2\e \bz) \in K$  so 
that $\e \| \bz\|_{L^\infty} \leq c$. 
 In fact, by denoting by $\la_i$ the eigenvalues of $\bz$, 
condition $\tr \bz=0$ entails $\la_{\rm max} \geq \|
\bz\|_{\infty}/6$, hence $\| \exp(2\e \bz)\|_\infty \geq {\rm e}^{2\e
  \la_{\rm max}} \geq {\rm e}^{\e \| \bz\|_{L^\infty} /3}$.  
From  the  coercivity of $\haz \Wh $   we have that $\| \bz\|_{L^2}+ \| \nabla \exp(2\e\bz)/\e\|_{L^2}\leq
c\calW_\e(u,z)$. Note that, for all $\alpha\in \Rz$
\begin{equation}\exp(\alpha\e \bz) = \one + \alpha\e \bz + \e^2
\bz^2\bB_\e,\quad\bB_\e:=\left(\sum_{k=2}^\infty \frac{\alpha^k( \e \bz)^{k-2}}{k!}
\right).\label{eq:expa}
\end{equation}
As $  \|\e \bz\|_{L^\infty}\leq c$,  we have that 
$\|\bB_\e\|_{L^\infty}\leq c$.
%
%

The coercivity estimate on $\nabla u$ is based on a geometric rigidity
argument \cite{FJM02}  as in \cite{DNP02}. The first step is
to obtain an estimate of the distance of $\nabla y$ from $\SO$. By using
$
\bFe:=\nabla y\bCp^{-1/2}
$  and, recalling that 
$\bCp\in K$  is  bounded, one obtains
$$
|\nabla y{-}\bQ|^2=|(\bFe{-}\bQ)\bCp^{1/2}+\bQ(\bCp^{1/2}{-}\one)|^2\leq
c|\bFe{-}\bQ|^2+|\bCp^{1/2}{-}\one|^2
$$
almost everywhere.
We now use \eqref{eq:expa} with $\alpha =1$ on order to get that 
$\bCp^{1/2} = \one + \epsi \bz + \epsi^2\bz^2\bB_\e=\one + \epsi \bz \bB_\e'$, 
where $\bB_\e'=\one+(\e\bz)\bB_\e $.
Therefore
$$
|\bCp^{1/2}{-}\one|^2\leq \e^2|\bz|^2|\bB_\e'|^2
$$
and we conclude for
\begin{align*}
|\nabla y{-}\bQ|^2=|(\bFe{-}\bQ)\bCp^{1/2}+\bQ(\bCp^{1/2}{-}\one)|^2\leq
c(|\bFe{-}\bQ|^2+\e^2|\bz|^2|\bB_\e'|^2)
\end{align*}
where $ \|\bB_\e'\|_{L^\infty}\leq c$.
We now proceed as in \cite[Lemma 3.1]{MieSte13}. The last inequality
combined with the nondegeneracy condition \eqref{dist} 
yields
\begin{equation*}
 \int_\O \textrm{dist}^2(\nabla y, \SO) \dx\leq c\e^2(1{+}\calW_\e(u,\bz)).
\end{equation*}
Then, the Rigidity Lemma \cite[Thm. 3.1]{FJM02} entails
\begin{equation*}
\exists\haz\bQ\in\textrm{SO}(3):\quad \|\nabla
y{-}\haz\bQ\|_{L^2}^2\leq c\e^2(1{+}\calW_\e(u,\bz)).
\end{equation*}
As the rotation $\haz\bQ$ satisfies the estimate $|\haz\bQ{-}\one|^2\leq
c\e^2(1{+}\calW_\e(u,\bz))$ as a result of the boundary conditions,
 see 
\cite[Prop. 3.4]{DNP02},  we have
\begin{equation*}
 \e^2\|\nabla u\|_{L^2}^2=\|\nabla y{-}\one\|_{L^2}^2\leq 2\|\nabla
y{-}\haz\bQ\|_{L^2}^2+2\|\haz\bQ{-}\one\|_{L^2}^2\leq c\e^2(1{+}\calW_\e(u,\bz))
\end{equation*}
and the assertion follows.
\end{proof}

The next step toward to application of the evolutive
$\Gamma$-convergence Lemma \ref{lem:evol} is the proof of the
$\Gamma$-$\liminf$ inequalities \eqref{gammainfD}-\eqref{gammainfE}. 
We do this in the  next  two lemmas.

\begin{lemma}[$\Gamma$-$\liminf$ inequality for
  $\calE_\e$]\label{lem:G-limit-En}
 Under the assumptions of Theorem \emph{\ref{sdl2}} we have that 
   $$ \calE_0(u,\bz,t) \leq \inf \Big\{ \liminf_{\e \to 0}
  \calE_\e(u_\e,\bz_\e,t_\e) \;|\; (u_\e,\bz_\e,t_\e) \stackrel{\calQ
    \times [0,T]}{\weakto} (u,\bz,t) \Big\}.$$
\end{lemma}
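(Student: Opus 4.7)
The plan is to verify the $\Gamma$-$\liminf$ inequality separately for the four additive contributions to $\calE_\e$---the elastic, hardening, gradient, and linear terms. We may assume $\liminf_{\e\to 0}\calE_\e(u_\e,\bz_\e,t_\e)<\infty$, so that Lemma \ref{lem:coerc} yields the crucial uniform bound $\e\|\bz_\e\|_{L^\infty}\leq c$ along a (not relabeled) subsequence, in addition to $\|\nabla u_\e\|_{L^2}, \|\bz_\e\|_{L^2}, \|\nabla\exp(2\e\bz_\e)/\e\|_{L^2}\leq c$. Since the weak convergence is taken in $\calQ_0 = H^1_{\Gamma_{\rm D}}\times H^1$, the Rellich-Kondrachov theorem then gives $\bz_\e\to\bz$ strongly in $L^2$ and a.e.\ in $\Omega$ up to a further subsequence.

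The linear term passes to the limit by continuity of $\ell$ in $t$ and weak convergence of $u_\e$. For the hardening term, a.e.\ convergence of $\bz_\e$ combined with the locally uniform statement of Lemma \ref{lem:gammaliminfineq} gives the pointwise convergence $\frac{1}{\e^2}\haz\Wh(\exp(2\e\bz_\e(x)))\to\frac12|\bz(x)|^2_\bbH$; since $\haz\Wh\geq 0$, Fatou's lemma produces the desired $\liminf$ bound. For the gradient term, using Duhamel's formula
\[
\tfrac{1}{\e}\partial_i\exp(2\e\bz_\e) = 2\int_0^1\exp(2\e t\bz_\e)\,\partial_i\bz_\e\,\exp(2\e(1{-}t)\bz_\e)\,dt,
\]
the a.e.\ convergence and uniform $L^\infty$-boundedness of $\exp(2\e s\bz_\e)$ (for both $s=t$ and $s=1{-}t$) entail strong $L^p$-convergence of these exponentials to $\one$ for every $p<\infty$ by dominated convergence; together with the weak $L^2$-convergence $\nabla\bz_\e\weakto\nabla\bz$ this gives $\frac{1}{\e}\nabla\exp(2\e\bz_\e)\weakto 2\nabla\bz$ in $L^2$. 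Weak lower semicontinuity of $\|\cdot\|_{L^2}^2$ then yields $2\mu\int|\nabla\bz|^2\leq\liminf\tfrac{\mu}{2\e^2}\int|\nabla\exp(2\e\bz_\e)|^2$.

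The elastic term is the main obstacle. Set $\bG_\e := \exp(-\e\bz_\e)(\one+2\e\bE_\e)\exp(-\e\bz_\e)-\one$ and $\bM_\e := \bG_\e/(2\e)$, so that $\bM_\e = \bE_\e-\bz_\e+\e\bS_\e$ for a remainder $\bS_\e$ polynomial in $\bz_\e$ and $\bE_\e$. Partition $\Omega = A_\e^\delta\cup B_\e^\delta$ with $A_\e^\delta := \{|\bG_\e|\leq 2c_\delta\}$ for $c_\delta$ from \eqref{quad_behavior}. On $A_\e^\delta$ the quadratic behavior gives $\frac{1}{\e^2}\haz\We(\one+\bG_\e) \geq \frac12|\bM_\e|^2_\bbC - \delta|\bM_\e|^2$, whereas on $B_\e^\delta$ the nondegeneracy \eqref{dist} provides a nonnegative contribution that may be discarded in the $\liminf$. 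The delicate point is that $\nabla u_\e$ is only bounded in $L^2$, so the quadratic correction $\tfrac{\e}{2}\nabla u_\e^\TT\nabla u_\e$ in $\bE_\e$ is only $O(\e)$ in $L^1$ and need not be small in stronger norms; a Friesecke-James-M\"uller-type Lipschitz truncation of $u_\e$ at a scale $\lambda_\e\to\infty$ with $\e\lambda_\e\to 0$ (in the spirit of \cite{DNP02,MieSte13}) is therefore required to control $\bG_\e$ pointwise. After this truncation one verifies that $|B_\e^\delta|\to 0$ and that $\chi_{A_\e^\delta}\bM_\e\weakto\nabla u^{\rm sym}-\bz$ weakly in $L^2$; the minor symmetries of $\bbC$ guarantee that the quadratic form depends only on the symmetric part. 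Applying weak lower semicontinuity of the convex functional $\int\tfrac12|\cdot|^2_\bbC$ to this weak limit and then letting $\delta\to 0$ completes the bound $\int\tfrac12|\nabla u^{\rm sym}-\bz|^2_\bbC\leq\liminf\int\tfrac{1}{\e^2}\haz\We(\one+\bG_\e)$.
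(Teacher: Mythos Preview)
Your treatment of the linear, hardening, and gradient contributions is correct and broadly parallel to the paper's, with harmless variations: you use Rellich plus Fatou for the hardening term where the paper invokes the Young-measure lower-semicontinuity tool (Lemma~\ref{lem:semicon-tool}), and you use Duhamel's formula for the gradient term where the paper simply Taylor-expands $\exp(2\e\bz_\e)=\one+2\e\bz_\e+\e^2\bz_\e^2\bB_\e$ with $\|\bB_\e\|_{L^\infty}\le c$ to deduce $(\exp(2\e\bz_\e)-\one)/(2\e)\rightharpoonup\bz$ in $H^1$. Both routes work.

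The substantive divergence is in the elastic term, and here the paper's argument is considerably simpler than yours. You work with the Cauchy--Green representation $\haz\We(\one+\bG_\e)$, which forces you to confront the quadratic correction $\tfrac{\e}{2}\nabla u_\e^\TT\nabla u_\e$ inside $\bE_\e$; you then propose a Lipschitz truncation to tame it. The paper instead exploits frame indifference to rewrite the elastic integrand as
\[
\haz\We\big(\exp(-\e\bz_\e)(\one{+}2\e\bE_\e)\exp(-\e\bz_\e)\big)=\We\big((\one{+}\e\nabla u_\e)\exp(-\e\bz_\e)\big)=\We(\one{+}\e\bA_\e),
\]
where $\bA_\e=\tfrac{1}{\e}\big((\one{+}\e\nabla u_\e)\exp(-\e\bz_\e)-\one\big)$ is \emph{affine} in $\nabla u_\e$. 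A direct Taylor expansion of $\exp(-\e\bz_\e)$ then shows $\bA_\e\rightharpoonup\nabla u-\bz$ in $L^2$, and Lemma~\ref{lem:semicon-tool} applied to $W_e^\e(\bA)=\e^{-2}\We(\one{+}\e\bA)\to\tfrac12|\bA|_\bbC^2$ (convex, with $|\bA|_\bbC^2=|\bA^{\rm sym}|_\bbC^2$ by minor symmetry) finishes the job. No truncation, no splitting into good and bad sets.

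Two remarks on your proposed route. First, the Lipschitz truncation you invoke is not what \cite{DNP02} or \cite{MieSte13} use for the $\liminf$ direction; in both references it appears only in the construction of recovery sequences. Second, truncation is in fact unnecessary even within your $\bCe$-based framework: the energy bound and positivity of $\bbC$ on symmetric tensors already give $\|\chi_{A_\e^\delta}\bM_\e\|_{L^2}\le c$ directly, and $|B_\e^\delta|\to 0$ follows from $\|\bA_\e\|_{L^2}\le c$ via Chebyshev. So your sketch could be completed without truncation, but the $\bFe$-representation used in the paper removes the difficulty at its source.
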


\begin{proof}
As convergence for the linear external energy $\langle \ell(t_\e),
u_\e\rangle $ is trivial, we concentrate on the terms 
\begin{align*} I^1_\e&:=\frac{1}{\e^2} \int_\Omega \haz \We \big(\exp(-\e \bz_\e) (\one{+}2\epsi \bE_\e)
\exp(-\e \bz_\e)\big) \,\d x, \\
&= \frac{1}{\e^2} \int_\Omega  \We \big(
(\one{+}\epsi \nabla u_\e)
\exp(-\e \bz_\e)\big) \,\d x \\
I^2_\e&:= \frac{1}{\e^2} \int_\Omega \haz \Wh \big(\exp(2\e
\bz_\e)\big) \, \d x,\\
I^3_\e&:=  \frac{\mu}{2\e^2} \int_\Omega |\nabla \exp(2\e \bz_\e)|^2\,
\d x =  2\mu \int_\Omega |\nabla \exp(2\e \bz_\e)/(2\e)|^2\,
\d x .
\end{align*} 
Let
$(u_\e,\bz_\e)\overset{\calQ}{\rightharpoonup}(u,\bz)$. We can assume with no loss of generality that $\sup
\calE_\e(u_\e,\bz_\e,t_\e) <\infty$, so that the bound
\eqref{eq:coerc0} holds and $\exp(2\e \bz_\e) \in K $. In particular,
$\| \e \bz_\e \|_{L^\infty}\leq c$. 

Relation \eqref{eq:expa} for $\alpha=2$ entails that $\exp(2\e \bz_\e)
= \one + 2\e \bz_\e + \e^2\bze^2 \bB_\e$ with
$\|\bB_\e\|_{L^\infty}\leq c$.  As $\|\bz_\e\|_{L^2}\leq c$, we
compute 
$$\frac{1}{2\e} \Big( \exp(2\e \bz_\e)   {-} \one \Big)   - \bz_\e=
\frac12 \e\bz_\e^2 \bB_\e \stackrel{L^1}{\rightarrow} 0 $$
and, taking into account the $L^2$-boundedness of the same sequence, we have checked that $(\exp(2\e \bz_\e)   {-} \one)/(2\e) \weak \bz$ in
$L^2$. On the other hand, by \eqref{eq:coerc0} we also have the gradient bound $\|\nabla
(\exp(2\e \bz_\e)   {-} \one)/(2\e)\|_{L^2} \leq c$. Hence,
convergence 
\begin{align}
&\frac{1}{2\e}\Big( \exp(2\e \bz_\e)   - \one \Big) \stackrel{H^1}{\weak}  \bz\label{tocheck2}
\end{align}
 follows. In particular,
$$ 2\mu \int_\Omega |\nabla \bz|^2 \dx \leq \liminf_{\e \to 0} 2\mu
\int_\Omega |\nabla \exp(2\e \bz_\e) /(2\e)|^2 \dx  =  \liminf_{\e \to 0}
I^3_\e.$$
 Convergence 
\eqref{Gamma_manca_Wh} and Lemma \ref{lem:semicon-tool} directly entail that 
$$ \frac12  \int_\Omega | \bz|^2_{\mathbb H} \dx \leq \liminf_{\e \to 0}
I^2_\e.$$

Let us now turn to  integral $I_\e^1$.  We introduce, also for future reference, the
 shorthand notation 
\begin{equation}\label{eq:Aepsilon}
 \bA_\e=\frac{1}{\e}\Big((\one{+}\epsi \nabla u_\e)
\exp(-\e \bz_\e) - \one \Big),
\end{equation}
which allows to  rewrite $I_\e^1$ as   
$$I_\e^1=\frac{1}{\e^2}\int_\O \We(1{+}\e\bA_\e)\,\d x.$$
In view of  convergence  \eqref{Gamma_manca_We} and Lemma \ref{lem:semicon-tool}, it is sufficient to
prove 
\begin{equation}\label{eq:Aepsilon-lim}
 \bA_\e\stackrel{L^2}{\weak} \nabla u -\bz.
\end{equation}
By expanding $\exp(-\e \bz_\e) = \one{-}\e\bz_\e+\e^2\bze^2 \bB_\e$ according to 
\eqref{eq:expa} with $\alpha =-1$ we compute
\begin{align*}
 \bA_\e=(\nabla u_\e{-} \bz) +
(\bz{-}\bz_\e)+\e\bze^2\bB_\e+\e\nabla u_\e\bze(\e\bze\bB_\e{-}\one).
\end{align*}
As $\nabla u_\e \weak \nabla u$ in $L^2$, $\bz_\e \weak \bz$ in
$L^2$, and $\|\bB_\e\|_{L^\infty}+\|\e\bz_\e\|_{L^\infty}\leq c$, 
 the convergence \eqref{eq:Aepsilon-lim} follows.
\end{proof}

\begin{lemma}[$\Gamma$-$\liminf$ inequality for
  $\calD_\e$]\label{lem:G-limit-Dn}
 Under the assumptions of Theorem \emph{\ref{sdl2}} we have that 
  $$ \calD_0(\bz,\haz \bz) \leq \inf \Big\{ \liminf_{\e \to 0}
  \calD_\e(\bz_\e,\haz \bz_\e) \;|\; (\bz_\e,\haz \bz_\e)
   \weakto (\bz,\haz \bz)  \ \ \text{in} \ L^2(\Omega;(\Rzd)^2)
    \Big\}.$$
\end{lemma}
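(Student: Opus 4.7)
My plan is to combine the pointwise $\Gamma$-$\liminf$ inequality from Lemma \ref{lem:G-limit-D0} with the convexity of the limiting integrand $D_0$ in order to bridge pointwise and weak $L^2$ convergence. The key enabling fact is that the functions $D_\e : \Rzd \times \Rzd \to [0,\infty)$ are equi-Lipschitz on the whole space. Indeed, combining the triangle inequality for the Finsler distance $D_\e$ with the upper bound $D_\e(\bz_1, \bz_2) \leq \widetilde R(\bz_1 {-} \bz_2) = r|\bz_1 {-} \bz_2|/2$ established in the proof of Lemma \ref{lem:G-limit-D0} yields
$$|D_\e(\bz_1, \haz\bz_1) - D_\e(\bz_2, \haz\bz_2)| \leq D_\e(\bz_1,\bz_2) + D_\e(\haz\bz_1,\haz\bz_2) \leq \tfrac{r}{2}\bigl(|\bz_1 {-} \bz_2| + |\haz\bz_1 {-} \haz\bz_2|\bigr).$$
Together with the pointwise convergence $D_\e \to D_0$ from Lemma \ref{lem:G-limit-D0}, an Arzel{\`a}--Ascoli argument gives uniform convergence on bounded subsets of $(\Rzd)^2$: for every $M > 0$,
$$\omega_\e(M) := \sup_{|\bz|,|\haz\bz| \leq M} |D_\e(\bz,\haz\bz) - D_0(\bz,\haz\bz)| \to 0 \quad \text{as } \e \to 0.$$

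Given a weakly convergent sequence $(\bz_\e,\haz\bz_\e) \weakto (\bz,\haz\bz)$ in $L^2(\Omega;(\Rzd)^2)$, I know it is bounded in $L^2$. Setting $A_\e^M := \{x \in \Omega : |\bz_\e(x)|, |\haz\bz_\e(x)| \leq M\}$ and exploiting the linear upper bound $D_0 + D_\e \leq c |\haz\bz {-} \bz|$ on the complement $\Omega \setminus A_\e^M$ together with Chebyshev's inequality, I obtain
$$\int_\Omega |D_0 - D_\e|(\bz_\e, \haz\bz_\e)\,\d x \leq |\Omega|\,\omega_\e(M) + \frac{c}{M}\, \|\bz_\e {-} \haz\bz_\e\|_{L^2}\bigl(\|\bz_\e\|_{L^2} {+} \|\haz\bz_\e\|_{L^2}\bigr).$$
Letting first $\e \to 0$ and then $M \to \infty$, this integral tends to $0$, so that
$$\liminf_{\e \to 0} \calD_\e(\bz_\e,\haz\bz_\e) = \liminf_{\e \to 0} \int_\Omega D_0(\bz_\e,\haz\bz_\e)\,\d x.$$

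Since $D_0(\bz,\haz\bz) = r|\haz\bz {-} \bz|$ is convex in $(\bz,\haz\bz)$, the functional $\calD_0$ is weakly sequentially lower semicontinuous on $L^2(\Omega;(\Rzd)^2)$ by the classical Ioffe--Olech lower semicontinuity theorem for convex integral functionals. Combining with the previous identity yields
$$\calD_0(\bz,\haz\bz) \leq \liminf_{\e \to 0} \int_\Omega D_0(\bz_\e,\haz\bz_\e)\,\d x = \liminf_{\e \to 0} \calD_\e(\bz_\e,\haz\bz_\e),$$
which is the desired conclusion. The main technical step is the upgrade from pointwise to locally uniform convergence $D_\e \to D_0$ on bounded sets, which crucially exploits the metric/triangle-inequality structure of $D_\e$ together with the linear upper bound from Lemma \ref{lem:D0}; once this is in hand, the truncation argument handling weak $L^2$ convergence is routine.
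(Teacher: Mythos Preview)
Your argument is correct, but it follows a different route from the paper's.  The paper dispatches the lemma in one line by invoking the abstract lower-semicontinuity tool of Lemma~\ref{lem:semicon-tool}: since Lemma~\ref{lem:G-limit-D0} provides the pointwise $\Gamma$-$\liminf$ inequality $D_0(v)\leq\liminf D_\e(v_\e)$ for $v_\e\to v$, and $D_0$ is convex, the Young-measure machinery of that lemma immediately transfers this to the integral functional under weak $L^1$ (hence weak $L^2$) convergence.

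You instead exploit an additional structural feature, namely the equi-Lipschitz property of the family $(D_\e)$, which you correctly derive from the triangle inequality for the Finsler metric together with the uniform linear bound $D_\e(\bz_1,\bz_2)\leq\widetilde R(\bz_1{-}\bz_2)$ from \eqref{lipschitz_char}. This upgrades the pointwise convergence of Lemma~\ref{lem:G-limit-D0} to locally uniform convergence $D_\e\to D_0$, after which a standard truncation argument reduces matters to the weak lower semicontinuity of the convex functional $\calD_0$. Your approach is more elementary in that it avoids Young measures altogether, at the price of using a stronger property (equi-Lipschitz) specific to the present setting; the paper's approach is a black-box application of a general tool requiring only the $\Gamma$-$\liminf$ inequality plus convexity of the limit.
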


\begin{proof}
  The assertion follows by Lemma \ref{lem:G-limit-D0} by 
applying the lower semicontinuity
  tool of Lemma \ref{lem:semicon-tool}.
\end{proof}

 Having established the above $\Gamma$-$\liminf$ 
inequalities,  the
next  ingredient of the  evolutive-$\G$-convergence argument is
the  
specification of a   \emph{mutual recovery sequence}.  This is
done within the following lemma. 
\begin{lemma}[Mutual recovery sequence]\label{lem:MRS}  Under the
  assumptions of Theorem \emph{\ref{sdl2}} \linebreak let
  $(u_\e,\bz_\e)\overset{\calQ}{\rightharpoonup} (u_0,\bz_0)$ be given with
$\sup_\e\calE_\e(t,u_\e,\bz_\e)<\infty$. Moreover, let  
$$
(\widehat u_0,\widehat bz_0)=(u_0,\bz_0)+(\widetilde u,\widetilde
\bz)$$
where $
(\widetilde u,\widetilde \bz) \in C^\infty_{\rm c} (\Omega; \Rz^3)
\times C^\infty_{\rm c}(\Omega; \Rzd)$. 
Then, there exists $(\widehat u_\e,\widehat \bz_\e)\in\calQ$ such that
$(\widehat u_\e,\widehat\bz_\e)\overset{\calQ}{\rightharpoonup}
(\widehat u_0,\widehat \bz_0)$ and
\begin{align}\label{eq:limsupD}
 &\limsup_{\e\rightarrow 0}\calD_\e(\bz_\e,\widehat\bz_\e)\leq
\calD_0(\bz_0,\widehat\bz_0)=R(\widetilde\bz),\\\label{eq:limsupE}
&\limsup_{\e\rightarrow 0}\Big(\calE_\e(\widehat u_\e,\widehat\bz_\e,t)-\calE_\e(
u_\e,\bz_\e,t)\Big)\leq \calE_0(\widehat u_0,\widehat\bz_0,t)-\calE_0(
u_0,\bz_0,t).
\end{align}
\end{lemma}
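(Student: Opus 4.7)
My plan is to use the direct additive ansatz
\begin{equation*}
\widehat u_\e := u_\e + \widetilde u, \qquad \widehat \bz_\e := \bz_\e + \widetilde \bz,
\end{equation*}
which is natural given the linear structure of $\Rzd$ for the logarithmic plastic variable. Admissibility $(\widehat u_\e,\widehat\bz_\e)\in\calQ$ is immediate: $\widetilde\bz\in\Rzd$ makes $\widehat\bz_\e$ trace-free, while $\|\e\bz_\e\|_{L^\infty}\leq c$ from Lemma \ref{lem:coerc} yields $\exp(\pm 2\e\widehat\bz_\e)=\one+O(\e)$ pointwise, which lies in the interior of $K$ (since $|\one|=\sqrt 3<r$) for $\e$ small and hence has finite plastic energy through \eqref{eq:32}. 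The weak convergence $(\widehat u_\e,\widehat\bz_\e)\weakto(\widehat u_0,\widehat\bz_0)$ in $\calQ$ then follows from linearity.

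The dissipation limsup \eqref{eq:limsupD} follows from the Lipschitz characterization \eqref{lipschitz_char} combined with $1$-homogeneity of $\widetilde R$:
\begin{equation*}
D_\e(\bz_\e,\widehat\bz_\e)=\frac{1}{2\e}D\bigl(\exp(2\e\bz_\e),\exp(2\e\widehat\bz_\e)\bigr)\leq \frac{1}{2\e}\widetilde R(2\e\widetilde\bz)=\widetilde R(\widetilde\bz),
\end{equation*}
and integration gives the required bound. For the energy limsup \eqref{eq:limsupE}, the external-work contribution cancels exactly. For the quadratic gradient term I would apply the parallelogram identity $|a|^2-|b|^2=(a{-}b){:}(a{+}b)$ with $a=\nabla\exp(2\e\widehat\bz_\e)$, $b=\nabla\exp(2\e\bz_\e)$, rescaled by $(2\e)^2$. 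A direct power-series expansion of the exponential, using $\|\e\bz_\e\|_{L^\infty}\leq c$ and the $H^1$-boundedness of $\bz_\e$ derived from \eqref{eq:coerc0}, yields strong $L^2$ convergence $\nabla\bigl((\exp(2\e\widehat\bz_\e){-}\exp(2\e\bz_\e))/(2\e)\bigr)\to\nabla\widetilde\bz$, while $(\nabla\exp(2\e\widehat\bz_\e){+}\nabla\exp(2\e\bz_\e))/(2\e)\weakto 2\nabla\bz_0+\nabla\widetilde\bz$ weakly in $L^2$, as in Lemma \ref{lem:G-limit-En}. The strong-weak pairing produces exactly $2\mu\int_\Omega(|\nabla\widehat\bz_0|^2-|\nabla\bz_0|^2)\dx$.

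For the plastic and elastic contributions I plan a good-set/bad-set decomposition along the convex path $\bz_\e+s\widetilde\bz$, $s\in[0,1]$, and the corresponding path for the elastic variable $\bA_\e$ of \eqref{eq:Aepsilon}. Setting $A_\e^\delta$ to be the set where $|\e(\bz_\e{+}s\widetilde\bz)|\leq c_\delta$ and $|\e(\bA_\e{+}s(\widehat\bA_\e{-}\bA_\e))|\leq\tilde c_\delta$ hold for all $s\in[0,1]$, with $c_\delta,\tilde c_\delta$ from \eqref{quad_behavior}-\eqref{quad_behavior2}, the quadratic estimates at both endpoints yield
\begin{equation*}
\frac{1}{\e^2}\bigl[\haz\Wh(\exp(2\e\widehat\bz_\e)){-}\haz\Wh(\exp(2\e\bz_\e))\bigr]=\tfrac12\bigl(|\bz_\e{+}\widetilde\bz|^2_\bbH-|\bz_\e|^2_\bbH\bigr)+r^{\rm p}_\e,
\end{equation*}
with $|r^{\rm p}_\e|\leq c\delta(|\bz_\e|^2+|\widetilde\bz|^2)$, and analogously for the elastic difference with $\bbH$ replaced by $\bbC$. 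Exploiting $\bz_\e\weakto\bz_0$ in $L^2$, the strong convergence $\widehat\bA_\e{-}\bA_\e\to\nabla\widetilde u-\widetilde\bz$ in $L^2$, and the minor symmetries of $\bbC$ which absorb the skew part of $\nabla u$ in the limit, the principal parts converge precisely to $(\calE_0^{\rm pl}{+}\calE_0^{\rm el})(\widehat u_0,\widehat\bz_0)-(\calE_0^{\rm pl}{+}\calE_0^{\rm el})(u_0,\bz_0)$ up to an $O(\delta)$ remainder made arbitrarily small.

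The main obstacle is the contrast between the quadratic approximations \eqref{quad_behavior}-\eqref{quad_behavior2} being valid only near identity, and the bound \eqref{eq:coerc0} providing only $\|\e\bz_\e\|_{L^\infty}\leq c$ rather than smallness. On the bad plastic set $(A_\e^\delta)^c$, the Sobolev embedding $H^1\hookrightarrow L^6$ in three dimensions yields $|(A_\e^\delta)^c|\leq c\e^6/c_\delta^6$, which dominates the $O(\e^{-2})$ uniform bound on $\haz\Wh$ afforded by its continuity on the compact set \eqref{eq:K} through \eqref{eq:32}. The elastic bad set, where $|\e\bA_\e|$ exceeds $\tilde c_\delta$, is more delicate because $\bA_\e$ is only $L^2$-bounded; controlling the contribution of $(1/\e^2)\We(\one{+}\e\bA_\e)$ there requires exploiting the coercivity \eqref{dist} together with the Kirchhoff bound \eqref{Kirchhoff} to obtain a pointwise control of the elastic density in terms of its own integral, possibly through an equi-integrability or Lipschitz-truncation argument on $\nabla u_\e$.
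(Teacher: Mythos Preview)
Your additive ansatz has two genuine gaps, both already anticipated in your closing paragraph but more serious than you indicate.

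\textbf{Admissibility of $\widehat\bz_\e$.} The implication ``$\|\e\bz_\e\|_{L^\infty}\leq c$ yields $\exp(\pm 2\e\widehat\bz_\e)=\one+O(\e)$'' is incorrect: the coercivity bound \eqref{eq:coerc0} gives only that $\e\bz_\e$ is \emph{bounded} in $L^\infty$, not small. Consequently $\exp(2\e\bz_\e)$ can sit anywhere in $K$, including its boundary, and the perturbation by $\widetilde\bz$ may push $\exp(2\e(\bz_\e{+}\widetilde\bz))$ outside $K$ on a set of positive measure. By \eqref{eq:32} this makes $\haz\Wh$ infinite there, so $\calE_\e(\widehat u_\e,\widehat\bz_\e,t)=+\infty$ and the limsup inequality is vacuous. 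The paper remedies this by composing with an explicit contraction $\bPi:\MM\to K$, setting $\widehat\bz_\e:=\frac{1}{2\e}\log\bigl(\bPi(\exp(2\e(\bz_\e{+}\widetilde\bz)))\bigr)$; the contraction property $|\bPi(\bC_1){-}\bPi(\bC_2)|\leq|\bC_1{-}\bC_2|$ then controls both the dissipation and the gradient term, and the set where $\bPi$ acts nontrivially has measure $O(\e^2)$.

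\textbf{Elastic bad set.} Your additive choice $\widehat u_\e=u_\e+\widetilde u$ destroys the multiplicative structure needed to invoke \eqref{Kirchhoff}. On the bad set $\{|\e\bA_\e|+|\e\widehat\bA_\e|>\tilde c_\delta\}$ one has only $|\Omega\setminus U_\e^\delta|\leq c\e^2/\tilde c_\delta^2$, and there is no pointwise bound on $\We^\e(\widehat\bA_\e)$ to pair with it. The paper instead takes $\widehat u_\e:=u_\e+\widetilde u\circ(\mathrm{id}{+}\e u_\e)$, i.e.\ $\widehat y_\e=\widetilde y\circ y_\e$, so that $\nabla\widehat y_\e\widehat\bC_{\mathrm p\e}^{-1/2}=\bG_{1,\e}(\nabla y_\e\bC_{\mathrm p\e}^{-1/2})\bG_{2,\e}$ with $\|\bG_{i,\e}{-}\one\|_{L^\infty}\leq c\e$. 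The Kirchhoff control then yields $|\We(\bG_1\bF\bG_2){-}\We(\bF)|\leq c(1{+}\We(\bF))(|\bG_1{-}\one|{+}|\bG_2{-}\one|)$, from which the bad-set contribution is $O(\e)$ uniformly in $\delta$. With your additive ansatz the analogous factor would be $\one+\e\nabla\widetilde u\,(\nabla y_\e)^{-1}$, and $(\nabla y_\e)^{-1}$ has no $L^\infty$ bound, so the argument collapses; neither equi-integrability nor Lipschitz truncation recovers a pointwise multiplicative perturbation of $\bFe$ close to identity.

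Your treatment of the dissipation, the gradient term, and the good-set quadratic expansions is along the right lines and close to the paper's Steps~3--6; the decisive missing ingredients are the projection $\bPi$ and the compositional recovery for the deformation.
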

\begin{proof}
 We divide the proof into  several steps. 

\noindent\emph{ Step 1: Definition of the  recovery sequence.}  We
define 
\begin{align}
  \disp \widehat u_\e&:=u_\e+\widetilde u\circ(\textrm{id}{+}\e u_\e) 
  \displaystyle \label{rec_u}\\
\widehat\bz_\e&:=\frac{1}{2\e}\log\Big( \bPi \,(\exp(2\e(
\bz_\e{+}\widetilde \bz)))\Big)\label{rec_z}
\end{align}
where $  \bPi :\MM\rightarrow K$ is a contraction mapping onto the compact set
$K$ defined by \eqref{eq:K}. More precisely,  we  ask $\bPi$ to
have  the
following properties of  $ \bPi $:
\begin{equation}\label{eq:Pi}
 \bPi \left|_K\right.=\textrm{id}_K\quad \textrm{ and }\quad 
| \bPi (\bC_{\rm p1})- \bPi (\bC_{\rm
p2})|\leq|\bC_{\rm p 1}{-}\bC_{\rm p2}|\quad \forall\bC_{\rm p1},\bC_{\rm p2}\in\MM.
\end{equation}
 An explicit construction of a map $ \bPi $ fulfilling these
properties is provided in 
Appendix~B. 

The definition
of $\widehat u_\e$  can be rewritten  in terms of $\widehat
y_\e=\textrm{id}+\e \widehat u_\e$  as 
\begin{equation}\label{eq:recovery-y}
\widehat y_\e=\textrm{id}+\e u_\e +\e\widetilde u\circ (\textrm{id}{+}\e u_\e)=\widetilde
y
\circ y_\e,
\end{equation}
where  now  $\widetilde y=\textrm{id}+\e\widetilde u$. 
This choice for  the recovery sequence $\widehat u_\e $
 corresponds  to the one used  in
\cite{MieSte13}.  Note in particular that 
$$\det \nabla \widehat y_\e =\det (\nabla  \widetilde y (y_\e)
\nabla y_\e) = \det \nabla \widetilde y (y_\e) \, \det \nabla y_\e
>0$$
for small $\e$, as $\det \nabla \widetilde y (y_\e)\to 1$ uniformly.
That is
$\one +\e
\nabla \haz u_\e \in \GLp$ almost everywhere in $\Omega$ for all $\e$
sufficiently small. Moreover, we immediately check that 
\begin{equation}
  \label{eq:u-conv0}
  \haz u_\e \stackrel{L^2}{\to} \haz u_0.
\end{equation}

 The  recovery sequence $\haz\bze$ is
different from the one used in \cite{MieSte13} in two respects. 
At first, the choice is tailored to have a recovery sequence made of
{\it symmetric} tensors whereas no symmetry  of the recovery
sequence  is of course  imposed  in \cite{MieSte13}. Secondly, we address
here the additional intricacy of keeping the gradient of the
recovery sequence bounded in $L^2$ while gradient terms were not
discussed in  \cite{MieSte13}. Note that by neglecting $\log$, $\bPi$, and
$\exp$ in the definition of $\haz \bz_\e$ we would retrieve the classical
choice $\haz \bz_\e = \bz_\e+\tilde \bz$ which is well-suited for
quadratic energies in the linear-space setting \cite{Mie08}. The
actual definition of $\haz \bz_\e$ is hence an adaptation of the
latter to the nonlinear structure of $\MM$. 

 In the following we will use the  shorthand notations 
$$
\bCpe:=\exp(2\e\bz_\e),\quad
\bCpetild:=\exp(2\e(\bz_\e{+}\widetilde\bz)),\quad
\bCpehat:= \bPi (\widetilde\bC_\e)=\exp(2\e\widehat \bz_\e).
$$

\noindent\emph{ Step 2: Preliminary results.} 
By the coercivity Lemma \ref{lem:coerc}, we have the bound
\begin{equation}
\|\bze\|_{L^2}^2+\frac{1}{\e}\| \nabla \exp(2\e\bze)\|_{L^2}^2+\e\|\bze\|_{L^\infty}\leq c. 
\end{equation}
 We now use the uniform Lipschitz  continuity  of the
logarithm on  the set $K$ 
$$
2\e|\nabla\bz_\e|=|\nabla \log \bCpe|\leq c|\nabla\bCpe|.
$$
which is proved in Lemma 
\ref{lem:app-Lip} in Appendix A. In particular, we deduce that 
\begin{equation}\label{eq:gradz}
\|\nabla \bze\|_{L^2}^2\leq c. 
\end{equation} 
For any $\alpha\in\Rz$, by expanding the exponential
$\bCpetild^\alpha=\exp(2\alpha\e(\bz_\e{+}\widetilde\bz))$, we obtain the useful
 expression 
\begin{equation}\label{eq:CCtilde}
 \bCpetild^\alpha=2\alpha\e\widetilde\bz+\bCpe^\alpha+\e^2\bL_\e
\end{equation}
where
$$
\bL_\e=\sum_{n=2}^{\infty}\frac{(2\alpha)^n\e^{n-2}}{n!}\big((\bz_\e{+}\widetilde\bz)^n-\bz_\e^n
\big)
$$
satisfies the bound
\begin{equation}
\|\e\bL_\e\|_\infty+\|\bL_\e\|_{L^2}\leq c.
\label{eq:expa4}
\end{equation}
In fact, for  all  $n\geq 2$, the  highest  power of
$\bz_\e$  in $\bL_\e$ is  controlled by   $
c\e^{n-2}|\bz_\e|^{n-1}$  and
$\|\e\bze\|_\infty+\|\bze\|_{L^2}\leq c$. 
Moreover, for $\alpha=1$, we have 
\begin{equation}\label{eq:nablaC}
 \nabla\bCpetild=2\e\nabla\widetilde\bz+\nabla\bCpe+\e^2\nabla\bL_\e
\end{equation}
 and we can prove that $\|\nabla\bL_\e\|_{L^2}\leq c$.  In fact,
 the $\bze$-dependent terms in the expansion of $\bL_\e$ 
behave as  $\e^{n-2}\bze^{k} $, with $1\leq k\leq n-1$ for any
$n\geq 2$.  Correspondingly,  the gradient terms fulfill
$$
\|\e^{n-2}\nabla \bze^k\|_{L^2} \leq   k  \e^{n-2}\|(\nabla
\bze)\bze^{k-1}\|_{L^2}= k  \e^{n-k-1}\|(\nabla
\bze)(\e\bze)^{k-1}\|_{L^2}
$$
and the bound $\|\nabla\bL_\e\|_{L^2}\leq c$ follows  from  $
\|\e\bze\|_\infty+\|\nabla\bze\|_{L^2}\leq c$. 

We  next  define the sets
$$
K_\e:=\{x\in\O\;|\;\bCpetild\in K\}=\{x\in\O\;|\;\bCpehat=\bCpetild\}.
$$
In particular,  note that 
$$
\bzehat-\widetilde\bz-\bze =0 \ \ \text{on}\  K_\e.
$$
The complement of $K_\e$ has small measure. Indeed, from
\eqref{eq:expa} and \eqref{eq:CCtilde}, it follows that
$\|\bCpetild{-}\one\|_{L^2}^2\leq c\e^2$.  Moreover, one has that
 $|\bCpetild(x){-}\one|\geq  r/\sqrt{3}-1$ for
$\bCpetild(x)\in\MM {\setminus}
{K}$, that is  for  $x\in \Omega{\setminus} K_\e$.  Hence,
\begin{align*}
 |\O {\setminus} K_\e|=\int_{\Omega\setminus K_\e}\dx\leq
  \frac{1}{(r/\sqrt{3}{-}1)^2}  \frac{}{}\int_{\Omega\setminus K_\e}|\bCpetild{-}\one|^2\dx\leq
 \frac{1}{(r/\sqrt{3}{-}1)^2}  \|\bCpetild{-}\one\|_{L^2}^2\leq c\e^2.
\end{align*}

The following convergences will be used in the estimate of the $\limsup$ of 
the hardening  terms 
\begin{align}
 &\widehat\bz_\e-\bz_\e\overset{L^2}{\longrightarrow} \widetilde\bz,\label{eq:bz-conv1}\\
 &\widehat\bz_\e+\bz_\e\overset{L^2}{\rightharpoonup}
\widehat\bz_0+\bz_0.\label{eq:bz-conv2}
\end{align}
 Indeed, on  $K_\e$ we have $ \widehat\bz_\e{-}\bz_\e= \widetilde\bz$ and
$\widehat\bz_\e+\bz_\e=\widetilde\bz+2\bz_\e$ with
$\bz_\e\overset{L^2}{\rightharpoonup}\bz_0 $.  Hence, convergences
\eqref{eq:bz-conv1}-\eqref{eq:bz-conv2} follow from  
$|\O{\setminus} K_\e|<C\e^2$ and the $L^2$-boundedness of 
$\widehat\bz_\e$ and $\bz_\e$.  In particular, by taking the sum of
\eqref{eq:bz-conv1} and \eqref{eq:bz-conv2} we conclude that 
\begin{equation}\label{eq:bz-conv3}
\haz
\bz_\e \stackrel{L^2}{\weak} \haz \bz_0.
\end{equation}

\noindent\emph{ Step 3: The $\limsup$ inequality for the dissipation.} 
Let us  decompose 
\begin{align*}
 \calD_\e(\bz_\e,\widehat\bz_\e)
&=\frac{1}{2\e}\int_{\O\setminus
K_\e} 
D(\bCpe,\bCpehat)\,\d x+\frac{1}{2\e}\int_{K_\e}D\big(\exp(2\e\bz_\e),\exp(2\e
\widehat\bz_\e)\big)\,\d x.
\end{align*}
Taking into account 
the uniform Lipschitz continuity of $D$ on $K$, we have  
\begin{align*}
\frac{1}{\e} D(\bCpe,\bCpehat)&=\frac{1}{\e}D\big( \bPi (\bCpe),
 \bPi (\bCpehat)\big)\leq \frac{c}{\e} | \bPi (\bCpe){-} \bPi (\bCpehat)|\\
&\leq \frac{c}{\e}|\bCpe{-}\bCpehat|\stackrel{\eqref{eq:CCtilde}}{=}
c|2\widetilde\bz+\e\bL_\e| 
\end{align*}
 and the right-hand side  is uniformly bounded  in $L^{\infty}$. Since $ |\O{\setminus} K_\e|<c\e^2$, it
follows  that 
$$
\limsup_{\e\rightarrow 0}\calD_\e(\bz_\e,\widehat\bz_\e)\leq \limsup_{\e\rightarrow
0}\int_{K_\e}D_\e(\bz_\e,
\widehat\bz_\e) \, \d x.
$$
On the other hand, by recalling \eqref{lipschitz_char},  on the
set $K_\e$ we have that
 
$$
\frac{1}{2\e}D\big(\exp(2\e\bz_\e),\exp(2\e\haz\bz_\e)\big)\leq
\widetilde R(\haz\bz_\e{-}\bz_\e)=\widetilde R(\widetilde\bz),
$$
hence
$$
\limsup_{\e\rightarrow 0}\calD_\e(\bz_\e,\widehat\bz_\e)\leq\int_\O
R(\widetilde\bz)\,\d x=\calD_0(\bz_0,\widehat\bz_0).
$$

\noindent\emph{ Step 4: The $\limsup$ inequality for the gradient.}
We aim  at showing  that
\begin{align}
&\limsup_{\e\rightarrow 0}
\frac{\mu}{2\e^2}\left(\int_\O|\nabla \exp(
2\e\widehat\bz_\e)|^2\d x-\int_\O|\nabla \exp(2\e\bz_\e)|^2\dx
\right)\nonumber\\
&\leq 2\mu\int_\O|\nabla 
\widehat \bz_0 | ^2\d x-2 \mu \int_\O|\nabla\bz_0|^2 \dx .\label{EXP}
\end{align}
By the  contractive character of $\bPi$, see \eqref{eq:Pi}, we
have 
$$
| \nabla
\exp(2 \e \haz \bz_\e) |=|\nabla( \bPi  (\bCpetild))|\leq|\nabla \bCpetild|.
$$

By using   the decomposition  \eqref{eq:nablaC} and the bound $\|\nabla
\bCpe\|_{L^2}\leq c\e  $, from the coercivity condition \eqref{eq:coerc0} we compute
\begin{align*}
&\frac{1}{\e^2}
(\|\nabla\bCpetild\|_{L^2}^2-\|\nabla\bCpe\|_{L^2}^2)=\|2\nabla \widetilde\bz+
\e^{-1}\nabla\bCpe+\e\nabla\bL_\e\|_{L^2}^2-\|\e^{-1}\nabla\bCpe\|_{L^2}^2\\
&\leq 4\|\nabla
\widetilde\bz\|_{L^2}^2+2\e^{-1}\|\nabla\widetilde\bz\,\nabla\bCpe+\nabla\bCpe
\nabla\widetilde\bz\|_{L^1}+c\e.
\end{align*}
 Owing to convergence \eqref{tocheck2} we have that
$(2\e)^{-1}\nabla\bCpe\stackrel{L^2}{\weak}\nabla\bz_0 $, so that 
\begin{align*}
&\limsup_{\e\rightarrow 0}\left(\frac{\mu}{2\e^2}\left(
\|\nabla\bCpehat\|_{L^2}^2-\|\nabla
\bCpe\|_{L^2}^2\right)\right)\\ &\leq 2\mu\|\nabla
\widetilde\bz\|_{L^2}^2+2\mu\|\nabla\widetilde\bz\,\nabla \bz_0 +\nabla \bz_0\nabla\widetilde\bz\|_{L^1}=\\
&=2\mu\|\nabla(\widetilde\bz+\bz_0)\|_ { L^2 } ^2-2\mu\|\nabla \bz_0\|_{L^2}^2=2\mu\|\nabla \widehat\bz\|_ { L^2 } ^2-2\mu\|\nabla \bz_0\|_{L^2}^2
\end{align*}
which corresponds to \eqref{EXP}.

\noindent { \it Step 5: the $\limsup$ inequality for the elastic energies.}
Let $\bA_\e$ be defined by \eqref{eq:Aepsilon} and $\haz\bA_\e$
 have an  analogous 
expression in terms of   $\haz u_\e$ and $\haz\bz_\e$.  We
 aim at proving  
that
\begin{align*}
& \limsup_{\e\rightarrow 0}\left(\int_\O \We^{\e}(\widehat\bA_\e)\, \d x-\int_\O
\We^{\e}(\bA_\e) \, \d x\right)\\
&\leq\frac12 \int_\O \left|\nabla\widehat
u_0^{\rm sym}{-}\widehat\bz_0\right|_\bbC^2 \d x-\frac12\int_\O
\left|\nabla u_0^{\rm sym}{-}\bz_0\right|_\bbC^2 \d x,
\end{align*}
where  we have used the short-hand notation 
 $\We^\e(\bA):=\e^{-2}\We(\one{+}\e\bA)$.  We preliminarily
 observe that 
\begin{equation}\label{eq:A-L2-bound}
 \|\bA_\e\|_{L^2}+ \|\widehat\bA_\e\|_{L^2}\leq c.
\end{equation}
 Indeed, by using the decomposition  \eqref{eq:expa}  we have
$$
\bCpe^{-1/2}=\bm I+\e\bL_\e
$$
with $\bL_\e$  
satisfying the bound  \eqref{eq:expa4}.
 In particular, one has 
$$
\bA_\e=\nabla
 u_\e+\bL_\e+\nabla
 u_\e(\e\bL_\e)
$$
so that the bound for $\|\bA_\e\|_{L^2}\leq c$ follows. The
control of $\haz \bA_\e $ is analogous. On the set $K_\e$ we have that
$$\widehat \bA_\e=\nabla
\widehat u_\e+\widehat\bL_\e+\nabla
\widehat u_\e(\e\widehat\bL_\e)$$ for some $\haz \bL_\e$ fulfilling
\eqref{eq:expa4} and we use the fact that $|\Omega\setminus K_\epsi|
\leq c\epsi^2$.

We next remark that 
\begin{equation}\label{eq:u-conv}
 \nabla\widehat u_\e - \nabla u_\e\overset{L^2}{\rightarrow}\nabla\widetilde u.
\end{equation}
 Indeed,  by computing
\begin{align}
 \nabla \haz u_\epsi &= \frac{1}{\epsi} \big( \nabla
 \tilde y(y_\epsi) \nabla y_\epsi {-} \one\big) 
 = \frac{1}{\epsi}
 \big( (\one{+}\epsi \nabla \tilde u)(y_\epsi) \nabla y_\epsi {-}
 \one\big)  \nonumber\\
&= \frac{1}{\epsi}
 \big( \nabla y_\epsi {+} \epsi \nabla \tilde u(y_\epsi) \nabla y_\epsi {-}
 \one\big) = \nabla u_\epsi +  \nabla \tilde u(y_\epsi)  + \epsi  \nabla
\tilde u(y_\epsi)  \nabla u_\epsi\nonumber
\end{align}
we obtain that 
\begin{align}
\|(\nabla \haz u_\epsi {-} \nabla u_\epsi) - \nabla \tilde u\|_{L^2}
&\leq \| \nabla \tilde u (y_\epsi) {-} \nabla \tilde u\|_{L^2} +
 \| \epsi \nabla \tilde u (y_\epsi) \nabla u_\epsi\|_{L^2}
\nonumber\\
&\leq c \epsi+ c \epsi \| \nabla
u_\epsi\|_{L^2}\leq c\epsi.\nonumber
\end{align}
On the other hand, we readily check that 
$$ \nabla \haz u_\e + \nabla u_\e = 2 \nabla u_\e + \nabla \tilde
u(y_\e) + \e \nabla \tilde u(y_\e) \nabla u_\e$$
so that the convergence 
\begin{equation}
  \label{eq:u-conv2}  \nabla \haz u_\e + \nabla u_\e \stackrel{L^2}{\weakto} \nabla \haz
  u_0 + \nabla u_0  
\end{equation}
follows. By combining \eqref{eq:u-conv0} and \eqref{eq:u-conv}-\eqref{eq:u-conv2} we
obtain that 
\begin{equation}
  \haz u_\e \stackrel{H^1}{\weakto} \haz u_0.\label{eq:u-conv3}
\end{equation}
 As $\| \nabla \haz \bz_\e\|_{L^2}$
is bounded, convergences \eqref{eq:bz-conv3} and \eqref{eq:u-conv3}
entail that
$$(\haz u_\e, \haz \bz_\e) \stackrel{\mathcal Q}{\weak} (\haz u_0,\haz
\bz_0)$$
as required by Lemma \ref{lem:MRS}.
%
%
%

 We now turn to the proof of the two convergences 
\begin{align}\label{eq:strong-conv-A}
 \widehat\bA_\e-\bA_\e&\stackrel{L^2}{\rightarrow} \nabla\widetilde
u-\widetilde\bz=(\nabla\widehat
u_0{-}\widehat\bz_0)-(\nabla u_0{-}\bz_0),\\
 \widehat\bA_\e+\bA_\e&\stackrel{L^2}{\rightharpoonup} (\nabla\widehat
u_0{-}\widehat\bz_0)+(\nabla u_0{-}\bz_0).\label{eq:weak-conv-A}
\end{align}

On the
set $K_\e$ we use
\eqref{eq:CCtilde} for $\alpha=-1/2$ in order to get that 
\begin{align*}
 &\widehat\bA_\e{-}\bA_\e= \frac{1}{\e} \left((\one{+}\e\nabla
\widehat u_\e)\bCpetild^{-1/2}-(\one{+}\e\nabla
u_\e)\bCpe^{-1/2}\right)\\
&= \frac{1}{\e}  \left(\bCpetild^{-1/2}{-}\bCpe^{-1/2}
\right)+\nabla\widetilde
u\bCpe^{-1/2}+(\nabla\widehat u_\e{-}\nabla u_\e{-}\nabla\widetilde
u)\bCpe^{-1/2}+\nabla\widehat  u_\e\left(\bCpetild^{-1/2}{-}\bCpe^{-1/2}\right)\\
&=\left(-\widetilde\bz+\e\bL_\e\right)+\nabla\widetilde
u\bCpe^{-1/2}+(\nabla\widehat u_\e{-}\nabla u_\e{-}\nabla\widetilde
u)\bCpe^{-1/2}+\e\nabla\widehat
u_\e\left(-\widetilde\bz+\e\bL_\e\right) 
\end{align*} 
with $\|\bL_\e\|_{L^2}\leq c$. 
The first term  in the above right-hand side  converges $L^2$-strongly to $-\widetilde\bz$. Since 
$\bCpe^{-1/2}\overset{L^2}{\longrightarrow}\one $ by \eqref{eq:expa}, the second term
 strongly  converges in $L^2$ to
$ \nabla\widetilde u$. The last two terms are easily seen to be
strongly  $L^2 $  convergent to
zero.
Since $|\O{\setminus} K_\e|\leq c\e^2 $, the bound
\eqref{eq:A-L2-bound}  yields the convergence \eqref{eq:strong-conv-A}. 

The proof of the weak convergence  \eqref{eq:weak-conv-A} results
as a combination of the same argument for 
\eqref{eq:Aepsilon-lim} on $K_\e$  and the 
$L^2$-boundedness of $\bA_\e$ and $ \widehat\bA_\e$.

 We now   define for  all   $\delta>0$ the sets 
$$
U_\e^\delta:=\{x\in \O \; | \: |\e\bA_\e(x)|+|\e\widehat\bA_\e(x)|\leq
 \tilde c_\delta\}
$$
 with  $\tilde c_\delta$ from \eqref{quad_behavior2}.  On these sets,
 also by using the bound 
\eqref{eq:A-L2-bound}, we have 
\begin{eqnarray}
 \We^\e(\widehat\bA_\e)-\We^\e(\bA_\e)&\leq&
(1{+}\delta)|\widehat\bA_\e|_{\bbC}^2-(1{-}\delta)|\bA_\e|_{\bbC}^2\leq
|\widehat\bA_\e|_{\bbC}^2-|\bA_\e|_{\bbC}^2+2c \delta  \tilde
c^2_\delta  =\nonumber\\
&=&\frac{1}{2}(\widehat\bA_\e{-}\bA_\e){:}\bbC(\widehat\bA_\e{+}\bA_\e)+2c \delta
 \tilde
c^2_\delta \quad
\textrm{on} \quad U^\delta_\e.
\label{eq:quad-trick}
\end{eqnarray}
 We can easily estimate the measure of the sets $U_\e^\delta$ as
follows 
\begin{equation}\label{eq:O-U}
 |\O{\setminus} U_\e^\delta|=\int_{\O\setminus U_\e^\delta}\d x\leq
 \frac{1}{\tilde c_\delta^2} \int_{\O\setminus
  U_\e^\delta}(|\e\bA_\e(x)|+|\e\widehat\bA_\e(x)|)\, \d x\leq 
\frac{c\e^2}{\tilde
c^2_\delta}.
\end{equation}

Thanks  convergences  \eqref{eq:strong-conv-A} and\eqref{eq:weak-conv-A}, 
estimates \eqref{eq:quad-trick} and \eqref{eq:O-U}  entail 
\begin{align*}\nonumber
&\limsup_{\e\rightarrow
0}\int_{U_\e^\delta} 
\left(\We^\e(\widehat\bA_\e)-\We^\e(\bA_\e)\right)\,\d x\\\nonumber
&\leq \limsup_{\e\rightarrow
0}\left( 2c \delta
\tilde c^2_\delta|\O| +\frac{1}{2}\int_{U_\e^\delta}
(\widehat\bA_\e{-}\bA_\e){:}\bbC(\widehat\bA_\e{+}\bA_\e)\, \d x\right)\\\nonumber
&\leq  2c \delta
\tilde c^2_\delta|\O| +\limsup_{\e\to
0}\left(\frac{1}{2}\int_{U_\e^\delta}
(\nabla\widetilde u{-}\widetilde\bz){:}\bbC(\nabla\widehat
u_0{-}\widehat\bz_0{+}\nabla
u_0{-}\bz_0)\,\d x\right)\\
&\leq  2c \delta
\tilde c^2_\delta|\O| +\frac{1}{2}\int_\O \left(
|\nabla\widehat u_0^{\rm sym}{-}\widehat\bz_0|_\bbC^2-|\nabla
u_0^{\rm sym}{-}\bz_0|_\bbC^2\right)\d x\qquad
\end{align*}
where the  minor-symmetry  property $ \bbC\bA=\bbC\bA^{\rm
  sym}$ has been used. 

 We shall now discuss the contribution of the the sets $\Omega
\setminus U_\e^\delta $. We show that the corresponding
elastic-energies terms  are uniformly bounded by $c\e$. 
Consider relation 
$$
\nabla\widehat y_\e\bCpehat^{-1/2}=(\nabla\widehat y_\e\nabla y_\e^{-1})(\nabla
y_\e\bCpe^{-1/2})(\bCpe^{1/2}\bCpehat^{-1/2}).
$$
 This can be rewritten as 
\begin{equation*}
 1+\e\widehat\bA_\e=\bG_{1,\e}(1{+}\e\bA_\e)\bG_{2,\e}
\end{equation*}
with
\begin{equation*}
 \bG_{1,\e}:=\nabla\widehat y_\e\nabla y_\e^{-1},\qquad
\bG_{2,\e}:=\bCpe^{1/2}\bCpehat^{-1/2}.
\end{equation*}
Recalling  the choice  \eqref{eq:recovery-y}, we have 
$$
\bG_{1,\e}{-}\one=\nabla(\widetilde y\circ y_\e)\nabla y_\e^{-1}-\one=\nabla\widetilde
y-\one=\e(\nabla \widetilde u)\circ y_\e.
$$
Now we consider
\begin{eqnarray*}
\bG_{2,\e}{-}\one=\bCpe^{1/2}\left(( \bPi (\bCpetild))^{-1/2}-\bCpe^{-1/2}\right).
\end{eqnarray*}
 By using   the Lipschitz-continuity of the matrix  square
root,  see Lemma \ref{lem:app-Lip}), we have 
\begin{eqnarray*}
|\bG_{2,\e}{-}\one|\leq c| \bPi (\bCpetild){-}\bCpe|\leq c
|\bCpetild{-}\bCpe|\leq c\e|2\widetilde\bz+\e\bL_\e|.
\end{eqnarray*}
 The uniform bounds  
\begin{equation}\label{eq:G-bounds}
\|\bG_{1,\e}{-}\one\|_{L^\infty}\leq c\e,\qquad \|\bG_{2,\e}{-}\one\|_{L^\infty}\leq c\e
\end{equation}
 then follow. These bounds allow us to use
the following estimate \cite[Lemma 4.1]{MieSte13} 
\begin{align}
|\We(\bG_1\bF\bG_2){-}\We(\bF)|\leq
c(1{+}\We(\bF))(|\bG_1{-}\one|+|\bG_2{-}\one|) \quad \forall
|\bG_1|, \,  |\bG_2|\leq \delta\label{eq:MieSte13}
\end{align}
for some constants $c,\, \delta >0$. 
By combining this with  the bounds \eqref{eq:G-bounds} one has
that 
\begin{align*}
 &\int_{\O\setminus U_\e^\delta}(\We^\e(\widehat\bA_\e)-\We^\e(\bA_\e))\,\d x=\frac{1}{\e^2}
 \int_{\O\setminus U_\e^\delta}(\We(\bG_{1,\e}\bF_\e\bG_{1,\e})-\We(\bF_\e))\,\d
x\\&\stackrel{\eqref{eq:MieSte13}}{\leq}  \frac{c}{\e^2}\int_{\O\setminus
U_\e^\delta}(1{+}\We(\bF_\e))(|\bG_1{-}\one|+|\bG_2{-}\one|)\,\d
x\\
&\stackrel{\eqref{eq:G-bounds}}{\leq}\frac{c}{\e} \int_{\O\setminus U_\e^\delta}(1{+}\We(\bF_\e))\,\d
x\stackrel{\eqref{eq:O-U}}{\leq} c\e,
\end{align*}
 which completes the proof. 

\noindent \emph{Step 6: The $\limsup$ inequality for the plastic
  energy.} We aim at showing that 
\begin{equation}
 \limsup_{\e\rightarrow 0}\left(\int_\O   \Wh^{\e}(\widehat\bz_\e)\,\d x-\int_\O
 \Wh^{\e}(\bz_\e)\,\d x \right)\leq \frac12\int_\O |\widehat\bz_0|_\bbH^2
\,\d x-\frac12\int_\O
|\bz_0|_\bbH^2 \,\d x,\label{WLS}
\end{equation}
 where we have used the short-hand notation  $\Wh^{\e}(\bz):=\e^{-2}\Wh(\exp(2\e\bz)) $.
The strategy is similar to the one used for the elastic energy.
First, we define the sets 
$$
Z_\e^\delta:=\left\{x\in \O\;|\;|\e\bze(x)|+|\e\haz\bze(x)|\leq 
  c_\delta  \right\},
$$
 with $c_\delta$ from \eqref{quad_behavior}, so that   
\begin{align}\label{eq:Wh1}
\Wh^\e(\widehat\bz_\e)-\Wh^\e(\bz_\e)\leq
\frac{1}{2}(\widehat\bz_\e{-}\bz_\e){:}\bbH(\widehat\bz_\e{+}\bz_\e)+2c \delta
  c_\delta^2 \quad
\textrm{on} \ \ Z^\delta_\e.
\end{align}
 Arguing exactly as in Step 5, we can prove
that the complementary sets $\O{\setminus}
Z_\e^\delta$ fulfill  $$|\O{\setminus}
Z_\e^\delta|\leq \frac{c\e^2}{
  c_\delta^2} .$$ 
 Owing to the Lipschitz-continuity  of $\Wh$ on
$K$, the contraction property of $ \bPi $ and \eqref{eq:CCtilde}
\begin{align}\nonumber
 \int_{\O\setminus Z_\e^\delta}\left( \Wh^{\e}(\widehat\bz_\e){-}
\Wh^{\e}(\bz_\e)\right)\d x=\frac{1}{\e^2}\int_{\O\setminus Z_\e^\delta}\left(\haz
\Wh(\bCpehat){-}\haz
\Wh(\bCpe)\right)\d x\leq\\\label{eq:Wh2}
\frac{c}{\e^2}\int_{\O\setminus Z_\e^\delta}|\bCpehat{-}\bCpe|\d x\leq
\frac{c}{\e^2}\int_{\O\setminus Z_\e^\delta}|\bCpetild{-}\bCpe|\d x\leq
\frac{c}{\e^2}|\O\setminus Z_\e^\delta| \e=c\e.
\end{align}
By combining \eqref{eq:Wh1}-\eqref{eq:Wh2} and
\eqref{eq:bz-conv1}-\eqref{eq:bz-conv2}, 
 the $\limsup$ condition \eqref{WLS} follows  from  $\delta$ being
 arbitrary. 
\end{proof}

\begin{proof}[Proof of Theorem \emph{\ref{sdl2}}]  Having established Lemmas
  \ref{lem:G-limit-En}, \ref{lem:G-limit-En}, and \ref{lem:MRS}, we
  are in the position of applying
  the abstract Lemma \ref{lem:evol}. Although Lemma \ref{lem:MRS}
  deals with smooth and compactly supported competitors only, note
  that the full strength of condition \eqref{closure} can be easily recovered
  by density.

The pointwise strong convergence of $(u_\e,\bz_\e)$ and the
convergence of energies and dissipation follow at once from the uniform
convexity of the linearized energy $\calE_0$ along the same lines as
\cite[Cor. 3.8 and Cor. 3.9]{MieSte13}.
\end{proof}

\appendix

\section{Local  Lipschitz continuity}
%
 We comment here on the local  Lipschitz continuity of  
the matrix logarithm and the matrix fractional power on $\MM$. 
This is a  consequence of the unit determinant constraint, which
allows to control the  moduli  of the matrix eigenvalues and
their reciprocals  in terms of the matrix
norm.

\begin{lemma}[Local Lipschitz continuity]\label{lem:app-Lip}
  We have that  
\begin{equation}\label{eq:app}
 |\log \bC_{1}{-}\log\bC_2|\leq
c(1+(|\bC_{1}|\vee |\bC_2|)^{2})|\bC_{1}{-}\bC_2|\quad 
\forall \bC_{1},\,\bC_{2}\in\MM
\end{equation}
 for  some positive constant $c>0$. 
In particular,  given  any compact $K\subset \MM$  there exists $c_K>0$ such that
$|\log\bC_{1}{-}\log\bC_2|\leq c_K |\bC_{1}{-}\bC_2|$ for all 
$\bC_{1},\bC_2\in K$.
 Moreover, for  all $\alpha\in\Rz$, we have that 
\begin{equation}\label{eq:app2}|\bC_{1}^{\,\alpha}{-}\bC_2^{\,\alpha}|\leq
   c_{K\alpha} |\bC_{1}{-}\bC_2|\quad 
\forall \bC_{1},\,\bC_{2}\in K\end{equation}
for some positive constant $  c_{K\alpha}$.
\end{lemma}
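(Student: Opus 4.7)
The plan is to prove both bounds by controlling the Fr\'echet derivatives of $\log$ and of $\bA\mapsto \bA^\alpha$ along the straight segment joining $\bC_1$ and $\bC_2$, and integrating.

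First, I would record the key eigenvalue pinching on $\MM$: for any $\bC\in\MM$ with $|\bC|\leq M$, its eigenvalues $\lambda_1\leq \lambda_2\leq \lambda_3$ satisfy $\lambda_3\leq M$, and since $\lambda_1\lambda_2\lambda_3=1$ one also has $\lambda_1\geq 1/\lambda_3^2\geq 1/M^2$. Setting $M:=|\bC_1|\vee |\bC_2|$, both $\bC_1,\bC_2$ satisfy the matrix inequalities $(1/M^2)\one\preceq \bC_i\preceq M\one$, and these are preserved by convex combinations. Hence the segment $\bC(s):=(1-s)\bC_2+s\bC_1$ stays inside $\GLps$ with eigenvalues confined to $[1/M^2,M]$ for all $s\in[0,1]$, even though it generally leaves $\MM$.

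Second, using the Daleckii-Krein formula in the eigenbasis of $\bC$, if $\bC = \bQ\,\operatorname{diag}(\lambda_i)\bQ^\TT$ then for any symmetric $\boldsymbol H$
\begin{equation*}
D\log(\bC)[\boldsymbol H] = \bQ\,\bigl[\, f(\lambda_i,\lambda_j)\,(\bQ^\TT \boldsymbol H \bQ)_{ij}\,\bigr]\,\bQ^\TT,\qquad f(\lambda,\mu):=\int_0^1\frac{d\tau}{(1-\tau)\mu+\tau\lambda},
\end{equation*}
with the convention $f(\lambda,\lambda)=1/\lambda$. Since $f(\lambda,\mu)\leq 1/\min(\lambda,\mu)\leq M^2$ whenever the eigenvalues lie in $[1/M^2,M]$, and orthogonal conjugation preserves the Frobenius norm, this gives the uniform operator bound $|D\log(\bC(s))[\boldsymbol H]|\leq M^2|\boldsymbol H|$ for all $s\in[0,1]$. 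Integration along the segment then yields
\begin{equation*}
|\log \bC_1 - \log \bC_2|\leq \int_0^1 \bigl|D\log(\bC(s))[\bC_1-\bC_2]\bigr|\,ds\leq M^2|\bC_1-\bC_2|,
\end{equation*}
which is \eqref{eq:app}, the ``$1+$'' being absorbed in the constant (note $|\bC|\geq \sqrt{3}$ on $\MM$ by AM-GM on the eigenvalues, so $M^2$ and $1+M^2$ are comparable up to a universal factor). Restriction to a compact $K\subset \MM$ makes $M$ a uniform constant depending only on $K$.

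Third, the bound \eqref{eq:app2} follows by exactly the same strategy applied to $\phi(\lambda):=\lambda^\alpha$: the Daleckii-Krein kernel is now $g(\lambda,\mu):=(\lambda^\alpha-\mu^\alpha)/(\lambda-\mu)$ (with value $\alpha\lambda^{\alpha-1}$ on the diagonal), which is uniformly bounded on $[1/M_K^2,M_K]^2$ by some constant depending only on $K$ and $\alpha$. The segment argument above applies verbatim. The only delicate point throughout is the observation that convex combinations of elements of $\MM$ stay in $\GLps$ with a uniform spectral window; this is immediate from convexity of the semidefinite order and is the content of the first paragraph, so no genuine obstacle arises beyond this bookkeeping.
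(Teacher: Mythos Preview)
Your proof is correct and takes a genuinely different route from the paper. The paper represents $\log\bC$ via the Cauchy integral formula $\log\bC=\oint_\gamma (\log z)(z\one-\bC)^{-1}\,\d z$, writes the difference of logarithms through a resolvent identity, and bounds the resulting contour integral along a vertical line $\{x_0+it:t\in\Rz\}$ pushed toward $x_0\to 0$; the same spectral pinching $\lambda_{\min}\geq (|\bC_1|\vee|\bC_2|)^{-2}$ coming from $\det\bC=1$ is what controls the resolvents there. For the power map the paper does not repeat the contour argument but simply composes $\bC^\alpha=\exp(\alpha\log\bC)$ with the local Lipschitz continuity of $\exp$. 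Your Daleckii--Krein argument along the straight segment is more elementary (no complex analysis), delivers the explicit constant $M^2$ directly, and treats both $\log$ and $\bC\mapsto\bC^\alpha$ by the same divided-difference mechanism; the key observation that convex combinations of $\bC_1,\bC_2\in\MM$ remain in $\GLps$ with the same spectral window $[1/M^2,M]$ is what replaces the contour bookkeeping. The paper's approach is in principle more flexible (it would cover any analytic $f$ with suitable decay along the chosen contour), but for the present statement your argument is cleaner and self-contained.
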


\begin{proof}
 Let $\s_i\subset (0,\infty)$ be the spectrum  of $\bC_i$,
  for $i=1,2$,  and
$\la_0=\min\{\s_1\cup\s_2\}>0$. Since $\det \bC_i=1$  one easily
checks  that
\begin{equation}\label{eq:la0}
 \la_0\geq |\bC_{1}|^{-2}\wedge |\bC_2|^{-2}.
\end{equation}
The  logarithm  of
$\bC_i$ can be calculated  via  the Cauchy Integral Formula
(for operators) \cite[Ch. 7]{Dun-Schw} 
 $$
 \log\bC_i=\int_{\g}\frac{\log z}{z\one-\bC_i}\,\d z,
 $$
 where $\g$ is a closed contour in the analyticity region of 
 $\log z$ (one can take $\gamma\subset \{ \textrm{Re }z>0\}$, for
 instance)  and  winds  one time around $\s_1\cup \s_2$. Therefore
\begin{align*}
 &\log\bC_{1}{-}\log\bC_2=(\bC_{1}{-}\bC_2)\int_{\g}\frac{\log z}{(z\one{-}
\bC_{1})(z\one{-}\bC_2)}\,\d z
\\
&=(\bC_{1}{-}\bC_2)\int_{\bar \g}\frac{\log z}{(z\one{-}\bC_{1})(z\one{-}\bC_2)} \,\d z,
\end{align*}
where, in the last equality,   we have replaced  $\g$  the infinite straight line $\bar
\g=\{x_0+i
t\,|\,t\in\R\}$, $x_0\in (0, \la_0)$,   since the modulus of the
integrand  behaves like 
$z \mapsto |\log z|\,|z|^{-2}$ at infinity.  For all $z \in \bar
\g$ we have 
$$
\textrm{Re } z=x_0<\la_0\Rightarrow \left|\frac{1}{z\one-\bC_i}\right|\leq
\frac{\sqrt{3}}{|z-\la_0|}.
$$
 We hence compute that 
\begin{eqnarray*}
  |\log\bC_{1}{-}\log\bC_2|\leq |\bC_{1}{-}\bC_2|\int_{\bar
\g}\frac{3|\log z|}{|z-\la_0|^2}
\,\d z\leq \frac32|\bC_{1}{-}\bC_2|\int_{-\infty}^{\infty}\frac{|\log(x_0^2+t^2)|+\pi}{
(x_0-\la_0)^2+t^2}
\dt.
\end{eqnarray*}
The last inequality follows from the elementary control
$$|\log(x_0{+}it)| \leq \frac12 |\log(x_0^2{+}t^2)| + |\vartheta|
\leq \frac12 \left(|\log(x_0^2{+}t^2)|  +\pi\right)$$
for $\vartheta := \arctan (t/x_0) \in (-\pi/2,\pi/2)$.
 As this  estimate holds for any $x_0\in (0,\la_0)$,  by letting $x_0\rightarrow 0$ we obtain
\begin{eqnarray*}
| \log\bC_{1}{-}\log\bC_2|\leq 3 |\bC_{1}{-}\bC_2|
\int_{0}^{\infty}\frac{|\log t^2|+\pi}{\la_0^2+t^2}\dt.
\end{eqnarray*}
 We can now  elementarily compute  that 
$$
\int_{0}^{\infty}\frac{\pi}{\la_0^2+t^2}\dt=\frac{\pi^2}{2\la_0},\quad \int_{0}^{\infty}\frac{|\log t^2|}{\la_0^2+t^2}\dt\leq-2\int_{0}^{1}\frac{\log t}{\la_0^2}\dt+\int_{1}^{\infty}\frac{\log t^2}{t^2}dt=\frac{2}{\la_0^2}+c.
$$
 Eventually, we have proved that 
$$
|\log\bC_{1}{-}\log\bC_2|\leq c\left(1{+}\frac{1}{\la_0^2}\right)|\bC_{1}{-}\bC_2|\stackrel{\eqref{eq:la0}}{\leq} c\left(1+(|\bC_{1}|\vee |\bC_2|)^{2}\right)|\bC_{1}{-}\bC_2| .
$$
As for the  matrix  power $\bC \mapsto \bC^\alpha$, we simply use
 $\bC^{\,\alpha}=\exp(\alpha\log \bC)$ and  recall that the 
exponential  map  is uniformly Lipschitz on compact sets.
\end{proof}
\section{The map $ \bPi $}
%
 We collect here some comment on  the existence  of a map
$$\bPi:\MM \to K$$
having properties \eqref{eq:Pi}, to be used in the definition of the
recovery sequence \eqref{rec_z}. Recall that 
$$K=\{\bC\in\MM\;|\; |\bC|\leq r\},\qquad r>|\one|=\sqrt{3}$$
 and let  the flux  $\Phi_t$, $t\geq 0$,  be 
associated to the following differential equation on  $\GLp$ 
\begin{equation}\label{eq:diff-eq}
 \dot\bC=-\left(\bC{-}3|\bC^{-1}|^{-2}\bC^{-1}\right).
\end{equation}
 In particular, $t \mapsto \Phi_t(\bC)$ is  be the solution of the differential equation
\eqref{eq:diff-eq} with initial datum $\bC$.
 Note that the manifold $\MM$ is  invariant under the flux
$\Phi_t$. In fact, along solutions $\bC(t)$ of
the equation \eqref{eq:diff-eq} symmetry and determinant constraint 
are preserved as  the symmetry of $\bC$ induces that of $\dot \bC$
and 
$$\tr(\bC(t)^{-1}\dot\bC(t))=-\tr(\one-3|\bC^{-1}|^{-2}\bC^{-2}) =
-(3{-} 3|\bC^{-1}|^{-2}\bC^{-2}{:}\one) =0.
$$
 Moreover, the flux $\Phi_t $ is norm-contractive  for we  
readily check that  
\begin{align*}
 \frac{1}{2}\frac{\d}{\d t}|\bC(t)|^2&=\tr(\bC\dot\bC)=-\tr(\bC^2{-}3|\bC^{-1}|^{-2}
\one)\\&=-(|\bC|^2{-}9|\bC^{-1}|^{-2})\leq 3-|\bC|^2\leq 0.
\end{align*}
 We have  here  used the fact that $\bC^{-1}\in\MM$ and
 $|\bC^{-1}|^2\geq 3$.  More precisely, as $|\bC|^2 \geq 3$ on
 $\MM$ (with equality corresponding to $\bC=\one$), we have checked
 that 
 \begin{equation}
   \label{perillimes}
   |\bC| > \sqrt{3} \ \Rightarrow \ \frac12 \frac{\d}{\d t}
   |\bC(t)|^2<0. 
 \end{equation}
 Let us record some additional properties of the flux $\Phi_t$ in the following lemma.

\begin{lemma}\label{lemma:B1}  The flux $\Phi_t$ satisfies the following properties
\begin{itemize}
 \item [i)] Let $\bC,\,\bC_{0}\in\MM$. Then
\begin{equation}\label{eq:Phi1}
 | \Phi_t(\bC) |\geq |\bC_{0}| \ \Rightarrow  \ \frac{\d}{\d t}| \Phi_t(\bC) {-}\bC_{0}|\leq 0.
\end{equation}
\item [ii)] For all $t \geq 0$, $\Phi_t$ is a contraction on $\MM$, namely
\begin{equation}\label{eq:Phi2}
|\Phi_t(\bC_{1}){-}\Phi_t(\bC_2)|\leq|\bC_{1}{-}\bC_2|\quad
\forall \bC_{1},\,\bC_2\in\MM .
\end{equation}
\end{itemize}
\end{lemma}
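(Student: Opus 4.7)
The plan is to derive both (i) and (ii) from a single algebraic fact: for any $\bA,\bB\in\MM$,
$$\bA : \bB^{-1} = \tr(\bB^{-1/2}\bA\bB^{-1/2}) \geq 3,$$
since $\bB^{-1/2}\bA\bB^{-1/2}$ is symmetric positive definite with determinant $1$, so its three eigenvalues satisfy $\sum_i \lambda_i \geq 3(\prod_i \lambda_i)^{1/3} = 3$ by AM--GM. In particular this gives $|\bC|^2\geq 3$ and $|\bC^{-1}|^2\geq 3$ on $\MM$, and $\bC:\bC^{-1}=\tr\one=3$.

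For (i), it suffices to prove $\ddt|\Phi_t(\bC)-\bC_0|^2\leq 0$ under the hypothesis $|\Phi_t(\bC)|\geq|\bC_0|$. Writing $\bC(t):=\Phi_t(\bC)$ and using \eqref{eq:diff-eq}, one has
$$\tfrac12 \ddt |\bC{-}\bC_0|^2 = -(\bC{-}\bC_0):(\bC - 3|\bC^{-1}|^{-2}\bC^{-1}).$$
Expanding and using $\bC:\bC^{-1}=3$, the right-hand side equals
$$-|\bC|^2 + 9|\bC^{-1}|^{-2} + \bC_0:\bC - 3|\bC^{-1}|^{-2}\bC_0:\bC^{-1}.$$
The key inequality applied with $(\bA,\bB)=(\bC_0,\bC)$ gives $\bC_0:\bC^{-1}\geq 3$, so the two $|\bC^{-1}|^{-2}$-terms combine to a nonpositive quantity $9|\bC^{-1}|^{-2} - 3|\bC^{-1}|^{-2}\bC_0:\bC^{-1}\leq 0$. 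Bounding the remaining cross-term by Cauchy--Schwarz, $\bC_0:\bC\leq |\bC_0||\bC|$, we obtain
$$\tfrac12 \ddt|\bC{-}\bC_0|^2 \leq -|\bC|^2 + |\bC_0||\bC| = -|\bC|\bigl(|\bC|{-}|\bC_0|\bigr) \leq 0$$
whenever $|\bC|\geq |\bC_0|$, which is exactly \eqref{eq:Phi1}.

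For (ii), set $\bC_i(t):=\Phi_t(\bC_i)$ and $g(\bC):=|\bC^{-1}|^{-2}\bC^{-1}$. A direct differentiation gives
$$\tfrac12 \ddt|\bC_1{-}\bC_2|^2 = -|\bC_1{-}\bC_2|^2 + 3(\bC_1{-}\bC_2):\bigl(g(\bC_1){-}g(\bC_2)\bigr),$$
and expanding the scalar product, again using $\bC_i:\bC_i^{-1}=3$,
$$(\bC_1{-}\bC_2):\bigl(g(\bC_1){-}g(\bC_2)\bigr) = 3|\bC_1^{-1}|^{-2} + 3|\bC_2^{-1}|^{-2} - |\bC_2^{-1}|^{-2}\bC_1:\bC_2^{-1} - |\bC_1^{-1}|^{-2}\bC_2:\bC_1^{-1}.$$
Applying the key inequality with both $(\bC_1,\bC_2)$ and $(\bC_2,\bC_1)$ yields $\bC_1:\bC_2^{-1}\geq 3$ and $\bC_2:\bC_1^{-1}\geq 3$, which makes the above right-hand side $\leq 0$. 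Hence $\ddt|\bC_1{-}\bC_2|^2\leq -2|\bC_1{-}\bC_2|^2$, and Gr\"onwall's lemma yields the exponential contraction $|\Phi_t(\bC_1){-}\Phi_t(\bC_2)|\leq e^{-t}|\bC_1{-}\bC_2|$, which is even stronger than \eqref{eq:Phi2}.

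The core difficulty is recognizing the trace inequality $\tr(\bA\bB^{-1})\geq 3$ on $\MM$, since this is precisely what forces the otherwise awkward $\bC_0:\bC^{-1}$ and $\bC_i:\bC_j^{-1}$ cross-terms to have the right sign. Once this observation is made, both assertions follow in a few lines of algebra from the explicit form of the vector field in \eqref{eq:diff-eq} together with Cauchy--Schwarz.
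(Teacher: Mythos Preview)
Your proof is correct and follows essentially the same approach as the paper: both arguments hinge on the trace inequality $\tr(\bA\bB^{-1})=\tr(\bB^{-1/2}\bA\bB^{-1/2})\geq 3$ for $\bA,\bB\in\MM$, and then split the derivative $\tfrac12\ddt|\cdot|^2$ into a term controlled by Cauchy--Schwarz (in part~(i)) and a term with the right sign coming from this trace inequality. Your presentation is slightly more streamlined in that you isolate the key AM--GM fact up front, and you observe the sharper exponential rate $|\Phi_t(\bC_1){-}\Phi_t(\bC_2)|\leq e^{-t}|\bC_1{-}\bC_2|$ in part~(ii), which the paper does not record but which follows immediately from its own computation.
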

\begin{proof}
 Ad i). Let $\bC(t) = \Phi_t(\bC)$ for some $\bC\in\MM$. The differential equation \eqref{eq:diff-eq} entails that 
\begin{align*}
& \frac{1}{2}\frac{\d}{\d t}|\bC(t){-}\bC_{0}|^2=\tr\left(\dot\bC(\bC{-}\bC_{0})\right)\\
&=-\tr\left(\bC(\bC{-}\bC_{0})\right)+\frac{3}{|\bC^{-1}|^{2}}\left(3-{}\tr(\bC^{-1}
\bC_{0})\right)
\end{align*}
 By the invariance of the trace by  cyclic permutation  of
the factors we have that 
$$
\tr(\bC^{-1}\bC_{0})=\tr(\bC^{-1/2}\bC_{0}\bC^{-1/2})\geq 3
$$
since $\bC^{-1/2}\bC_{0}\bC^{-1/2} \in \MM$. Moreover,
$$
|\bC|\geq |\bC_{0}|\Rightarrow |\bC|^2=\tr(\bC^2)\geq \tr(\bC\bC_{0}),
$$
which proves the statement.

 Ad ii). Let $\bC_i(t) = \Phi_t(\bC_i)$ for $\bC_1,\,\bC_2\in\MM$.  By using again  the differential equation  we get 
\begin{align*}
&\frac{1}{2}
\frac{\d}{\d t}|\bC_{1}(t){-}\bC_2(t)|^2=\tr\left[(\dot\bC_{1}{-}
\dot\bC_2)(\bC_{1}{-}\bC_2)\right ]=
\\
&=-\tr\left[\left(\bC_{1}-\frac{3}{|\bC_{1}^{-1}|^2}\bC_{1}^{
-1 }
-\bC_2+\frac{3}{|\bC_2^{-1}|^2}\bC_2^{-1}\right)(\bC_{1}{-}\bC_2)\right ]=\\
&=-|\bC_{1}{-}\bC_2|^2+3\left(\frac{3-\tr(\bC_{1}^{-1}\bC_2)}{|\bC_{1}^{-1}|^2}+\frac{
3-\tr(\bC_2^{
-1}\bC_{1})}{|\bC_2^{-1}|^2}\right)\leq 0
\end{align*} 
proving the assertion.
\end{proof}
%
For  any initial datum
$\bC\in\MM$, by  \eqref{perillimes}  there exists a minimum time $t_0\geq 0$ for which
$|\Phi_{t_0}(\bC)|=r>\sqrt{3}$,  namely
\begin{equation}\label{eq:min-time}
t_0(\bC)=\min\{t\geq0\;|\; \Phi_{t_0}(\bC)\in K\}<\infty.
\end{equation}
 We define the map $ \bPi :\MM\rightarrow K$ as follows
 \begin{equation}\label{eq:def-Pi}
    \bPi (\bC):=\Phi_{t_0(\bC)}(\bC).
 \end{equation} 
Of course, $ \bPi \left|_K\right.=\text{id}$, so that the
first condition in \eqref{eq:Pi} is satisfied. 

Let $\bC_1, \, \bC_2 \in \MM$ be given and assume with no loss of
generality that  $t_1:=t_0(\bC_{1})\leq t_0(\bC_2)=:t_1+\delta$. We can write
 \begin{align*}
  | \bPi (\bC_{1})- \bPi (\bC_2)|&= |\Phi_{t_1}(\bC_{1})-\Phi_{t_2}(\bC_2)|=|\Phi_{t_1}(\bC_{1})-\Phi_{\delta}(\Phi_{t_1}(\bC_2))|\\
 & \stackrel{\eqref{eq:Phi1}}{\leq}
|\Phi_{t_1}(\bC_{1})-\Phi_{t_1}(\bC_2)|\stackrel{\eqref{eq:Phi2}}{\leq}|\bC_{1}
- \bC_2|.
 \end{align*} 
The map $\bPi$ is hence contractive in $\MM$, which is 
the
second condition in \eqref{eq:Pi}. 
%
%


%
%
%
\section{Lower semincontinuity tool}
%
For the sake of completeness, we report here the lower-semicontinuity
tool which has been repeatedly used above. The lemma is in the spirit
of \cite[Thm. 1]{Balder84} and  \cite{Ioffe77}. A proof can be found
in \cite[Thm
4.3, Cor. 4.4]{be}, in 
\cite[Lemma 3.1]{Mielke-et-al08}  
in one dimension, and in \cite{Li96} in case of local uniform convergence.
 
\begin{lemma}[Lower-semicontinuity tool]\label{lem:semicon-tool}
Let $f_0, \, f_\epsi: \Rz^n \to [0,\infty]$ be lower semicontinuous, 
$$ \quad  f_0(v_0) \leq \inf\big\{\liminf_{\epsi
  \to 0} f_\epsi(v_\epsi) \ | \
v_\epsi \to v_0\big\} \quad \forall v_0 \in \Rz^n $$
and $ w_\epsi  \weakto w_0$ in $L^1(\Omega;\Rz^n)$. By denoting by
$ \zeta  $
the Young measure generated by $ w_\epsi  $ we have that 
$$ \int_\Omega \left( \int_{\Rz^n} f_0(w) {\rm d}  \zeta_x (w)\right) \dx
\leq \liminf_{\epsi \to 0} \int_\Omega
f_\epsi(w_\epsi) \dx.$$
In particular, if $f_0 $ is convex we have 
$$ \int_\Omega f_0(w_0) \dx \leq \liminf_{\epsi \to 0} \int_\Omega
f_\epsi(w_\epsi) \dx.$$
\end{lemma}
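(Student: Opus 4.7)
The plan is to combine the pointwise $\Gamma$-$\liminf$ hypothesis on the family $(f_\epsi)$ with the classical lower-semicontinuity theorem for Young measures applied to continuous (Carathéodory) integrands. The bridge between these two is a Moreau--Yosida style regularization.

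First, I would introduce for each $k\in\N$ the infimal convolutions
$$f_\epsi^k(v) := \inf_{v'\in\R^n}\bigl(f_\epsi(v')+k|v{-}v'|\bigr),\qquad f_0^k(v) := \inf_{v'\in\R^n}\bigl(f_0(v')+k|v{-}v'|\bigr).$$
Each $f_\epsi^k$ (resp.\ $f_0^k$) is nonnegative, $k$-Lipschitz (hence continuous), bounded above by $f_\epsi$ (resp.\ $f_0$), and, since the underlying functions are lower semicontinuous, $f_\epsi^k\nearrow f_\epsi$ and $f_0^k\nearrow f_0$ monotonically as $k\to\infty$. Next, I would transfer the $\Gamma$-$\liminf$ hypothesis to the regularizations by showing that $v_\epsi\to v$ implies $\liminf_\epsi f_\epsi^k(v_\epsi)\geq f_0^k(v)$. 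Indeed, taking an almost minimizer $v'_\epsi$ in the infimum for $f_\epsi^k(v_\epsi)$, either the liminf is infinite (nothing to prove) or the penalization $k|v_\epsi{-}v'_\epsi|$ stays bounded; extracting a subsequence so that $v'_\epsi\to v'$ and using the hypothesis $\liminf f_\epsi(v'_\epsi)\geq f_0(v')$ together with the continuity of the distance term gives the desired bound.

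With this in hand, I would apply the Young-measure Fatou lemma to the continuous, nonnegative integrand $f_\epsi^k$. Since $w_\epsi\weakto w_0$ in $L^1$, the family $(w_\epsi)$ is tight and (up to a subsequence) generates the Young measure $\zeta$ whose barycenter is $w_0(x)$. The standard lower-semicontinuity theorem for continuous integrands satisfying a $\Gamma$-$\liminf$ inequality (Balder/Ioffe) then yields
$$\int_\Omega\!\!\int_{\R^n}f_0^k(v)\,\d\zeta_x(v)\,\dx \;\leq\; \liminf_{\epsi\to 0}\int_\Omega f_\epsi^k(w_\epsi(x))\,\dx \;\leq\; \liminf_{\epsi\to 0}\int_\Omega f_\epsi(w_\epsi(x))\,\dx,$$
the last inequality being immediate from $f_\epsi^k\leq f_\epsi$. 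Letting $k\to\infty$ and using monotone convergence on the left, we get the first claim of the lemma. For the convex case, the second inequality follows at once from Jensen's inequality applied fiberwise to $\zeta_x$, whose barycenter is $w_0(x)$:
$$f_0(w_0(x)) = f_0\!\left(\int_{\R^n} v\,\d\zeta_x(v)\right) \leq \int_{\R^n} f_0(v)\,\d\zeta_x(v),$$
so integration in $x$ and combination with the previous inequality finishes the proof.

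The main obstacle is the passage from the pointwise $\Gamma$-$\liminf$ hypothesis (which is all we have on $(f_\epsi)$) to a statement strong enough to feed into the Young-measure Fatou lemma. The family $(f_\epsi)$ is neither uniformly bounded nor equicontinuous, and can even take the value $+\infty$, so one cannot apply the Young-measure lemma directly; the inf-convolution regularization is precisely the device that produces a uniformly Lipschitz, finite Carathéodory family still satisfying a $\Gamma$-$\liminf$ bound, and then a monotone-convergence argument recovers the non-regularized statement. This is exactly the scheme used in the references cited after the statement.
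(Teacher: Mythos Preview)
The paper does not actually give its own proof of this lemma: it merely records the statement ``for the sake of completeness'' and refers to \cite[Thm.~1]{Balder84}, \cite{Ioffe77}, \cite[Thm.~4.3, Cor.~4.4]{be}, \cite[Lemma 3.1]{Mielke-et-al08}, and \cite{Li96} for proofs. So there is no in-paper argument to compare against.

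Your outline is the standard route taken in those references and is essentially correct. The Moreau--Yosida regularisation to obtain $k$-Lipschitz integrands, the transfer of the $\Gamma$-$\liminf$ inequality to the regularised family, the Young-measure lower bound, and the final monotone-convergence / Jensen steps are all sound. One point worth tightening: when you invoke ``the standard lower-semicontinuity theorem for continuous integrands satisfying a $\Gamma$-$\liminf$ inequality'' you are still dealing with an $\epsi$-dependent integrand $f_\epsi^k$, so this is not literally the classical fixed-integrand Young-measure Fatou lemma. You can close this gap elementarily by exploiting the uniform $k$-Lipschitz continuity: for each $m\in\N$ set $g_m^k(v):=\inf_{0<\epsi\le 1/m} f_\epsi^k(v)$, which is again $k$-Lipschitz, nonnegative, and satisfies $g_m^k\le f_\epsi^k$ for $\epsi\le 1/m$ and $g_m^k\nearrow f_0^k$ pointwise as $m\to\infty$ (the latter by your Step~2). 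Applying the ordinary Young-measure Fatou lemma to the fixed continuous integrand $g_m^k$ and then letting $m\to\infty$ by monotone convergence before letting $k\to\infty$ gives exactly the chain of inequalities you wrote. Alternatively, Balder's theorem handles the $\epsi$-dependence directly, but then the regularisation step is largely redundant since Balder already covers lower-semicontinuous integrands.
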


%
%
%
\section*{Acknowledgement}
%
U.S. acknowledges support by the CNR-JSPS grant {\it VarEvol} and by the Austrian Science Fund (FWF) project P 27052-N25. This work has been funded by the Vienna Science and Technology Fund (WWTF)
through Project MA14-009. Stimulating discussions with Gianni Dal Maso,
Alexander Mielke, Stefanie Reese, and Filip Rindler are gratefully
acknowledged.

%
\section*{Statement on conflict of interests}

The authors declare that they have no conflict of interests.


\begin{thebibliography}{99}
%


\bibitem{Agostiniani11}
V.~Agostiniani, G. Dal Maso, A.~DeSimone.
Linear elasticity obtained from finite elasticity by
  $\Gamma$-convergence under weak coerciveness conditions,
{\em  Ann. Inst. H. Poincar\'e Anal. Non Lin\'eaire}, 29 (2012), 715--735.

%
\bibitem{AuMieSte08}
 F.~Auricchio, A.~Mielke, U.~Stefanelli.
A rate-independent model for the isothermal quasi-static evolution of shape-memory
materials,
{\em Math. Models Meth. Appl. Sci.} 18 (2008), 125--164.
%
 

\bibitem{Balder84}  
E.~J. Balder.
A general approach to lower semicontinuity and lower closure in optimal control theory,
{\em SIAM J. Control Optim.} 22 (1984), 570--598.
%
\bibitem{Ball77} 
J.~M.~Ball.
Convexity conditions and existence theorems in nonlinear elasticity,
{\em Arch. Ration. Mech. Anal.} 63 (1977), 337--403.
%
\bibitem{Bal84b} 
J.~M. Ball.
\newblock Minimizers and the {E}uler-{L}agrange equations.
\newblock In {\em Trends and applications of pure mathematics to mechanics
  ({P}alaiseau, 1983)}, Lecture Notes in Physics, Vol 195, pages 1--4.
  Springer, Berlin, 1984.

 
\bibitem{Bal02}
J.~M. Ball.
\newblock Some open problems in elasticity.
\newblock In P.~Newton et~al., Eds., {\em Geometry, mechanics, and dynamics.
  Volume in honor of the 60th birthday of J. E. Marsden}, pages 3--59.
  Springer, New York, NY, 2002.

%
\bibitem{Brezis73} 
 H. Br\'ezis. {\em Operateurs Maximaux Monotones et Semi-Groupes de Contractions dans les Espaces de Hilbert}, Math Studies, Vol.5, North-Holland, Amsterdam/New York,
 1973.
%
\bibitem{Casey-Naghdi80}
J. Casey, P. M. Naghdi. A remark on the use of the decomposition $F =
F_eF_p$ in plasticity, {\it J. Appl.
Mech.} 47 (1980), 672--675.

%
\bibitem{Ciarlet}
P. ~G. ~Ciarlet.
{\em Mathematical Elasticity, Volume 1: Three Dimensional Elasticity},
Elsevier, 1988.
%
\bibitem{Clifton72}
R. J. Clifton. On the equivalence of $F_e F_p$ and $F_p
F_e$, {\it J. Appl. Mech.} 39 (1972), 287--230.
%
\bibitem{Coleman-Noll63} 
B. D. Coleman, W. Noll.
The thermodynamics of elastic materials with
heat conduction and viscosity,
{\em Arch. Ration. Mech. Anal.} 13 (1963), 167--178.
%
\bibitem{DalMaso93} 
 G.~Dal Maso.
 {\em An introduction to {$\Gamma$}-convergence},
\newblock Progress in Nonlinear Differential Equations and their Applications,
  Vol.8, Birkh\"auser Boston Inc., Boston, MA, 1993.

%
\bibitem{DNP02} 
G.~Dal Maso, M.~Negri, D.~Percivale.
Linearized elasticty as $\Gamma-$limit of finite elasticity,
{\em Set-Valued Anal.} 10 (2002), 165--183.
%
\bibitem{Davoli14} 
E.~Davoli.
Linearized plastic plate models as $\Gamma$-limits of 3D finite
  elastoplasticity,
{\em ESAIM Control Optim. Calc. Var.} 20 (2014), 725--747.
%
\bibitem{Davoli14b} 
E.~Davoli.
Quasistatic evolution models for thin plates arising as low
  energy $\Gamma$-limits of finite plasticity,
{\em Math. Models Methods Appl. Sci.} 24 (2014), 2085--2153.
%
\bibitem{Davoli-Francfort15}
E. Davoli, G.~A. Francfort. A critical revisiting of finite
elasto-plasticity. {\it SIAM J. Math. Anal.} 47 (2015),  526--565.
%
\bibitem{DeGiorgi79} 
E.~De Giorgi, T.~Franzoni.
On a type of variational convergence.
\newblock In {\em Proceedings of the Brescia Mathematical Seminar, Vol. 3
  (Italian)}, pages 63--101.  Univ. Cattolica Sacro Cuore, Milan, 1979.
%
\bibitem{Dettmer-Reese} 
W. Dettmer, S. Reese.
On the theoretical and numerical modelling of Armstrong-Frederick
kinematic hardening in the finite strain regime,
{\em Comput. Methods Appl. Mech. Engrg.} 193 (2004), 87--116.
%

\bibitem{Dun-Schw} 
N. Dunford, J. T. Schwartz. 
{\em Linear Operators, Part 1: General Theory},
Pure and Applied Mathematics, Vol.7,
John Wiley \& Sons, Inc., Hoboken, New Jersey, 1988.

%
\bibitem{Eva09} 
V.~Evangelista, S.~Marfia, E.~Sacco.
Phenomenological 3D and 1D consistent models for shape-memory alloy materials,
{\em Comput. Mech.} 44 (2009), 405--421.
%
\bibitem{Eva10} 
V.~Evangelista, S.~Marfia, E.~Sacco.
A 3D SMA constitutive model in the framework of finite strain,
{\em Int. J. Numer. Meth. Engng.} 81 (2010), 761--785.
%
\bibitem{Fra-Mie06} 
G. A.~Francfort, A.~Mielke.
Existence results for a class of rate-independent material models
  with nonconvex elastic energies,
{\em J. Reine Angew. Math.} 595 (2006), 55--91.
%
\bibitem{Fleck1} 
N. A. Fleck, J. W. Hutchinson.
Strain gradient plasticity,
{\em Adv. Appl. Mech.} 33 (1997), 295--361.
%
\bibitem{Fleck2} 
N. A. Fleck, J. W. Hutchinson.
A reformulation of strain gradient plasticity, 
{\em J. Mech. Phys. Solids} 49 (2001), 2245--2271.
%
\bibitem{FJM02} 
G.~Friesecke, R.D.~James, S.~M\"uller.
A theorem on geometric rigidity and the derivation of nonlinear plate theory from
three-dimensional elasticity,
{\em Comm. Pure Appl. Math.} 55 (2002), 1461--1506.
%
\bibitem{Fri-Ste09} 
S.~Frigeri, U.~Stefanelli.
Existence and time-discretization for the finite-strain Souza-Auricchio constitutive 
model 
for shape-memory alloys,
{\em Contin. Mech. Thermodyn.} 24 (2012), 63--67.
%
\bibitem{Giacomini13} 
A. Giacomini,  A. Musesti.
Quasi-static evolutions in linear perfect plasticity as a variational
limit of finite plasticity: a one-dimensional case,
{\em  Math. Models Methods Appl. Sci.} 23 (2013), 1275--1308. 
%
\bibitem{Green-Naghdi71}
A. E.  Green, P. M. Naghdi.
Some remarks on elastic-plastic deformation at finite strain, {\em Int. J.
Engng. Sei.} 9 (1971), 1219--1229.
%
\bibitem{Gurtin81} 
M. E. Gurtin. 
{\em An introduction to continuum mechanics}, 
 Mathematics in Science and Engineering, Vol. 158, Academic Press,
Inc., New York-London, 1981.
%

\bibitem{Gurtin10} 
M. Gurtin, E. Fried,  L. Anand.
{\em The mechanics and thermodynamics of continua},
Cambridge University Press, Cambridge, 2010. 

%
\bibitem{Han-Reddy} 
W.~Han, B.D. Reddy.
{\em Plasticity, Mathematical theory and numerical analysis},
 Springer-Verlag, New York, 1999.
%
\bibitem{Kroener} 
E. Kr\"oner.
Allgemeine Kontinuumstheorie der Versetzungen und Eigenspannungen,
{\em Arch. Ration. Mech. Anal.} 4 (1959), 273--334.
%
\bibitem{Ioffe77} 
A.~D. Ioffe.
On lower semicontinuity of integral functionals. {I},
{\em SIAM J. Control Optim.} 15 (1977), 521--538.
%
\bibitem{Lee69} 
E.~Lee.
 Elastic-plastic deformation at finite strains,
{\em J. Appl. Mech.} 36 (1969), 1--6.
%
\bibitem{Li96} 
Z.~Li.
A theorem on lower semicontinuity of integral functionals,
{\em Proc. Roy. Soc. Edinburgh Sect. A}, 126 (1996), 363--374.
%
\bibitem{Lion97}  
A. Lion. 
A physically based method to represent the thermo-mechanical
behaviour of elastomers,
{\em Acta Mech.} 123 (1997), 1--25.
%
\bibitem{Lubarda99}
V. A. Lubarda. Duality in constitutive formulation of finite-strain
elastoplasticity based on $F = F_eF_p$ and $F = F_pF_e$ decompositions,
{\it Int. J. Plast.} 15 (1999) 1277--1290.
%
\bibitem{Mai-Mie09} 
A.~Mainik, A.~Mielke.
Global existence for rate-independent gradient plasticity at finite strain,
{\em J.~Nonlinear Sci.} 19 (2009), 221--248.
%

\bibitem{Mandel72} 
J. Mandel. 
{\em Plasticit\'e classique et viscoplasticit\'e},
CISM Courses and Lectures, Vol. 97, Springer-Verlag, Berlin, 1972.

%
\bibitem{Miehe95} 
C. Miehe. 
A theory of large-strain isotropic thermoplasticity based on metric transformation 
tensors,
{\em Arch. Appl. Mech.} 66 (1995), 45--64.
%
\bibitem{Mie04} 
A.~Mielke.
Existence of minimizers in incremental elasto-plasticity with finite strains,
{\em SIAM J. Math. Anal.} 36 (2004), 384--404.
%
\bibitem{Mie05} 
A.~Mielke.
Evolution of rate-independent systems (ch. 6).
In  C. ~Dafermos, E. ~Feireisl, Eds., {\em Handbook of Differential Equations, 
Evolutionary Equations}, Vol. 2, pages 461--559. Elsevier B. V., 2004.
%
\bibitem{Mie08} 
A.~Mielke.
Differential, energetic and metric formulations for rate-independent processes.
In L.~Ambrosio and G.~Savar\'e, Eds., {\em Nonlinear PDE's and Applications}, C.I.M.E.
Summer School, Cetraro, Italy 2008, pages 87--170, Springer, 2011.

%
\bibitem{Mie-Mull06} 
A.~Mielke, S.~M\"uller.
Lower semicontinuity and existence of minimizers for a functional in elastoplasticity,
{\em Z. Angew. Math. Mech.} 86 (2006), 233--250.
%
\bibitem{Mielke-et-al08} 
A.~Mielke, R.~Rossi, G.~Savar\'e.
Modeling solutions with jumps for rate-independent systems on metric spaces,
\newblock {\em Discrete Contin. Dyn. Syst. Ser. A}, 25 (2009), 585--615.
%
\bibitem{MRS10} 
A.~Mielke, R. Rossi, G. Savar\'e.
BV solutions and viscosity approximations of rate-independent systems,
{\em ESAIM Control Optim. Calc. Var.} 18 (2010), 36--80.
%
\bibitem{Mie15} 
A.~Mielke, R. Rossi, G. Savar\'e.
Balanced viscosity (BV) solutions to infinite-dimensional rate-independent
systems,
{\em J. Eur. Math. Soc. (JEMS)}, to appear.
%
\bibitem{Mielke-Roubicek15} 
 A. Mielke, T. Roub\' \i\v cek.
{\em Rate-Independent Systems - Theory and Application},
Appl. Math. Sci. Series, Springer, New York, 2015.
%
\bibitem{MieRouSte08} 
A.~Mielke, T.~Roub\'\i\v cek, U.~Stefanelli.
$\Gamma$-limits and relaxations for rate-independent
evolutionary problems,
{\em Calc. Var. Partial Differential Equations} 31 (2008), 387--416.
%
\bibitem{MieSte13} 
A.~Mielke and U.~Stefanelli.
Linearized plasticity is the evolutionary $\Gamma$-limit of finite plasticity,
{\em J. Eur. Math. Soc. (JEMS)}, 15 (2013), 923--948.
%
\bibitem{Mie-The04} 
A.~Mielke, F.~Theil.
On rate-independent hysteresis models,
{\em NoDEA Nonlinear Diff. Equations Applications}, 11 (2004), 151--189.
%
\bibitem{Aifantis91} 
H.-B. M\"uhlhaus, E. Aifantis.
A variational principle for gradient plasticity,
{\em Int. J. Solids Struct.} 28 (1991), 845--857.
%
\bibitem{Naghdi90}
P. M. Naghdi. A critical review of the state of finite
plasticity, {\it J. Appl. Math. Phys.} 41 (1990) 315--394.
%
\bibitem{Neff15}   
P. Neff, I.-D. Ghiba.
Comparison of isotropic elasto-plastic models for the plastic
  metric tensor $C_p=F^\TT_p\!F_p$, 
{\tt arXiv:1410.2818v1}
%
\bibitem{Nemat-Nasser79}
S. Nemat-Nasser. Decomposition of strain measures and their rates in
finite deformation elastoplasticity, {\it Int. J. Solids Struct.} 15
(1979) 155--166.
%
\bibitem{Nemat-Nasser04}
S. Nemat-Nasser. {\it Plasticity: a treatise on finite deformation of heterogeneous inelastic materials}. Cambridge University Press, 2004.
%
\bibitem{Paroni-Tomassetti09} 
R.~Paroni, G.~Tomassetti.
A variational justification of linear elasticity with residual
 stress,
{\em J. Elast.} 97 (2009), 189--206.
%
\bibitem{Paroni-Tomassetti11} 
R.~Paroni, G.~Tomassetti.
From non-linear elasticity to linear elasticity with initial stress via 
$\Gamma$-convergence,
{\em Contin. Mech. Thermodyn.} 23 (2011), 347--361.
%
\bibitem{reese07} 
S. Reese, D. Christ,
Finite deformation pseudo-elasticity of shape memory alloys -- Constitutive modelling
and finite element implementation,
{\em Int. J. Plasticity},  28 (2008), 455--482.
%

\bibitem{Conti-Reina} 
C. Reina, S. Conti.
Kinematic description of crystal plasticity in the finite kinematic
framework: a micromechanical understanding of $F=F_eF_p$,   
{\em J. Mech. Phys. Solids}, 67 (2014), 40--61. 
%
\bibitem{Conti-Reina2} 
C. Reina, A. Schl\"omerkemper, S. Conti.
Derivation of ${F}={F}^e{F}^p$ as the continuum
limit of crystalline slip. {\tt arXiv:1504.06775}.
%
\bibitem{Rou15} 
T. Roub\' \i\v cek.
Maximally-dissipative local solutions to rate-independent
  systems and application to damage and delamination problems,
{\em Nonlinear Anal.} 113 (2015), 33--50.
%
\bibitem{Schmidt08} 
B.~Schmidt.
Linear {$\Gamma$}-limits of multiwell energies in nonlinear
  elasticity theory,
{\em Contin. Mech. Thermodyn.} 20 (2008), 375--396.
%
\bibitem{Simo93} 
J. C. Simo.
Recent developments in the numerical analysis of plasticity. 
In E. Stein, editor, 
{\em Progress in computational analysis of inelastic structures},
pages 115--173, Springer, 1993.
%
\bibitem{Simo-Hughes98} 
J.~C. Simo, T.~J.~R. Hughes.
{\em Computational inelasticity},
Interdisciplinary Applied Mathematics, Vol. 7, Springer-Verlag, New York, 1998.
%
\bibitem{be} 
U.~Stefanelli.
The {B}rezis-{E}keland principle for doubly nonlinear equations.
{\em SIAM J. Control Optim.} 47 (2008), 1615--1642.
%
\bibitem{Ste09} 
U. Stefanelli.
A variational characterization of rate-independent evolution,
{\em Math. Nachr.} 282 (2009), 1492--1512.
%
\bibitem{Struwe} 
M. Struwe. 
{\em Variational Methods},
Springer-Verlag, Berlin Heidelberg, 2008.
%
\bibitem{Truesdell-Noll65} 
C. Truesdell, W. Noll.
{\em The Nonlinear Field Theories},
Handbuch der Physik, Band III/3, Springer-Verlag, Berlin, 1965.
%
\bibitem{Reese&al08} 
 I.~V.~Vladimirov, M.~P.~Pietryga, S.~Reese.
On the modelling of nonlinear kinematic hardening at finite strains with application to 
springback - Comparison of time integration algorithms,
{\em Int. J. Numer. Meth. Engng.} 75 (2008), 1--28.
%


%
\end{thebibliography}
\end{document}